\documentclass[11pt]{article}
\usepackage{algorithm} 
\usepackage{algorithmic}  
\usepackage[algo2e]{algorithm2e} 
\RestyleAlgo{ruled}
\usepackage{amsthm,amsmath,amssymb}
\usepackage{natbib}
\usepackage{multirow}
\usepackage[pdftex]{graphicx}
\usepackage{subfigure}
\usepackage{makecell}
\usepackage{booktabs}
\usepackage{array}
\usepackage{fullpage}
\usepackage{url} 
\usepackage{bm}
\usepackage{smile}
\usepackage{mathtools}
\usepackage{wrapfig}
\usepackage{lipsum}
\usepackage{mathrsfs}
\usepackage{dsfont}
\usepackage{titling}
\usepackage{epstopdf}
\usepackage{bbm}
\usepackage{relsize}
\usepackage{smile}
\usepackage{natbib}
\usepackage{multirow}
\usepackage{color}
\usepackage{mathtools}
\usepackage{caption}

\usepackage{natbib}
\usepackage{multirow}
\usepackage{amsfonts}

\usepackage[usenames,dvipsnames,svgnames,table]{xcolor}
\usepackage[colorlinks=true,
            linkcolor=blue,
            urlcolor=blue,
            citecolor=blue]{hyperref}

\usepackage{comment}
\usepackage{tikz}

\usepackage{mathtools}

\usepackage{enumitem}

	\makeatletter
	\newcommand{\mypm}{\mathbin{\mathpalette\@mypm\relax}}
	\newcommand{\@mypm}[2]{\ooalign{%
			\raisebox{.1\height}{$#1+$}\cr
			\smash{\raisebox{-.6\height}{$#1-$}}\cr}}
	\makeatother

	\def\beq{\begin{equation}\begin{aligned}[b]}
	
	\def\eeq{\end{aligned}\end{equation}}

\pdfminorversion=4

\begin{document}

\title{Two-stage Hypothesis Tests for Variable Interactions with FDR Control}

\author{Jingyi Duan\thanks{Department of Statistics and Data Science, Cornell University, Ithaca, NY 14850, USA; e-mail: \texttt{jd2222@cornell.edu}.}~~~~Yang Ning\thanks{Department of Statistics and Data Science, Cornell University, Ithaca, NY 14850, USA; e-mail: \texttt{yn265@cornell.edu}.}~~~~Xi Chen\thanks{Department of Technology, Operations, and Statistics, New York University, New York, NY 10012, USA; e-mail: \texttt{xc13@stern.nyu.edu}.}~~~~Yong Chen\thanks{Department of Biostatistics, Epidemiology and Informatics, University of Pennsylvania, Philadelphia, PA 19104, USA; e-mail: \texttt{ychen123@upenn.edu}.}}
\date{\today}

\maketitle

\vspace{-0.3in}

\begin{abstract}
In many scenarios such as genome‐wide association studies where dependences between variables commonly exist, it is often of interest to infer the interaction effects in the model. However, testing pairwise interactions among millions of variables in complex and high-dimensional data suffers from low statistical power and huge computational cost.  To address these challenges, we propose a two-stage testing procedure with false discovery rate (FDR) control, which is known as a less conservative multiple‐testing correction. Theoretically, the difficulty in the FDR control dues to the data dependence among test statistics in two stages, and the fact that the number of hypothesis tests conducted in the second stage depends on the screening result in the first stage. By using the Cramér type moderate deviation technique, we show that our procedure controls FDR at the desired level asymptotically in the generalized linear model (GLM), where the model is allowed to be misspecified. In addition, the asymptotic power of the FDR control procedure is rigorously established.  We demonstrate via comprehensive simulation studies that our two-stage procedure is computationally more efficient than the classical BH procedure, with a comparable or improved statistical power. Finally, we apply the proposed method to a bladder cancer data from dbGaP where the scientific goal is to identify
genetic susceptibility loci for bladder cancer.

\end{abstract}
\noindent {\bf Keyword:} {\small Cramér type moderate deviation, false discovery rate control, high-dimensional inference, pairwise interaction}

\section{Introduction}
 
 
In modern data science, additive models with only main effects are often insufficient to characterize the association between covariates and responses, where the effect of one variable may be dependent upon the value of the others. A natural approach to deal with this challenge is to include interaction effects, which are often treated as parameters of interest in the analysis. 
 For example, in human genetics, gene-environment interactions and gene–gene interactions attract increasing attention, as the single nucleotide polymorphisms (SNPs) discovered so far can only explain a small portion of complex disease heritability \citep{manolio2009finding}. Moreover, researchers also found that the distribution of disease among populations is often caused by the  interactions between many susceptibility genes and environmental exposures \citep{sing2004dynamic}.


Modeling and estimating linear or even nonlinear interaction effects form an important topic in statistics \citep{shujie2015estimation, ma2015semiparametric, li2014fast, liu2016partial, fan2019nonlinear,zhou2019evaluating}. With high-dimensional data, one strand of research is to fit a high-dimensional regression model with main effects and all possible pairwise interactions. For example, \cite{bien2013lasso} proposed to estimate the unknown parameters using lasso under a set of additional convex constraints, corresponding to the hierarchy principle for interactions; see also \cite{yan2017hierarchical} and the references therein. Similarly, \cite{zhao2016analysis} developed a group lasso approach  to jointly estimate the main effects and all possible pairwise interactions. An alternative approach based on the regularized principal Hessian matrix is proposed by \cite{tang2020high}, which directly estimates the interaction parameters. To reduce the computational cost in the lasso based approach,  performing feature selection via an initial screening step has been proposed and developed in a sequence of works \citep{hao2014interaction,fan2015innovated,li2021high,tian2021rase}. While this class of methods enjoys many desired theoretical results (e.g., estimation and variable selection consistency), the lasso/group lasso based methods may become non-practical with very high-dimensional features (e.g., in genetics), and the methods with variable screening usually do not  provide any inferential results, such as p-values, for the interaction parameters.   



 
To tackle with these difficulties, several two-stage multiple testing procedures have been proposed in genetics literature. In the first stage, a screening step based on a variety of test statistics is applied, which is similar to the screening step in the variable screening literature. The variables that pass the first stage are further examined for the interaction effects. For example, to test for the gene-gene interactions, \cite{kooperberg2008increasing} proposed to test the marginal effect of single genetic variant in the screening stage. \cite{murcray2009gene} and \cite{gauderman2010efficient} further generalized the method to test gene-environment interactions under case-control and case-parent trio studies. In the statistics literature,  \cite{dai2012two} is the first one that rigorously investigated the statistical properties of such two-stage testing procedures under a variety of settings, including generalized linear models (GLMs), Cox models, and case-control study. In particular, they proposed a novel two-stage method to control the family-wise error rate (FWER). To the best of our knowledge, all the existing two-stage testing procedures are tailored to the control of FWER. It is well known that the control of FWER in multiple testing problems tends to be very conservative and may suffer from low statistical power. The false discovery rate (FDR) control has been commonly used in practice to enhance the power of the testing procedures \citep{benjamini1995controlling}. However, how to control FDR in the two-stage procedures remains an open problem. 

The literature on the FDR control is vast. To name a few examples, the classical BH procedure provides a valid FDR control when the $p$-values are independent \citep{benjamini1995controlling}, and is generalized to handle the positive regression dependency among $p$-values \citep{benjamini2001control}. When the dependence among test statistics is sparse and weak, \cite{liu2013gaussian} proposed to apply the Cramér type moderate deviation technique to establish the FDR control in Gaussian graphical models; see also \cite{xia2019matrix,ye2021paired}. However, the existing methods and technique in the aforementioned work cannot be directly applied to our two-stage testing problem due to the following two reasons. First, the same dataset is used to construct hypothesis tests in both two stages, which implies the dependence among $p$-values. Second, the number of hypotheses conducted in the second stage depends on the output from the first stage and therefore is data dependent. Theoretically, quantifying the effect of dependence structure of $p$-values on the FDR procedure, and handling the extra randomness from the tests in the screening stage are the main challenges.

In our article, we propose a FDR control procedure for the two-stage testing problem in the GLM context. To make the framework more flexible, we allow the GLM to be misspecified. The two-stage testing framework is similar to \cite{dai2012two}, where the first stage is used to screen out the variables with weak marginal effects and the hypothesis tests for interactions are conducted for the remaining variables in the second stage. The main novelty of this work is how to control FDR at the desired nominal level and justify the validity of the proposed FDR procedure. 
In particular, the dependence among the test statistics in both two stages as well as the randomness from the tests in stage 1 are carefully considered in our method. In addition, since our FDR control procedure relies on the asymptotic normality of the test statistics, we also need to conduct a more refined analysis to quantify the convergence rate of the Gaussian approximation. Finally, using the Cramér type moderate deviation technique, we show that our procedure controls the false discovery proportion (FDP) and therefore FDR asymptotically. The asymptotic power of the FDR control procedure is also rigorously established. One interesting result is that, to attain the optimal power, the proposed method may require a more relaxed signal strength condition than the BH procedure, which is applied to test all possible pairwise interactions. In addition to the theoretical guarantees, our numerical results show that the proposed method can control FDR at the desired level, and is computationally more efficient than the BH procedure, without suffering from much loss of power. 





The rest of this paper is organized as follows. The two-stage testing and FDR control procedures are proposed in Section \ref{methodology}. The theoretical guarantees are provided in Section \ref{theory}. The simulation and real data applications are considered in Sections \ref{simulation} and \ref{data}, respectively. The technical details and proofs are deferred to the Appendix.


\section{Methodology}\label{methodology}

\subsection{Problem Setup}
Assume that we observe $n$ i.i.d. samples $\left(X_{1}, Y_{1}\right), \ldots,\left(X_n, Y_{n}\right)$, where $X_i=(X_{i1},...,X_{ip}) \in \mathbb{R}^{p}$ is a $p$ dimensional covariate vector and $Y_i \in \mathbb{R}$ is the response variable. In this paper, we allow $p$ to be much larger than the sample size $n$. In the high-dimensional setting, directly modeling the conditional distribution of $Y_i$ given $X_i$ can be difficult, in the presence of nonlinear effect and possibly pairwise (and even multi-way) interactions of $X_i$. Even if such a model for the conditional distribution can be successfully developed, the model would typically include an extremely large number of unknown parameters, which is often difficult to estimate in practice. 

When the goal is to infer the interaction between two variables $X_{ij}$ and $X_{ik}$, instead of fitting a high-dimensional model for the conditional distribution of $Y_i$ given $X_i$, many applied researchers often simply regress $Y_i$ on $X_{ij}$, $X_{ik}$ and their interaction using some working parametric models. Such an approach has been widely used in genome-wide association study (GWAS) to investigate the gene-gene interactions and gene-environment interactions. 
In this paper, we focus on the generalized linear model (GLM) with interactions. Formally, under the GLM, the density function of $Y_i$ given the two variables $X_{ij}$ and $X_{ik}$ is
\begin{equation}\label{glm0}
    f_{\beta^{jk}}(y|x_j, x_k) = \exp \left\{\frac{y \theta_{jk} -b\left(\theta_{jk}\right)}{\phi}+c(y,\phi)\right\},
\end{equation}
where $b(\cdot)$ and $c(\cdot)$ are known functions. Under the canonical link, we have 
$$
\theta_{jk}=\beta_{0jk} + \beta_{1jk}x_{j} + \beta_{2jk}x_{k} + \beta_{3jk}x_{j}x_{k}.
$$
Denote the parameter $\beta^{jk}=(\beta_{0jk},\beta_{1jk},\beta_{2jk},\beta_{3jk})^T\in\RR^4$. We further assume the dispersion parameter $\phi$ is known. The framework can be easily extended to deal with unknown dispersion parameters. For example, we consider linear regression with unknown noise variance in our simulation studies. For notational simplicity, we set $\phi=1$ throughout the paper. While a GLM is assumed in (\ref{glm0}), we do not assume it is correctly specified. Let us denote the true conditional density of $Y_i$ given $X_i$ by $p(y|x)$, which may not follow the GLM in (\ref{glm0}). The least false value of $\beta^{jk}$ is defined as the one that minimizes the KL-divergence between $p(y|x)$ and $f_{\beta^{jk}}(y|x_j, x_k)$,
\begin{equation}\label{least_false}
\beta^{jk}_0 = \argmin\mathrm{E}\Big[\log \frac{p(y|x)}{f_{\beta^{jk}}(y|x_j, x_k)}\Big],
\end{equation}
where the expectation is evaluated at the true conditional distribution $p(y|x)$. 

Under the misspecified GLM (\ref{glm0}), we are interested in testing the existence of the interaction effect, i.e.,  
\begin{equation}\label{h0}
H_{0jk}: \beta_{3jk}=0, ~~\textrm{versus}~~H_{1jk}: \beta_{3jk}\neq 0,
\end{equation}
where $1\leq j < k \leq p$. To account for the multiplicity of the hypothesis tests, we aim to control FDR in this process.

\subsection{Two-stage testing procedure with FDR control}

To test the multiple hypotheses (\ref{h0}) with FDR control, one standard approach is to apply the Benjamini–Hochberg (BH) procedure. However, such a procedure is computationally expensive in application with very large $p$ (such as in GWAS), as one has to conduct hypothesis testing $p(p-1)/2$ times. To reduce the computational burden, \cite{dai2012two} proposed to test the hypotheses (\ref{h0}) using a two-stage procedure and controled the family-wise error rate (FWER). In this section, we focus on the FDR control, which is known to be less conservative than the FWER control. 

In stage 1, we first test the main effect of each variable by regressing $Y_i$ on $X_{ij}$ for $1\leq j\leq p$. That is we impose the following working GLM for $Y_i$ given $X_{ij}$ 
\begin{equation}\label{glm1}
    f_{\beta^{j}}(y|x_j) = \exp \left\{\frac{y (\beta_{0j}+\beta_{1j}x_j) -b\left(\beta_{0j}+\beta_{1j}x_j\right)}{\phi}+c(y,\phi)\right\},
\end{equation}
where $\beta^{j}=(\beta_{0j},\beta_{1j})^T$. Similarly, the above model can be misspecified, and the least false value of $\beta^{j}$ can be defined in the same way as (\ref{least_false}). In this stage, we aim to test the hypothesis 
\begin{equation}\label{h1}
H_{0j}: \beta_{1j}=0, ~~\textrm{versus}~~H_{1j}: \beta_{1j}\neq 0.
\end{equation}
Let $\hat\beta^{j}$ be the maximum likelihood estimator (MLE) under the model (\ref{glm1}). Since the model can be misspecified, the following sandwich estimate of the asymptotic variance of $\hat\beta^{j}$ is used,
\begin{align}\label{eq_var_betaj}
    \begin{split}
 \hat{\mathrm{\textbf{cov}}}(\hat{\beta}^{j}) 
 \coloneqq 
 \left[ \frac{1}{n}\sum\limits_{i=1}^nb^{\prime\prime}\left(\emph{\textbf{X}}_{ij}^{\mathrm{s1}}\hat{\beta}^{j}\right)(\emph{\textbf{X}}_{ij}^{\mathrm{s1}})^T\emph{\textbf{X}}_{ij}^{\mathrm{s1}}\right]^{-1}
\hat{\mathrm{\textbf{cov}}}(\psi_{\hat{\beta}^{j}})
  \left[ \frac{1}{n}\sum\limits_{i=1}^nb^{\prime\prime}\left(\emph{\textbf{X}}_{ij}^{\mathrm{s1}}\hat{\beta}^{j}\right)(\emph{\textbf{X}}_{ij}^{\mathrm{s1}})^T\emph{\textbf{X}}_{ij}^{\mathrm{s1}}\right]^{-1}.
    \end{split}
\end{align}
where $\bX_{ij}^{\mathrm{s1}}=(1, X_{ij})$ is the covariate vector at stage 1, and $b''(\cdot)$ is the second derivative of $b(\cdot)$. Here, $\hat{\mathrm{\textbf{cov}}}(\psi_{\beta})$ is the sample covariance matrix of the score function $\psi_{\beta}$, 
$$
\hat{\mathrm{\textbf{cov}}}(\psi_{\beta})=\frac{1}{n}\sum\limits_{i=1}^n\left\{Y_{i}-b^{\prime}\left(\bX_{ij}^{\mathrm{s1}} \beta^j\right)\right\}^2(\emph{\textbf{X}}_{ij}^{\mathrm{s1}})^T\bX_{ij}^{\mathrm{s1}}. 
$$
The Wald test statistic for $H_{0j}$ is given by 
 \begin{equation}\label{eq_tj}
 \hat{T}_j \coloneqq \frac{\sqrt{n}\hat{\beta}^{j}_{(2)}}{\sqrt{\hat{\mathrm{\textbf{cov}}}(\hat{\beta}^{j})_{(2,2)}}},
 \end{equation}
where $\hat{\beta}^{j}_{(2)}$ is the second component of $\hat\beta^{j}$ and $\hat{\mathrm{\textbf{cov}}}(\hat{\beta}^{j})_{(2,2)}$ is the $(2,2)$th component of the estimated covariance matrix $\hat{\mathrm{\textbf{cov}}}(\hat{\beta}^{j})$. We reject the null hypothesis $H_{0j}$ in (\ref{h1}) if and only if $|\hat T_j|\geq \alpha$ for some tuning parameter $\alpha>0$. Intuitively, the aim of this stage is to reduce the number of variables in the followup analysis by screening out those with relatively weak main effects. 

In stage 2, we construct the test statistics for the hypothesis of interest $H_{0jk}$ in (\ref{h0}) if and only if both $H_{0j}$ and $H_{0k}$ are rejected in stage 1. In other words, if either $X_{ij}$ or $X_{ik}$ has weak main effect, we will not test their interaction and directly accept the null hypothesis $H_{0jk}$. Throughout the paper, when $H_{0j}$ is rejected, we can equivalently say that the variable $X_{ij}$ passes the test in stage 1, which will be further considered in stage 2. The rationale of this step is inspired from the so-called hierarchy principle, that is if the model contains the interaction of $X_{ij}$ and $X_{ik}$, then both main effects should be included. In addition, if we increase the value of $\alpha$ in stage 1, there are fewer variables whose interactions need to be tested in stage 2, so that the number of tests conducted in stage 2 is significantly reduced and therefore the computation is more efficient. We emphasize that while the two-stage testing procedure follows the hierarchy principle, the theoretical justification of the FDR control shown in the next section, however, does not assume any hierarchical structure between the main effect and the interactions. 

To construct the test, let $\hat\beta^{jk}$ be the maximum likelihood estimator (MLE) under the model (\ref{glm0}). The Wald test statistic for $H_{0jk}$ is 
  \begin{equation}\label{eq_tjk}
 \hat{T}_{jk} \coloneqq \frac{\sqrt{n}\hat{\beta}^{jk}_{(4)}}{\sqrt{\hat{\mathrm{\textbf{cov}}}(\hat{\beta}^{jk})_{(4,4)}}}, 
  \end{equation}
where the estimate of the asymptotic variance $\hat{\mathrm{\textbf{cov}}}(\hat{\beta}^{jk})$ is defined in the same way as (\ref{eq_var_betaj}) with $\hat{\beta}^{j}$ replaced by $\hat{\beta}^{jk}$ and $\bX_{ij}^{\mathrm{s1}}$ replaced by the covariate vector in stage 2, $\bX_{ijk}^{\mathrm{s2}}=(1, X_{ij}, X_{ik}, X_{ij}X_{ik})$. We reject the null hypothesis $H_{0jk}$ if and only if $H_{0j}$ and $H_{0k}$ are rejected in stage 1 and $|\hat T_{jk}|\geq t$ for some $t$ to be chosen. Apparently, the error of the test depends on the choice of $t$. To control the FDR at a given level $\eta>0$, we propose the following procedure in Algorithm \ref{alg_fdr}.

 \begin{algorithm}
 	\caption{Two-stage FDR control algorithm}
 	\begin{algorithmic}
 		\REQUIRE the desired FDR level $\eta>0$, and tuning parameter $\alpha\geq 0$
\begin{enumerate}
    \item Calculate the test statistics $\hat T_j$ in (\ref{eq_tj}) for any $1\leq j\leq p$.
    \item  Calculate the test statistics $\hat T_{jk}$ in (\ref{eq_tjk}), when $|\hat T_j|\geq \alpha$ and $|\hat T_k|\geq \alpha$, where  $\alpha$ is the tuning parameter. 
    \item Calculate the cutoff point for the test statistic $\hat{T}_{jk}$ as
 \begin{equation} \label{def1}
\hat{t}=\inf \left\{
0 \leq \emph{t} \leq \sqrt{ 2\log p}: \frac{G(t)\sum_{1 \leq j < k \leq p}\mathbbm{1}\left\{|\hat{T}_{j}| \geq \alpha,|\hat{T}_{k}| \geq \alpha\right\}}{\max \left(\sum_{\{1\leq j<k\leq p: |\hat{T}_{j}| \geq \alpha,|\hat{T}_{k}| \geq \alpha\}}\mathbbm{1}\left\{|\hat{T}_{jk}| \geq t\right\},1 \right)}\leq \eta
\right\},
\end{equation} 
where $G(t)=2-2 \Phi(t)$, $\Phi(t)=P(N(0,1)\leq t)$ is the c.d.f. of a standard Gaussian distribution and $\eta$ is the desired FDR level. If $\hat{t}$ in (\ref{def1}) does not exist, then let $\hat{t}=\sqrt{2\log p}$. 
\item For $1\leq j<k\leq p$ with $|\hat{T}_{j}| \geq \alpha,|\hat{T}_{k}| \geq \alpha$, we reject $\emph{H}_{0jk}$ if $|\widehat{T}_{jk}|> \hat{t}$. 
\end{enumerate}
 	\end{algorithmic}\label{alg_fdr}
 \end{algorithm}
 
In the definition of $\hat t$, to avoid the case where no hypothesis $H_{0jk}$ is rejected, we follow the convention and take the maximum of the number of rejected hypotheses and 1.  We note that, as a special case, if we set $\alpha=0$, all the hypotheses in stage 1 are rejected, i.e., all variables pass the stage 1. In this case, $\hat t$ in (\ref{def1}) reduces to 
$$
\hat{t}=\inf \left\{
0 \leq \emph{t} \leq \sqrt{ 2\log p}: \frac{G(t)(p^2-p)/2}{\max \left(\sum_{1 \leq j < k \leq p}\mathbbm{1}\{|\hat{T}_{jk}| \geq t\},1 \right)}\leq \eta
\right\},
$$
which is the cutoff from the classical BH procedure, applied to the multiple testing problem (\ref{h0}) for all $1\leq j<k\leq p$ with p-values obtained from the limiting distributions of the test statistics $\hat{T}_{jk}$. Thus, a key difference between the proposed  Algorithm \ref{alg_fdr} and the classical BH procedure is that the number of hypothesis tests conducted in our method is data dependent. This makes the analysis of our Algorithm \ref{alg_fdr} more complicated than the classical BH procedure.

\section{Theory}\label{theory}

\subsection{Assumptions}
To approximate the test statistics $\hat T_j$ and $\hat T_{jk}$, we introduce the following notations. Recall that $\beta^j=(\beta_{0j},\beta_{1j})^T$ and $\bX_{ij}^{\mathrm{s1}}=(1, X_{ij})$. In the first stage, the score function is
$$
\Psi_{n}(\beta^j,\bX_{j}^{\mathrm{s1}}, Y)=\frac{1}{n} \sum_{i=1}^{n}\left\{Y_{i}-b^{\prime}\left(\bX_{ij}^{\mathrm{s1}}\beta^j\right)\right\} \bX_{ij}^{\mathrm{s1}}.
$$
We can approximate the test statistic $\hat T_j$ by 
\begin{equation}\label{eq_Uj}
  U_j=n^{-1/2}\sum\limits_{i=1}^nU_{ij},   
\end{equation}
 where
 \[
 U_{ij} = \frac{\left(-\left[\mathrm{E}\left(b^{\prime\prime}(\emph{\textbf{X}}_{ij}^{\mathrm{s1}}\beta_0^j)(\emph{\textbf{X}}_{ij}^{\mathrm{s1}})^{\otimes 2}\right)
 \right]^{-1}\left\{Y_{i}-b^{\prime}\left(\emph{\textbf{X}}_{ij}^{\mathrm{s1}}\beta_0^j\right)\right\} \cdot (\emph{\textbf{X}}_{ij}^{\mathrm{s1}})^T+\beta_0^j\right)_{(2)}}{\sqrt{\mathrm{\textbf{cov}}( u({\beta_0^{j}},\emph{\textbf{X}}_{j}^{\mathrm{s1}},Y))_{(2,2)}}} ,
 \]
and $\mathrm{\textbf{cov}}(u({\beta_0^{j}},\emph{\textbf{X}}_{j}^{\mathrm{s1}},Y))$ is the covariance matrix of $u({\beta_0^{j}},\emph{\textbf{X}}_{j}^{\mathrm{s1}},Y)$ given by
\[
 u(\beta^j,\bX_{j}^{\mathrm{s1}}, Y) = -\left[\mathrm{E}\left(b^{\prime\prime}(\bX_{ij}^{\mathrm{s1}}\beta^j)(\bX_{ij}^{\mathrm{s1}})^{\otimes 2}\right)\right]^{-1}n^{1/2}\Psi_{n}(\beta^j,\bX_{j}^{\mathrm{s1}}, Y).
 \]
 
Recall that $\beta^{jk}=(\beta_{0jk},\beta_{1jk},\beta_{2jk},\beta_{3jk})^T$ and denote the covariate vector in stage 2 by $\bX_{ijk}^{\mathrm{s2}}=(1, X_{ij}, X_{ik}, X_{ij}X_{ik})$. Similarly, in stage 2, we introduce the notation
\[
 U_{jk}=n^{-1/2}\sum\limits_{i=1}^nU_{ijk}
 ,
 \]
 where
 \[
  U_{ijk} = \frac{\left(-\left[\mathrm{E}\left(b^{\prime\prime}(\emph{\textbf{X}}_{ijk}^{\mathrm{s2}}\beta_0^{jk})(\emph{\textbf{X}}_{ijk}^{\mathrm{s2}})^{\otimes 2}\right)\right]^{-1}\left\{Y_{i}-b^{\prime}\left(\emph{\textbf{X}}_{ijk}^{\mathrm{s2}} \beta_0^{jk}\right)\right\} \cdot (\emph{\textbf{X}}_{ijk}^{\mathrm{s2}})^T+\beta_0^{jk}\right)_{(4)}}{\sqrt{\mathrm{\textbf{cov}}( u({\beta_0^{jk}},\emph{\textbf{X}}_{jk}^{\mathrm{s2}},Y))_{(4,4)}}},
 \]
and $u({\beta_0^{jk}},\emph{\textbf{X}}_{jk}^{\mathrm{s2}},Y)$ is defined in a similar way. 

Let $H_0$ denote the collection of null hypotheses, i.e., $H_0=\{1\leq j<k\leq p: (\beta^{jk}_0)_{(4)}=0\}$. For any $(j,k),(m,l) \in \emph{H}_0$, denote 
\begin{equation}\label{eq_tildeU}
     \tilde{U}_{ijkml} = (U_{ijk}, S_{ij}, S_{ik}, U_{iml}, S_{im}, S_{il})\in\RR^6,
\end{equation}
where $S_{ij} = U_{ij} - \mathrm{E}(U_{ij})$ is the centered version of $U_{ij}$. Finally, denote the residuals by
\[
\epsilon_{ij} \coloneqq Y_i - b^{\prime}(\emph{\textbf{X}}_{ij}^{\mathrm{s1}}\beta_0^j),~~\textrm{and}~~
\epsilon_{ijk} \coloneqq Y_i - b^{\prime}(\emph{\textbf{X}}_{ijk}^{\mathrm{s2}}\beta_0^{jk}).
\]

Throughout the paper, we use the notation $a\vee b=\max(a,b)$, and $a_{n,p}=\Omega(b_{n,p})$ if there exists a constant $C>0$ such that $\lim \inf_{n,p\rightarrow\infty} a_{n,p}/b_{n,p} \geq C$. In the paper, we consider the asymptotic regime $p, n\rightarrow\infty$.

\begin{assumption}\label{allasp}
We make the following assumptions. 
\begin{enumerate}[label=\textbf{A}\textbf{\arabic*}]

\item\label{aspA1} There exists a constant $K$ such that
\[
\max\limits_{1\leq j \leq p}\max\limits_{1 \leq i \leq n}|X_{ij}|\leq K.
\]
\item\label{aspA2} 
Suppose for some constant $r>0$, $p \leq n^r$. For some constant $\sigma_1>0$, we have
\[
\max\limits_{1\leq j \leq p}\mathrm{E}|\epsilon_{ij}|^{4\vee (2r+2+\epsilon)}\leq \sigma_1^2,
\]
where $\epsilon>0$ is an arbitrarily small constant. Assume the same condition holds for $\epsilon_{ijk}$ for any $1 \leq j <k \leq p$. 


\item\label{aspA3} 
Suppose there exist some positive constants $K_0, K_1$, $C_b$, $C_{\tilde{b}}$, such that
\begin{equation}\label{eq_A3}
 \max\limits_{1\leq j \leq p}\max\limits_{1\leq i \leq n}|\emph{\textbf{X}}_{ij}^{\mathrm{s1}}\beta_0^j| \leq K_0, \ 
  \max\limits_{1\leq j <k \leq p}\max\limits_{1\leq i \leq n}|\emph{\textbf{X}}_{ijk}^{\mathrm{s2}}\beta_0^{jk}| \leq K_0,
\end{equation}
and
\[
\min\limits_{1\leq j \leq p }\mathrm{\textbf{cov}}( u({\beta_0^{j}},\emph{\textbf{X}}_{j}^{\mathrm{s1}},Y))_{(2,2)} > K_1,~~
\min\limits_{1\leq j <k \leq p}
\mathrm{\textbf{cov}}( u({\beta_0^{jk}},\emph{\textbf{X}}_{jk}^{\mathrm{s2}},Y))_{(4,4)}
>K_1.
\]
For all $|z|\leq K^2 + K_0$, we assume the second and third derivatives of $b(\cdot)$, denoted by $b^{\prime\prime}$ and $b^{\prime\prime\prime}$, exist and satisfy
\begin{equation}\label{eq_A3_2}
 1/C_b\leq b^{\prime\prime}(z)\leq C_b,\  | b^{\prime\prime\prime}(z)|\leq C_{\tilde{b}}.
\end{equation}

\item\label{aspA4} For any $1 \leq j < k \leq p$, we assume for  some constant $\tau > 0$, 
\[
  \lambda_{\mathrm{min}}\Big(\frac{1}{n}\sum_{i=1}^n(\emph{\textbf{X}}_{ij}^{s1})^{\otimes 2}\Big) \geq \tau, ~~~ 
  \lambda_{\mathrm{min}}\Big(\frac{1}{n}\sum_{i=1}^n(\emph{\textbf{X}}_{ijk}^{s2})^{\otimes 2}\Big) \geq \tau.
\]

\item \label{aspA5} There exist some constants $0 \leq \delta <1$,$\gamma > 0$, $\kappa >0, C>0$ such that,
\[
\max\limits_{1\leq j <k \leq p}\max\limits_{1\leq m <l \leq p}|\mathrm{\textbf{cov}}(U_{ijk}, U_{iml})| \leq \delta,
\]
 \begin{equation}
 \label{asp2}
    \begin{split}
         \mathrm{Card}\Big\{\{(j,k),(m,l)\}:(j,k),(m,l)\in \emph{H}_0,\left\|\mathrm{\textbf{cov}}(\tilde{U}_{ijkml})- \mathrm{\textbf{I}}\right\|_{\infty} > C(\log p)^{-2-\gamma}\Big\}= O(p^{4-\kappa}),
    \end{split}
 \end{equation}
 where $\tilde{U}_{ijkml}$ is defined in (\ref{eq_tildeU}) and $ \mathrm{Card}(A)$ is the cardinality of a set $A$. 
 \item \label{aspA6} Given the constants $C,\gamma,\delta,\kappa$ defined in \ref{aspA5}, denote 
 $$\tilde{H}_{01}=\big\{(j,k)\in \emph{H}_0:  |\mathrm{\textbf{cov}}(U_{ij}, U_{ik})| \leq C(\log p)^{-2-\gamma}\big\}.
 $$ 
 Assume that $\kappa> 2\delta/(1+\delta)$. 
The tuning parameter $\alpha$ in stage 1 belongs to $[0,\sqrt{2\log p}]$ and satisfies  
\begin{equation}
\label{eq_G4}
        \sum\limits_{(j,k)\in \tilde{H}_{01}}G_j(\alpha)G_k(\alpha) = \Omega(p^{\xi}),
\end{equation}
for some constant $\xi$ with 
\[
\max \left\{\frac{3}{2}+\frac{\delta}{1+\delta}, 2-\frac{\kappa}{2}+\frac{\delta}{1+\delta}\right\} <\xi,
\]
where $G_j(\alpha) \coloneqq \mathrm{P}\left(|\mathcal{N}(0,1)+\sqrt{n}\mathrm{E}(U_{ij})|\geq \alpha\right)$ is the tail probability of a normal distribution with mean $\sqrt{n}\mathrm{E}(U_{ij})$ and variance 1.
\end{enumerate}
\end{assumption}

Assumption~\ref{aspA1} is often used to simplify the analysis of the likelihood function and more generally, the quasi-likelihood \citep{van2012quasi}. Assumption~\ref{aspA2} requires some finite moments of the residuals. These two assumptions together enable us to apply the Nemirovski moment inequality to derive sharp bounds for the MLE in the misspecified GLM \citep{buhlmann2011statistics}. The boundedness of $X_{ij}$ in \ref{aspA1} can be relaxed to sub-Gaussianity, if one is willing to impose the sub-Gaussian condition on the residuals in \ref{aspA2}. As shown in \ref{aspA2}, when $r>1$, we allow $p$ to be much larger than $n$ and the price to pay is that we need stronger moment assumptions on the residuals. 

In Assumption \ref{aspA3}, $\max_{1\leq j \leq p}\max_{1\leq i \leq n}|\emph{\textbf{X}}_{ij}^{\mathrm{s1}}\beta_0^j|\leq C'K$ holds, if each entries of $\beta^j_0$ are bounded by a constant $C'$ in absolute value together with Assumption~\ref{aspA1}. In addition, we require the asymptotic variances of the MLEs are non-degenerate. Finally, (\ref{eq_A3_2}) on the derivatives of $b(\cdot)$ is satisfied for commonly used GLMs. Assumption \ref{aspA4} implies that the Hessian matrix in GLMs is well-conditioned. 


As for Assumption~\ref{aspA5}, we first require that the test statistics $U_{ijk}$ and $U_{iml}$ are not perfectly correlated, which is reasonable if the design matrix is not collinear. A very crucial condition in our analysis is  (\ref{asp2}). It states that for all quadruplets $\{j,k,m,l\}$, where $1 \leq j <k \leq p, 1 \leq m < l \leq p$ in $H_0$, there are not too many combinations in which (1)  $U_{ijk}$ and $U_{iml}$ have strong correlation or (2) $S_{ij}$ and $U_{iml}$ have strong correlation or (3) $S_{ij}$ and $S_{im}$ have strong correlation. Note that \cite{dai2012two} proved that $\mathrm{\textbf{cov}}(S_{ij}, U_{ijk})=0$, and thus it suffices to consider the above three cases.  Since there are  at most $[p(p-1)/2]^2=O(p^4)$ quadruplets of $\{j,k,m,l\}$ in total, this assumption simply requires that the number of quadruplets with strong correlation is of a smaller order. We expect that in many applications such as GWAS, the features are often weakly correlated such that the condition (\ref{asp2}) may hold. 
To conclude, Assumption \ref{aspA5} imposes sparsity constraints on the correlation of the test statistics, and is the key technical condition to apply the Cramér type moderate deviation technique in the analysis.
 
Assumption \ref{aspA6} characterizes the interplay among the choice of $\alpha$, the size of $\tilde{H}_{01}$ and the signal strength of the test in stage 1 that is $\mathrm{E}(U_{ij})$. To see this, when $U_{ij}$ and $U_{ik}$ have weak correlation, we can show that $\sum_{(j,k)\in \tilde{H}_{01}}G_j(\alpha)G_k(\alpha)$ is the approximation of $\mathrm{E}(\sum_{(j,k)\in \tilde{H}_{01}}\mathbbm{1} \{|\widehat{T}_{j}| \geq \alpha,|\widehat{T}_{k}| \geq \alpha\})$, i.e., the expected number of pairs within the set $\tilde{H}_{01}$ that pass the test in stage 1, whose value depends on the choice of $\alpha$, $\tilde{H}_{01}$ and also the limiting distribution of $(\hat T_j,\hat T_k)$. This assumption essentially requires that $\alpha$ cannot be too large. Otherwise, the number of hypothesis tests considered in stage 2 could be very small, such that the FDR algorithm can only identify very few (or even no) signals. In practice, we expect that a large number of variables may pass the test in stage 1, which makes this assumption reasonable. In theory, under mild conditions on the size of $\tilde{H}_{01}$ and the signal strength $\mathrm{E}(U_{ij})$, we can show that (\ref{eq_G4}) holds for $\alpha = \sqrt{\alpha_1\log p}$ with some small constant $\alpha_1>0$. The detailed derivation is deferred to Appendix \ref{app_assumption}. 


\subsection{Theoretical guarantees on FDR control}
Recall that to control FDR, we propose to use $\hat t$ defined in (\ref{def1}) in Algorithm \ref{alg_fdr} as the cutoff point for the test statistics in stage 2. Our main theorem in this section shows that the proposed procedure in Algorithm \ref{alg_fdr} can control the false discovery proportion (FDP) and also FDR asymptotically. For our two-stage algorithm, we formally define FDP and FDR as
\begin{equation}\label{eq_FDP}
\mathrm{FDP}=\frac{\sum_{\{(j,k)\in \emph{H}_0: |\widehat{T}_{j}| \geq \alpha,|\widehat{T}_{k}| \geq \alpha\}}\mathbbm{1}\left\{|\widehat{T}_{jk}| \geq \hat{t}\right\}}{\max \left(\sum_{\{1 \leq j < k \leq p: |\widehat{T}_{j}| \geq \alpha,|\widehat{T}_{k}| \geq \alpha\}}\mathbbm{1}\left\{|\widehat{T}_{jk}| \geq \hat{t}\right\},1 \right)},~~~\mathrm{FDR}=\mathrm{E}(\mathrm{FDP}),
\end{equation}
where the numerator of the FDP corresponds to the total number of rejected null hypotheses and the denominator is the total number of rejected hypotheses. 

\begin{theorem}[FDP Control]\label{them1}
Under Assumption~\ref{allasp}, we have
 \begin{equation}\label{eq_them1_0}
 \frac{\mathrm{FDP}}{\eta N/M}\rightarrow 1,
 \end{equation}
in probability  as  $(n,p) \rightarrow \infty$, where $\eta$ is the desired FDR level,
\begin{equation}\label{eq_them1_1}
N=\sum_{(j,k)\in\emph{H}_0} \mathbbm{1} \left\{|\widehat{T}_{j}| \geq \alpha,|\widehat{T}_{k}| \geq \alpha \right\}, ~~\textrm{and}~~M =\sum_{1 \leq j < k \leq p}\mathbbm{1}\left\{|\hat{T}_{j}| \geq \alpha,|\hat{T}_{k}| \geq \alpha\right\}.
\end{equation}
\end{theorem}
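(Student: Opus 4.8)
The plan is to rewrite the FDP at the data-driven cutoff in terms of empirical exceedance counts at a generic threshold and then exploit the defining relation of $\hat t$ in (\ref{def1}). For $0\le t\le\sqrt{2\log p}$ set
\[
V(t)=\sum_{(j,k)\in H_0}\mathbbm 1\{|\hat T_j|\ge\alpha,\,|\hat T_k|\ge\alpha,\,|\hat T_{jk}|\ge t\},\qquad
R(t)=\sum_{1\le j<k\le p}\mathbbm 1\{|\hat T_j|\ge\alpha,\,|\hat T_k|\ge\alpha,\,|\hat T_{jk}|\ge t\},
\]
so that $\mathrm{FDP}=V(\hat t)/\max(R(\hat t),1)$ while $N,M$ in (\ref{eq_them1_1}) are the stage-one counts appearing next to $G(t)$ in (\ref{def1}). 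With probability tending to one the infimum defining $\hat t$ is attained in $(0,\sqrt{2\log p})$, and since $G$ is continuous while $R(\cdot)$ decreases by integer steps, at that point the boundary relation $G(\hat t)M/\max(R(\hat t),1)=\eta(1+o(1))$ holds. Substituting $\max(R(\hat t),1)=\eta^{-1}G(\hat t)M(1+o(1))$ gives $\mathrm{FDP}=(\eta N/M)\,\{V(\hat t)/(G(\hat t)N)\}(1+o(1))$, so the theorem reduces to the single uniform estimate
\[
\sup_{0\le t\le\sqrt{2\log p}}\Big|\frac{V(t)}{G(t)\,N}-1\Big|\xrightarrow{P}0.
\]

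First I would pass from the Wald statistics to their linearizations $U_j,U_{jk}$. Using the sharp MLE bounds that Assumptions~\ref{aspA1}--\ref{aspA4} afford through the Nemirovski moment inequality, I would show $\max_j|\hat T_j-U_j|$ and $\max_{j<k}|\hat T_{jk}-U_{jk}|$ are $o_P((\log p)^{-1/2})$; since every relevant threshold is $O(\sqrt{\log p})$ and $G(t\pm\epsilon_n)/G(t)\to1$ whenever $\epsilon_n\sqrt{\log p}\to0$, this permits replacing $\hat T$ by $U$ inside all indicators at the cost of a negligible shift of $t$. Next I would compute $\mathrm E[V(t)]$. For a fixed null pair the identity $\mathrm{cov}(S_{ij},U_{ijk})=0$ from \cite{dai2012two} (and its symmetric counterpart in $k$), together with the asymptotic joint normality of $(U_j,U_k,U_{jk})$, renders the stage-one acceptance event asymptotically independent of $\{|U_{jk}|\ge t\}$; combined with a Cramér-type moderate deviation for the i.i.d.\ sum $U_{jk}$, valid up to $t=\sqrt{2\log p}$ under the moment bound \ref{aspA2}, this yields $\mathrm P(\text{pass stage 1},\,|U_{jk}|\ge t)=G(t)\,\mathrm P(\text{pass stage 1})(1+o(1))$ uniformly in $t$. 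Summing over $H_0$ gives $\mathrm E[V(t)]=G(t)\,\mathrm E[N](1+o(1))$, and a parallel argument gives $N=\mathrm E[N](1+o(1))$.

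The concentration step is where Assumptions~\ref{aspA5}--\ref{aspA6} are used. I would bound $\mathrm{Var}(V(t))$ by splitting the double sum over quadruplets $\{(j,k),(m,l)\}\subset H_0$ into three groups: the weakly correlated ones, on which a bivariate moderate-deviation expansion factorizes the joint exceedance up to $(1+o(1))$ and contributes negligibly; the $O(p^{3})$ quadruplets sharing a common index; and the $O(p^{4-\kappa})$ strongly correlated quadruplets flagged in (\ref{asp2}). For the latter two, the moderate-deviation bound $\mathrm P(|U_{jk}|\ge t,|U_{ml}|\ge t)\asymp G(t)^{2/(1+\delta)}$ for correlation at most $\delta$ controls each covariance, and comparing $p^{3}G(t)^{2/(1+\delta)}$ and $p^{4-\kappa}G(t)^{2/(1+\delta)}$ against $(\mathrm E V(t))^2\asymp G(t)^2(\mathrm E N)^2$ shows the variance is of smaller order precisely when $\mathrm E[N]=\Omega(p^{\xi})$ with $\xi$ exceeding the two thresholds $3/2+\delta/(1+\delta)$ and $2-\kappa/2+\delta/(1+\delta)$ of \ref{aspA6}; the lower bound $\mathrm E[N]=\Omega(p^\xi)$ follows from (\ref{eq_G4}) since $\tilde H_{01}\subset H_0$, and the constraint $\kappa>2\delta/(1+\delta)$ is exactly what makes the required lower bound on $\xi$ compatible with the trivial upper bound $\xi\le2$ coming from $|H_0|\le\binom p2$. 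Chebyshev's inequality then gives pointwise control, and discretizing $t$ over a grid on which consecutive values of $G$ change by a factor $1+o(1)$, together with the monotonicity of $V(t)$ and $G(t)$, upgrades this to the uniform estimate above.

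I expect the main obstacle to be the uniform-in-$t$ concentration as $t\uparrow\sqrt{2\log p}$, where $G(t)N$ is only polynomially large and the relative fluctuations of $V(t)$ are most delicate. There the Cramér moderate-deviation expansions (including the cubic correction governed by $b'''$ via \ref{aspA3}) must be kept sharp to $(1+o(1))$ simultaneously for the single- and double-pair probabilities, the cardinality accounting in (\ref{asp2}) must be matched exactly against the $\xi$-budget of \ref{aspA6}, and the discretization grid must be fine enough to bridge between grid points without inflating the error past $o(G(t)N)$.
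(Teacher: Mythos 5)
Your proposal reproduces the paper's architecture almost exactly — the same reduction of the theorem to the uniform statement $\sup_{0\le t\le\sqrt{2\log p}}|V(t)/(G(t)N)-1|\to_P 0$ via the boundary relation in (\ref{def1}), the same Cram\'er/Zaitsev factorization of the joint stage-1/stage-2 exceedance probabilities, the same quadruplet splitting into shared-index ($O(p^3)$), weakly correlated, and $O(p^{4-\kappa})$ strongly correlated sets with the bound $\mathrm{P}(|U_{jk}|\ge t,|U_{ml}|\ge t)\lesssim G(t)^{2/(1+\delta)}$, the same matching of these against $\mathrm{E}(J)=\Omega(p^\xi)$ and the two thresholds in Assumption \ref{aspA6}, and the same grid discretization. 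However, there is one genuine gap, and it sits precisely at the step the paper identifies as its main technical difficulty: your linearization claim $\max_j|\hat T_j-U_j|=o_P((\log p)^{-1/2})$ is not attainable under Assumption \ref{allasp}. Because Theorem \ref{them1} assumes no hierarchy between main and interaction effects, a null pair $(j,k)\in H_0$ may have $(\beta_0^j)_{(2)}$ equal to a nonzero constant; then $\hat T_j$ has non-centrality $\sqrt{n}\,\mathrm{E}(U_{ij})\asymp\sqrt{n}$, and the $O_p(\sqrt{\log p/n})$ error in the sandwich variance estimate (\ref{eq_var_betaj}), multiplied by this non-centrality, contributes a term of order $|(\beta_0^j)_{(2)}|\sqrt{\log p}$ to $|\hat T_j-U_j|$, which diverges rather than vanishes. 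So the route you propose — absolute uniform closeness of $\hat T_j$ to $U_j$, then an additive shift of thresholds — fails for the stage-1 indicators $\mathbbm{1}\{|\hat T_j|\ge\alpha\}$.

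The paper's fix is Lemma \ref{UTlemma}: a \emph{truncated relative} error bound $\max_j|\hat T_j-U_j|/(|U_j|\vee c)=O_p(\log p/\sqrt n)$, which holds regardless of the size of the non-centrality because the variance-estimation error enters multiplicatively. This relative bound still yields your desired conclusion, since for thresholds $\alpha,t\le\sqrt{2\log p}$ a multiplicative perturbation $t(1+O_p(\log p/\sqrt n))$ is an additive shift of $o_p(1/\sqrt{\log p})$, giving the sandwich (\ref{two-side}); your subsequent steps then go through unchanged. Two smaller points: (i) your per-pair factorization $\mathrm{P}(\text{pass stage 1},|U_{jk}|\ge t)=G(t)\mathrm{P}(\text{pass stage 1})(1+o(1))$ cannot hold for \emph{every} null pair — zero covariance plus asymptotic normality is not enough at moderate-deviation scales when $\mathrm{\textbf{cov}}(S_{ij},S_{ik})$ or $\mathrm{\textbf{cov}}(U_{ijk},S_{ij})$ is large — so, as in the paper's Lemmas \ref{lemma4}, \ref{lemma6} and \ref{lemma7}, it must be restricted to pairs satisfying the covariance condition in (\ref{asp2}), with the remaining pairs absorbed by the cardinality bound; this is repairable within your plan. (ii) Your assertion that the infimum in (\ref{def1}) is attained in $(0,\sqrt{2\log p})$ with probability tending to one is stated without justification; the paper's proof glosses over the same point, so this is not a gap relative to the paper, but it is not free either.
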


From (\ref{eq_them1_0}), we can see that the FDP of the two-stage method can be approximated by a random variable $\eta N/M$, where $N$ denotes the number of pairs in $H_0$ that pass the tests in stage 1 and $M$ is the total number of tests conducted in stage 2. For the standard BH procedure (i.e., taking $\alpha=0$), $N$ and $M$ reduce to $|H_0|$ and $p(p-1)/2$ respectively, which are deterministic, and our Theorem \ref{them1} is consistent with the existing theoretical results on FDR control, see \cite{liu2013gaussian}. 

In addition, since $N\leq M$, (\ref{eq_them1_0}) implies $\mathrm{FDP}\leq \eta$ with probability tending to 1. As $\mathrm{FDP}\leq 1$, we can further show that the FDR is controlled at the desired level, i.e., $\mathrm{FDR}\leq \eta$. In particular, we note the FDR control is valid for a wide range of the tuning parameter $\alpha$ as long as it satisfies (\ref{eq_G4}). We also note that, while the two-stage method is inspired by the hierarchy principle for interactions, we do not need this assumption in  Theorem \ref{them1}.

The main technical tool used in the proof of Theorem \ref{them1} is the Cramér type moderate deviation bound. Unlike the technique originally introduced by \cite{liu2013gaussian}, to deal with the extra randomness induced by the tests in stage 1, we establish the Cramér type moderate deviation bound for the random vector $n^{-1}\sum_{i=1}^n \tilde U_{ijkml}$, where $\tilde U_{ijkml}$ defined in (\ref{eq_tildeU}) consists of the pairs of the test statistics in both stage 1 and stage 2. Such result can be of independent interest. Another technical challenge is to characterize the difference between the test statistic $\hat T_j$ and its linear representation $U_j$ in (\ref{eq_Uj}). Even though the MLE $\hat\beta^j$ is asymptotically linear, once we account for the uncertainty in the estimation of the asymptotic variance of $\hat\beta^j$, we can only have the following result, $\max_{1\leq j\leq p}|\hat T_j-U_j|=O_p(\frac{\log p}{\sqrt{n}}+\max_j|(\beta_0^j)_{(2)}|\sqrt{\log p})$, where the term $O_p(|(\beta_0^j)_{(2)}|\sqrt{\log p})$ comes from the estimation error of the asymptotic variance multiplied by the expectation of $U_j$. Thus, the difference between the test statistic $\hat T_j$ and its linear representation $U_j$ may not converge to 0, when $\max_j|(\beta_0^j)_{(2)}|\gtrsim 1/\sqrt{\log p}$. Since we do not assume any type of hierarchy principle in Theorem \ref{them1}, we may expect $(\beta_0^j)_{(2)}\neq 0$ under the null hypothesis $H_{0jk}$. A more intuitive explanation of this issue is that $\hat T_j$ is no longer a pivotal statistic asymptotically when $(\beta_0^j)_{(2)}\neq 0$. To address this technical issue, our proof is based on a more refined analysis of the truncated relative error $\frac{|\hat{T}_j-U_j|}{|U_j|\vee c}$, where $c>0$ is a small constant. In particular, we show in Lemma \ref{UTlemma} in Appendix \ref{app_main} that $\max_{1\leq j \leq p}\frac{|\hat{T}_j-U_j|}{|U_j|\vee c}= 
 O_p(\frac{\log p}{\sqrt{n}})$, which is a key intermediate step in the proof of Theorem \ref{them1}.

\subsection{Power Analysis}
In this subsection, we investigate the power of our FDR control procedure. Formally, we define the power of our method as
\begin{equation}
\label{power}
    \mathrm{power} = \mathrm{E}\left(\frac{\sum_{\{(j,k)\in \emph{H}_1: |\widehat{T}_{j}| \geq \alpha,|\widehat{T}_{k}| \geq \alpha\}}\mathbbm{1}\left\{|\widehat{T}_{jk}| \geq \hat{t}\right\}}{|\emph{H}_1|}\right),
\end{equation}
where $H_1=\{1\leq j<k\leq p: (\beta^{jk}_0)_{(4)}\neq 0\}$ is the collection of alternative hypotheses,  and $\hat{t}$ is defined in (\ref{def1}). We expect that the power of our method depends on two factors: (1) the number of hypotheses that are rejected in stage 1 and (2) the signal strength, i.e., the value of $|(\beta_0^{jk})_{(4)}|$, in $H_1$. To study the power of our method, we introduce the following notations and assumptions. 

Recall that in (\ref{eq_them1_1}), $M$ and $N$ denote the number of pairs in $H$ and $H_0$ that pass the tests in stage 1. Denote
\begin{equation}\label{eq_a1}
a_1=M-N=\sum_{(j,k)\in \emph{H}_1}\mathbbm{1}\{|\widehat{T}_j| \geq \alpha,|\widehat{T}_k|\geq \alpha\}.
\end{equation}
For any $(j,k) \in \emph{H}_1$, the signal strength satisfies 
\begin{equation}\label{eq_signal}
|(\beta_0^{jk})_{(4)}| \geq \delta \sqrt{\frac{\log p}{n}},~~\textrm{for some}~~ \delta > 0.
\end{equation}

\begin{assumption}\label{asp_power}
The error term $\epsilon_{ijk} = Y_i-b^{\prime}(\emph{\textbf{X}}^{s2}_{ijk}\beta_0^{jk})$ is sub-exponential with some constant $\lambda>0$, i.e.
\begin{equation}\label{eq_subexponential}
\mathrm{E}\left(e^{s \epsilon_{ijk}}\right) \leq e^{s^{2} \lambda^{2} / 2}, \quad \forall|s| \leq \frac{1}{\lambda}
\end{equation}
for any $1\leq j <k \leq p$. Denote $\Sigma_{jk}^* =[\mathrm{E}_{\beta_0^{jk}}(b^{\prime\prime}(\emph{\textbf{X}}_{ijk}^{\mathrm{s2}}\beta_0^{jk})(\emph{\textbf{X}}_{ijk}^{\mathrm{s2}})^T\emph{\textbf{X}}_{ijk}^{\mathrm{s2}})]^{-1}$.  We also assume $|(\Sigma_{jk}^*(\emph{\textbf{X}}_{ijk}^{\mathrm{s2}})^T)_{(4)}|\leq \tilde K$ for some constant $\tilde K$.
\end{assumption}

While the sub-exponential condition on $\epsilon_{ijk}$ is stronger than the moment condition in Assumption \ref{aspA2}, it is commonly used to characterize the tail behavior of the estimator in GLM. The boundedness of $|(\Sigma_{jk}^*(\emph{\textbf{X}}_{ijk}^{\mathrm{s2}})^T)_{(4)}|$ is indeed implied by Assumptions \ref{aspA1} and \ref{aspA4}. For notational simplicity, we use a new constant $\tilde K$ to denote this bound. The following theorem shows the power of the our method.

\begin{theorem}[Asymptotic Power]\label{thm_power}
Assume the Assumptions \ref{aspA1}--\ref{aspA4} and \ref{asp_power} hold and the signal strength satisfies  (\ref{eq_signal}). Denote $c^*=G^{-1}({\eta a_1}/{M})/\sqrt{\log p}$, where $a_1>0$ is defined in (\ref{eq_a1}), $G(t)=2-2 \Phi(t)$ and $\eta<0.5$ is the desired FDR level. When the signal strength satisfies
\begin{equation}\label{eq_thm_power_1}
\delta\geq c^*\{\mathrm{\textbf{cov}}( u({\beta_0^{jk}},\emph{\textbf{X}}_{jk}^{\mathrm{s2}},Y))_{(4,4)}\}^{1/2}+2\lambda \tilde K+\zeta
\end{equation}
for an arbitrarily small constant $\zeta>0$, we have 
\begin{equation}\label{eq_thm_power_2}
\Big|\mathrm{power}-  \mathrm{E}\Big(\frac{a_1}{|\emph{H}_1|}\Big)\Big|\rightarrow 0,
\end{equation}
as $(n,p) \rightarrow \infty$. 
\end{theorem}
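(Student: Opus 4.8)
The plan is to show that, with probability tending to one, every pair $(j,k)\in H_1$ that passes stage 1 also clears the data-driven stage-2 cutoff $\hat t$. Indeed, the numerator in (\ref{power}) counts exactly the true positives surviving stage 1 that satisfy $|\widehat{T}_{jk}|\geq\hat t$, while $a_1$ in (\ref{eq_a1}) counts all true positives surviving stage 1, so
\[
0 \leq \frac{a_1}{|H_1|} - \frac{\sum_{\{(j,k)\in H_1:\,|\widehat{T}_j|\geq\alpha,\,|\widehat{T}_k|\geq\alpha\}}\mathbbm{1}\{|\widehat{T}_{jk}|\geq\hat t\}}{|H_1|} = \frac{1}{|H_1|}\sum_{\{(j,k)\in H_1:\,|\widehat{T}_j|\geq\alpha,\,|\widehat{T}_k|\geq\alpha\}}\mathbbm{1}\{|\widehat{T}_{jk}|<\hat t\}\leq 1 .
\]
Since this difference is nonnegative and bounded by $1$, dominated convergence reduces (\ref{eq_thm_power_2}) to showing the difference converges to $0$ in probability, which in turn follows once I prove $\mathrm{P}(\exists\,(j,k)\in H_1 \text{ with } |\widehat{T}_j|\geq\alpha,\,|\widehat{T}_k|\geq\alpha \text{ and } |\widehat{T}_{jk}|<\hat t)\to 0$.

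\emph{Step 1: a uniform lower bound on $|\widehat{T}_{jk}|$ over $H_1$.} Write $v_{jk}^2 := \mathrm{cov}(u(\beta_0^{jk},\mathbf{X}_{jk}^{\mathrm{s2}},Y))_{(4,4)}$, the quantity appearing in (\ref{eq_thm_power_1}). A Taylor expansion of the stage-2 score equation under Assumptions \ref{aspA1}--\ref{aspA4} gives the linearization $\sqrt{n}(\hat\beta^{jk}-\beta_0^{jk}) = \Sigma_{jk}^*\, n^{-1/2}\sum_{i=1}^n (\mathbf{X}_{ijk}^{\mathrm{s2}})^T\epsilon_{ijk} + r_{jk}$, whose fourth coordinate is $W_{jk} := n^{-1/2}\sum_{i=1}^n (\Sigma_{jk}^*(\mathbf{X}_{ijk}^{\mathrm{s2}})^T)_{(4)}\,\epsilon_{ijk}$ plus a remainder that is $o_p(\sqrt{\log p})$ uniformly in $(j,k)$. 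Hence $\widehat{T}_{jk} = \{\sqrt{n}(\beta_0^{jk})_{(4)} + W_{jk} + o_p(\sqrt{\log p})\}/\{v_{jk}(1+o_p(1))\}$, where the $(1+o_p(1))$ is the uniform relative error of the sandwich variance estimator. The key point is that $W_{jk}$ is a normalized sum of i.i.d.\ mean-zero summands that are sub-exponential with parameter $\lambda\tilde K$, since $|(\Sigma_{jk}^*(\mathbf{X}_{ijk}^{\mathrm{s2}})^T)_{(4)}|\leq\tilde K$ and $\epsilon_{ijk}$ is sub-exponential with parameter $\lambda$ by Assumption \ref{asp_power}. A Chernoff bound gives $\mathrm{P}(|W_{jk}|\geq (2\lambda\tilde K+\zeta')\sqrt{\log p})\leq 2p^{-2-c}$ for some $c>0$ once $\zeta'>0$, so a union bound over the at most $p(p-1)/2$ pairs yields $\max_{1\leq j<k\leq p}|W_{jk}|\leq (2\lambda\tilde K+\zeta')\sqrt{\log p}$ with probability tending to one. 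On this event, for every $(j,k)\in H_1$ the signal bound (\ref{eq_signal}) gives
\[
|\widehat{T}_{jk}| \;\geq\; \frac{\sqrt{n}|(\beta_0^{jk})_{(4)}| - |W_{jk}| - o(\sqrt{\log p})}{v_{jk}(1+o_p(1))} \;\geq\; \frac{(\delta - 2\lambda\tilde K - \zeta'')\sqrt{\log p}}{v_{jk}(1+o_p(1))},
\]
and the condition (\ref{eq_thm_power_1}), namely $\delta - 2\lambda\tilde K \geq c^* v_{jk}+\zeta$, makes the right-hand side at least $c^*\sqrt{\log p}$ once the fixed margin $\zeta-\zeta''$ absorbs the $o_p(1)$. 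Thus $\min_{(j,k)\in H_1}|\widehat{T}_{jk}|\geq c^*\sqrt{\log p}$ with probability tending to one.

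\emph{Step 2: the cutoff satisfies $\hat t \leq c^*\sqrt{\log p}$.} On the event of Step 1, every true positive surviving stage 1 has $|\widehat{T}_{jk}|\geq c^*\sqrt{\log p}$, so at the threshold $t_0 := c^*\sqrt{\log p}$ the number of stage-2 rejections is at least $a_1$. Because $\hat t\leq\sqrt{2\log p}$ by construction, the claim is immediate when $c^*\sqrt{\log p}>\sqrt{2\log p}$; in the remaining case $t_0$ lies in the admissible range of (\ref{def1}), and using the defining identity $G(c^*\sqrt{\log p}) = \eta a_1/M$ together with $a_1\geq 1$,
\[
\frac{G(t_0)\,M}{\max\{\sum_{\{(j,k):\,|\widehat{T}_j|\geq\alpha,\,|\widehat{T}_k|\geq\alpha\}}\mathbbm{1}\{|\widehat{T}_{jk}|\geq t_0\},\,1\}} \;\leq\; \frac{G(t_0)\,M}{a_1} \;=\; \frac{(\eta a_1/M)\,M}{a_1} \;=\; \eta,
\]
so $t_0$ is feasible in (\ref{def1}) and therefore $\hat t\leq t_0 = c^*\sqrt{\log p}$. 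Combining the two steps, on an event of probability tending to one every $(j,k)\in H_1$ passing stage 1 obeys $|\widehat{T}_{jk}|\geq c^*\sqrt{\log p}\geq\hat t$, with strict inequality by the margin in Step 1; hence no surviving true positive is missed, the first display vanishes with probability tending to one, and dominated convergence yields (\ref{eq_thm_power_2}).

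The crux is Step 1: controlling $|\widehat{T}_{jk}|$ simultaneously over all $O(p^2)$ pairs with a constant sharp enough to match $2\lambda\tilde K$. Two features make it delicate. First, exactly as flagged after Theorem \ref{them1}, under $H_1$ the centering $\sqrt{n}(\beta_0^{jk})_{(4)}$ is of order $\sqrt{\log p}$, so the relative error of $\widehat{T}_{jk}$ about its linear proxy $U_{jk}$ need not vanish; for this reason I would not replace $\widehat{T}_{jk}$ by $U_{jk}$ but instead bound the numerator from below and the estimated variance from above, forcing every multiplicative error into the harmless $(1+o_p(1))$ factor and the slack $\zeta$. Second, the moment condition in Assumption \ref{aspA2} is far too weak to survive a union bound over $p^2$ pairs; this is precisely why Assumption \ref{asp_power} strengthens the residuals to sub-exponential, and the constant $2\lambda\tilde K$ in (\ref{eq_thm_power_1}) is nothing but the break-even constant of the corresponding Chernoff bound at threshold $\sqrt{\log p}$, with the arbitrarily small $\zeta$ providing the extra room that makes the union bound vanish. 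A secondary point is that $c^*$, $a_1$ and $M$ are stage-1 measurable and hence random, but since the uniform control of $\max_{j<k}|W_{jk}|$ holds unconditionally and $a_1>0$ is assumed, the comparison with $c^*\sqrt{\log p}$ goes through on the same high-probability event without any conditioning.
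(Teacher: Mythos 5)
Your proposal is correct and follows essentially the same route as the paper's own proof: a sub-exponential (Chernoff/Bernstein) bound with a union over the $O(p^2)$ pairs to get $\inf_{(j,k)\in H_1}|\widehat{T}_{jk}|\geq c^*\sqrt{\log p}$ with probability tending to one (the paper's Lemma \ref{lem_power}), then the feasibility argument $G(t^*)M/a_1=\eta$ showing $\hat t\leq t^*=c^*\sqrt{\log p}$, and finally a bounded-convergence step to pass to expectations. Your extra care about the relative (rather than absolute) linearization error and the edge case $c^*\sqrt{\log p}>\sqrt{2\log p}$ corresponds to the paper's use of (\ref{two-side}) and Lemma \ref{UTlemma}, so the two arguments match in substance.
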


This theorem implies that the power of our two-stage method converges to $\mathrm{E}(a_1/|\emph{H}_1|)$.  Since the power cannot exceed $\mathrm{E}(a_1/|\emph{H}_1|)$ by the definition (\ref{power}), this theorem gives a sharp characterization of the signal strength conditions under which the two-stage method reaches the optimal power. Before we detail the condition (\ref{eq_thm_power_1}), we first note that the optimal power $\mathrm{E}(a_1/|\emph{H}_1|)$ is generally less than 1 and decreases with the tuning parameter $\alpha$. As a result, the two-stage method may suffer from loss of power, which can be viewed as the price to pay for using the two-stage method. However, when the interaction effect satisfies the following hierarchical structure, i.e., 
\begin{equation}\label{eq_structure}
\beta_{3jk}\neq 0~~\textrm{implies $\beta_{1j}\neq 0$ and $\beta_{1k}\neq 0$}, 
\end{equation}
one would expect that, for any $(j,k)\in H_1$, the test statistics $|\widehat{T}_j|$ and $|\widehat{T}_k|$ in stage 1 are large enough to exceed $\alpha$. Thus, $a_1$ can be close to the total number of alternatives $|H_1|$, so that there is no loss of power for the two-stage method (i.e., power$\rightarrow 1$).

We now discuss the signal strength condition (\ref{eq_thm_power_1}) under which the two-stage method attains the optimal power. Interestingly, it can be shown that compared to the standard BH procedure (i.e., taking $\alpha=0$), the two-stage method may require a more relaxed signal strength condition. To simplify the discussion, we assume $a_1\asymp p^{\theta_1}$ and $N\asymp p^{\theta_0}$ for some constants $0\leq \theta_0,\theta_1\leq 2$. First, consider the case  $\theta_1\geq \theta_0$. By the definition of $c^*$, we can show that $c^*\asymp (\log p)^{-1/2}$. Thus, as $(n,p) \rightarrow \infty$,  (\ref{eq_thm_power_1}) is equivalent to
\begin{equation}\label{eq_delta_case1}
\delta\geq 2\lambda \tilde K+\zeta^*,
\end{equation}
where $\zeta^*=\zeta+o(1)$. In the second case $\theta_1<\theta_0$, by the Gaussian tail bound, we can derive $c^*=\{2(\theta_0-\theta_1)\}^{1/2}+o(1)$, and therefore (\ref{eq_thm_power_1}) reduces to
\begin{equation}\label{eq_delta_case2}
\delta\geq \{2(\theta_0-\theta_1)\mathrm{\textbf{cov}}( u({\beta_0^{jk}},\emph{\textbf{X}}_{jk}^{\mathrm{s2}},Y))_{(4,4)}\}^{1/2}+2\lambda \tilde K+\zeta^*.
\end{equation}
The above results (\ref{eq_delta_case1}) and (\ref{eq_delta_case2}) together imply that the signal strength condition is weaker as $\theta_0-\theta_1$ decreases. Recall that the standard BH procedure corresponds to $N=|H_0|$ and $a_1=|H_1|$. In contrast, the two-stage method with some proper $\alpha>0$ may significantly reduce $N$ or equivalently $\theta_0$. In addition, if the hierarchical structure (\ref{eq_structure}) holds, we expect that $a_1 \approx |H_1|$ as discussed above.  Therefore, using the two-stage method may yield a smaller value of $\theta_0-\theta_1$ and a weaker signal strength condition. In line with (\ref{eq_delta_case1}) and (\ref{eq_delta_case2}), our simulation studies also confirm that the two-stage procedure may indeed lead to the improved power under some simulation settings. 
\begin{remark}\label{rem_1}
As shown in the discussion of Assumption \ref{aspA6}, a theoretically valid choice of the tuning parameter $\alpha$ is $\alpha = \sqrt{\alpha_1\log p}$ with some small constant $\alpha_1>0$. Since $\hat T_j$ is a normalized test statistic, the choice of $\alpha$ is not affected by the scale of the data or the noise variance (e.g., in linear regression). Thus, we expect that a universal choice of $\alpha$ or equivalently $\alpha_1$ may work well in practice. Depending on the applications, a good choice of $\alpha$ is to balance the computational cost and the power of the FDR control procedure (as the value of $a_1$ in (\ref{eq_a1}) depends on $\alpha$). In our simulation, we find that choosing $\alpha_1$ in $[0.1,0.5]$ often yields satisfactory power  and also significantly reduces the computational cost. In the real data analysis, since $p$ is very large, we use a slightly larger $\alpha_1=0.8$ to further reduce the computational cost. 
\end{remark}


\section{Simulation}\label{simulation}

\subsection{Simulation settings}
We conduct simulations to evaluate the performance of our two-stage method. We consider logistic model and linear model in correctly specified case and misspecified case respectively. For each model, we generate a $p$-dimensional multivariate normal random vector $X = (X_1,...,X_p) \sim \mathcal{N}(0, \Sigma)$, where $\Sigma$ is set to be the identity matrix 
or $\Sigma_{jk} = 0.5^{|j-k|}$ with the latter introducing some correlation among variables. The FDR control level is set to be $\eta = 0.05$. We consider $p = 100,500$ with sample size $n=1000$ for logistic model, and $p=100,n=50$ or $p=500,n=100$ for linear model. 
For each setting, the simulation is repeated 100 times.

\begin{itemize}
    \item \emph{Correctly-specified models.}
    The data are generated from the GLM in (\ref{glm0}) with 
\begin{equation}\label{data0}
\theta_{jk}=\beta_{0jk} + \beta_{1jk}x_{j} + \beta_{2jk}x_{k} + \beta_{3jk}x_{j}x_{k}.
\end{equation}
Specifically, for $1\leq j < k \leq p$, we consider the linear model
\begin{equation}\label{data1}
    Y = \theta_{jk} + \epsilon,
\end{equation}
where $\epsilon \sim \mathcal{N}(0,1)$ and logistic model
\begin{equation}\label{data2}
    \mathrm{logit}\left(\mathrm{P}\left(Y=1 \mid X_j=x_{j}, X_k=x_{k}\right)\right)=\theta_{jk}. 
\end{equation}
The parameters $\beta_{1jk}$ and $\beta_{2jk}$ are randomly chosen from the set $\{0, b\}$, where we vary $b$ from $0.2$ to $1$. We adopt the hierarchical structure (\ref{eq_structure}). If either $\beta_{1jk}$ or $\beta_{2jk}$ is $0$, $\beta_{3jk}$ is set to be $0$, otherwise it is randomly chosen from $\{0, b\}$ with probability 0.25 for 0 and 0.75 for $b$. We set $\beta_{0jk}=-2$ in logistic model and $\beta_{0jk}=-1$ in linear model. The covariance matrix of $X$ is the identity matrix. 
The tuning parameter $\alpha$ in stage 1 is chosen as
\begin{equation}\label{eq_alpha_1}
\alpha = \sqrt{\alpha_1 \log p},
\end{equation}
for some small constant $\alpha_1$ (e.g., $0.1$ or $0.5$). As we discussed before, if $\alpha_1 = 0$, all variables will pass stage 1 and the two-stage FDR control method reduces to the classical BH procedure. 

    \item \emph{Misspecified models.}
 We introduce additional terms to $\theta_{jk}$ in (\ref{data0}) so that the GLM (\ref{glm0}) is misspecified. Specifically, for $0 \leq j <k \leq p$, we set 
 \[
\theta_{jk}=\beta_{0jk} + \beta_{1jk}x_{j} + \beta_{2jk}x_{k} + \beta_{3jk}x_{j}x_{k}+ \beta_{4jk}x_{l} + \beta_{5jk}x_{u}x_{v},
 \]
 where $l,u,v$ are randomly chosen from $\{1\leq l,u,v \leq p: l,m,n \neq j, k\}$, therefore variables different from $x_j$ and $x_k$ are included in the model leading to misspecification. Given $\theta_{jk}$, the data generating model is still (\ref{data1}) for linear model and (\ref{data2}) for logistic model. The parameters $\beta_{1jk}$, $\beta_{2jk}$ and $\beta_{3jk}$ are generated in the same way as the correctly-specified case. The additional parameters $\beta_{4jk}$ and $\beta_{5jk}$ are again randomly chosen from the set $\{0, b\}$. Unlike the correctly-specified case, the covariance matrix of $X$ is  $\Sigma_{jk} = 0.5^{|j-k|}$.
\end{itemize} 
 
In the simulation studies, we compare the performance of the classical BH procedure with our two-stage method using two different values of $\alpha_1$ under the above data generating models. To evaluate the finite sample performance of the methods, we compute the empirical FDR (i.e., FDP) and empirical power defined in (\ref{eq_FDP}) and (\ref{power}) respectively, averaged over 100 simulations. 

In addition, we compare the computation efficiency of the methods. Note that when $\alpha_1=0$, all $(p-1)p/2$ pairs of variables need to be tested for interaction effect. If the two-stage method is used, $p$ tests are conducted in stage 1 and another  $(p_1-1)p_1/2$ tests are conducted in stage 2, where $p_1$ is the number of variables pass stage 1 for a chosen $\alpha_1$. Since the computation time of the algorithm roughly scales with the number of tests conducted, we define the  computation efficiency $\omega$ as the ratio of the number of tests conducted relative to the BH procedure, 
 \[
\omega = \frac{2p + p_1(p_1-1)}{p(p-1)}.
 \]

\begin{figure}
\centering
\subfigcapskip -5pt
\subfigure[p = 100]{
\label{Fig.sub.1}
\includegraphics[width=0.48\textwidth]{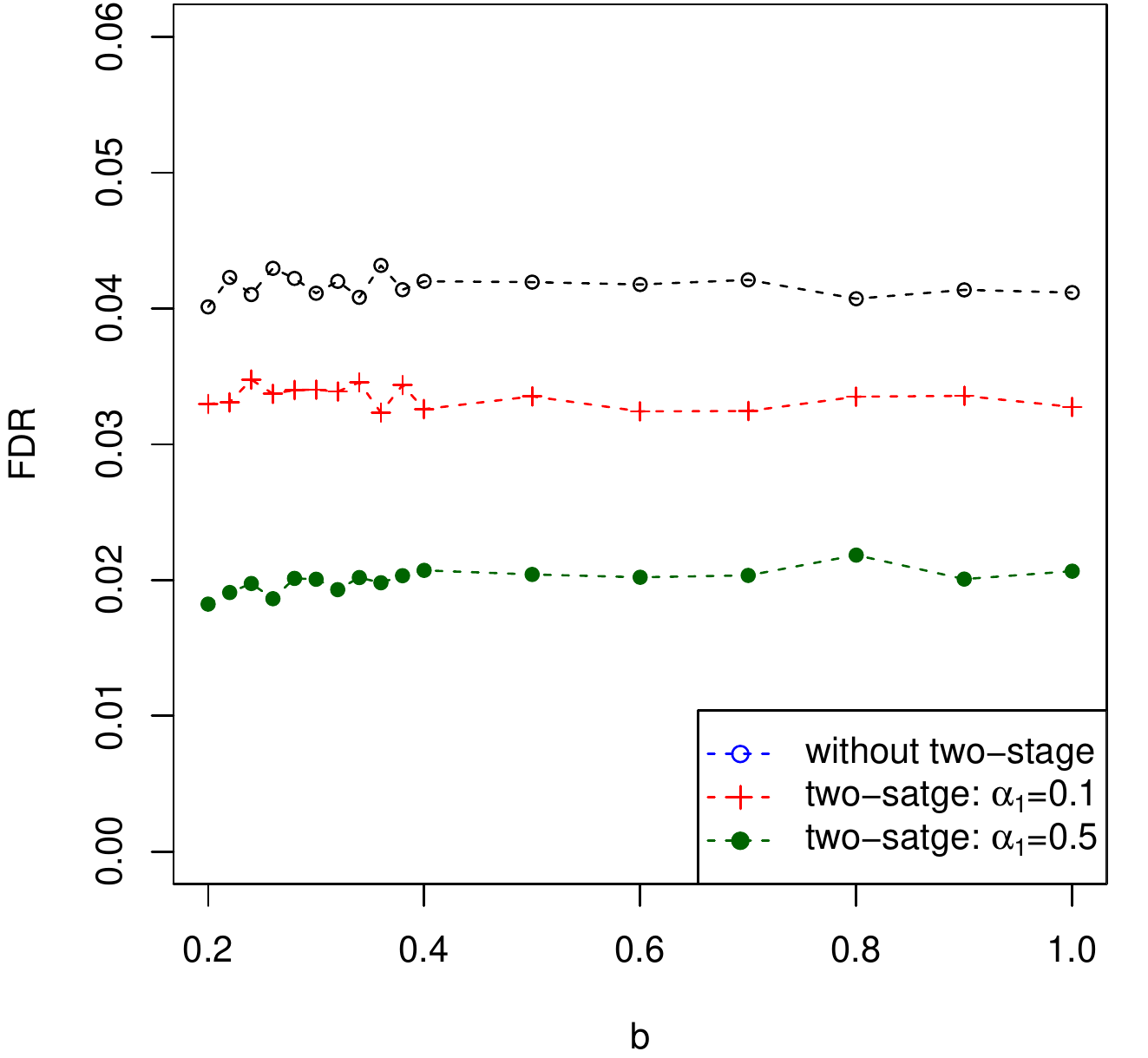}}
\subfigure[p = 100]{
\label{Fig.sub.2}
\includegraphics[width=0.48\textwidth]{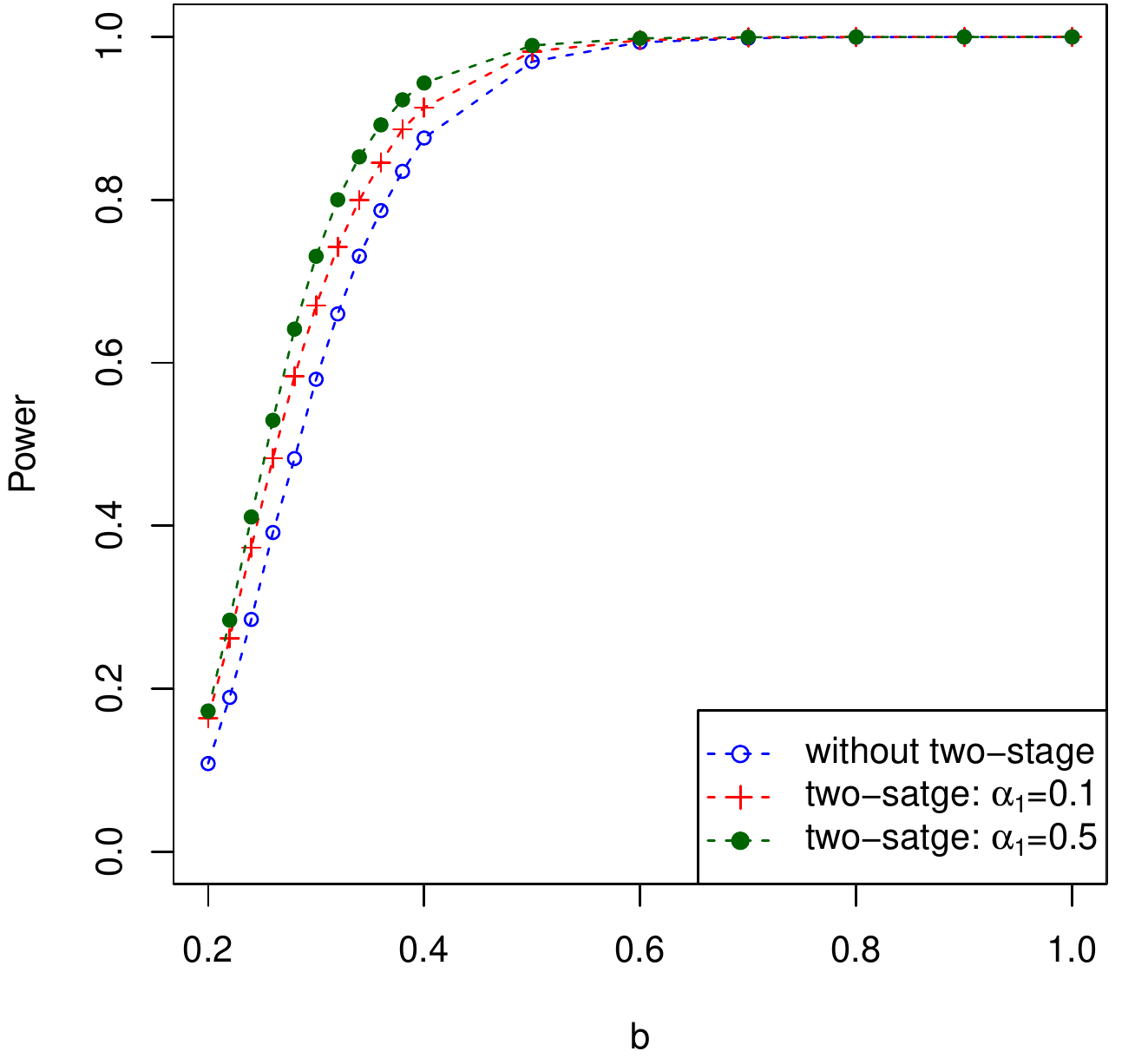}} \vskip -5pt 
\subfigure[p = 500]{
\label{Fig.sub.3}
\includegraphics[width=0.48\textwidth]{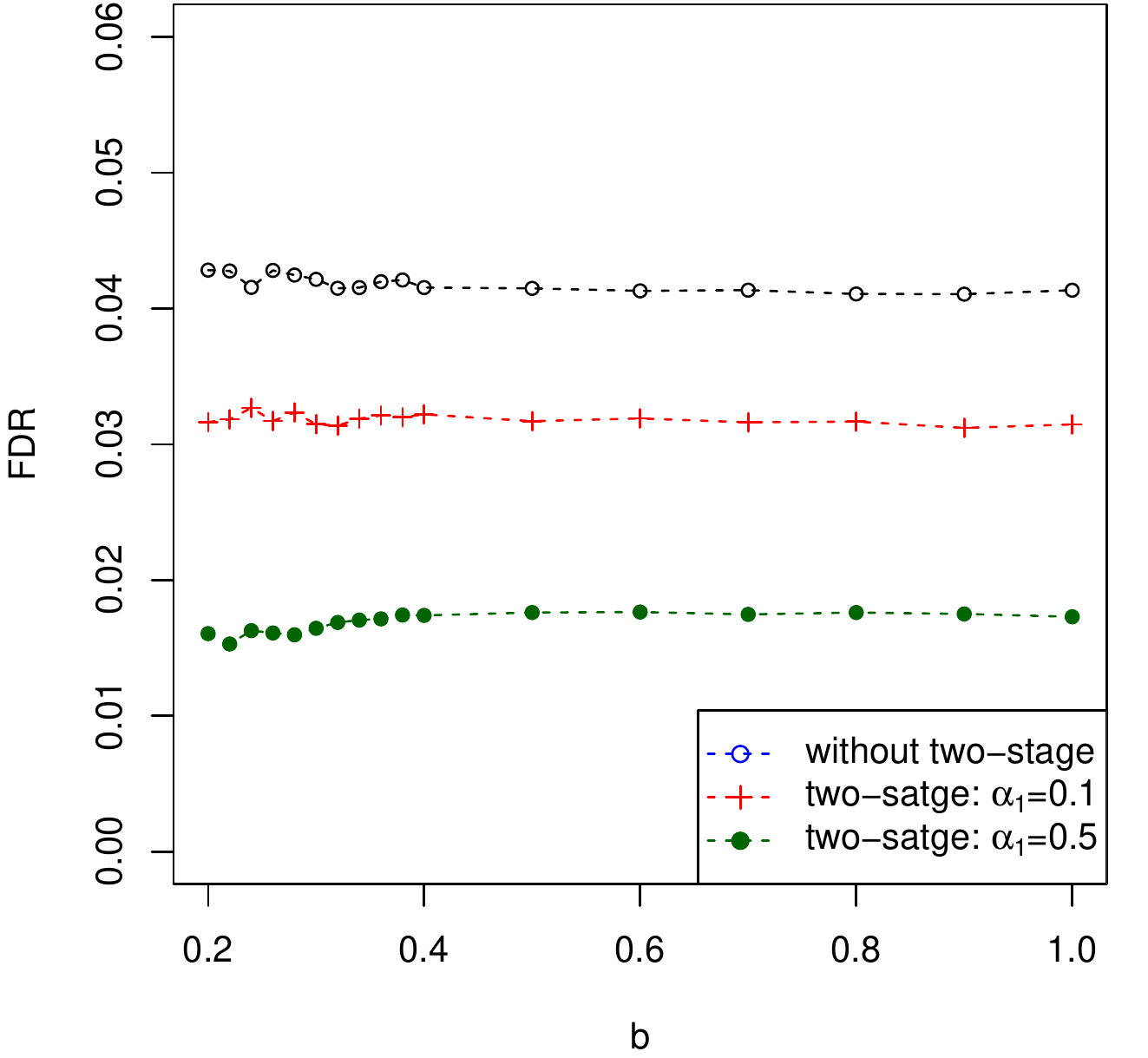}}
\subfigure[p = 500]{
\label{Fig.sub.2}
\includegraphics[width=0.48\textwidth]{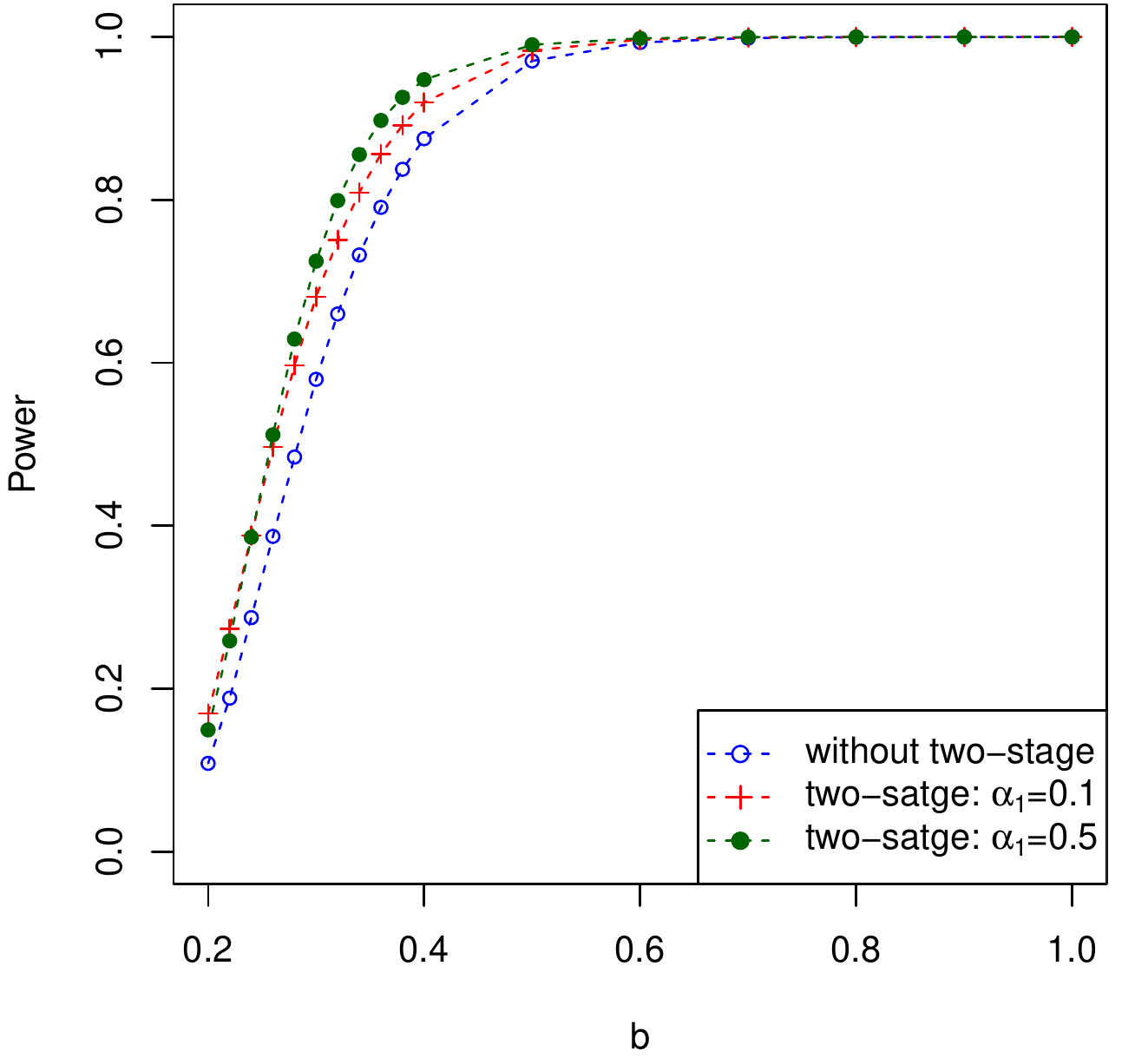}}
\caption{FDR and power curves for correctly-specified logistic models.}
\label{glm_correct}
\end{figure}

\begin{figure}
\centering
\subfigcapskip -5pt
\subfigure[p = 100]{
\label{Fig.sub.1}
\includegraphics[width=0.48\textwidth]{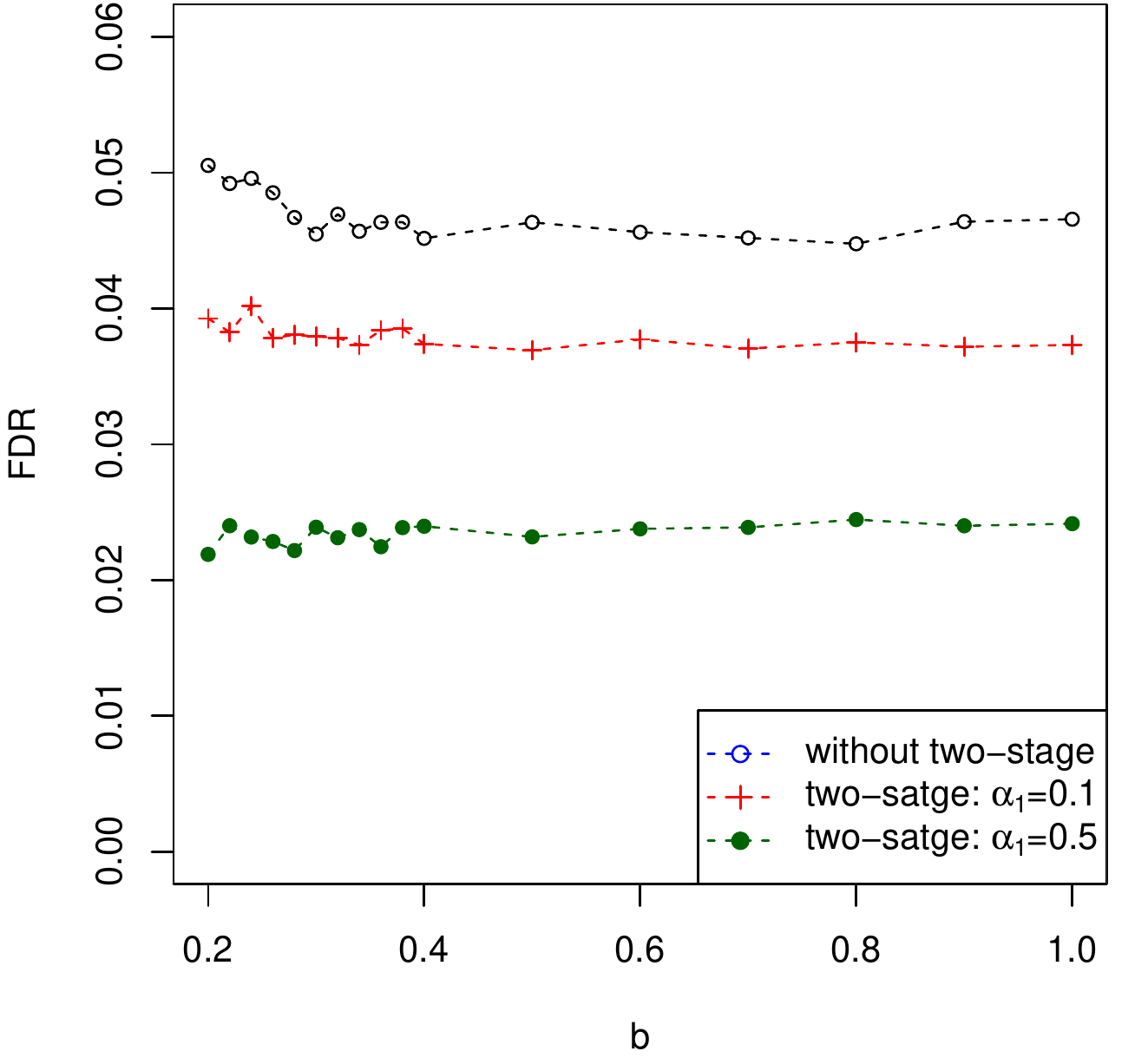}}
\subfigure[p = 100]{
\label{Fig.sub.2}
\includegraphics[width=0.48\textwidth]{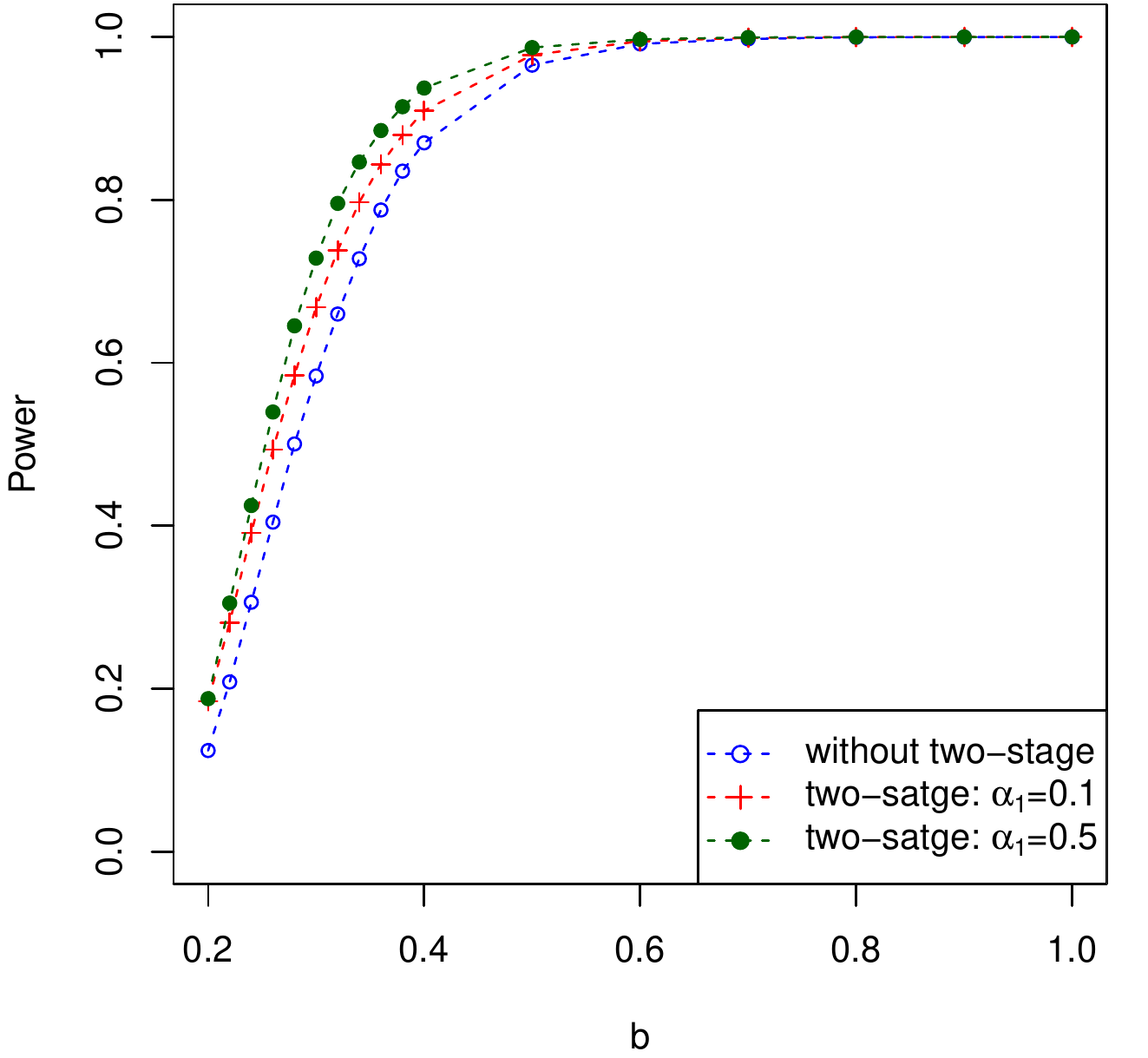}} \vskip -5pt 
\subfigure[p = 500]{
\label{Fig.sub.3}
\includegraphics[width=0.48\textwidth]{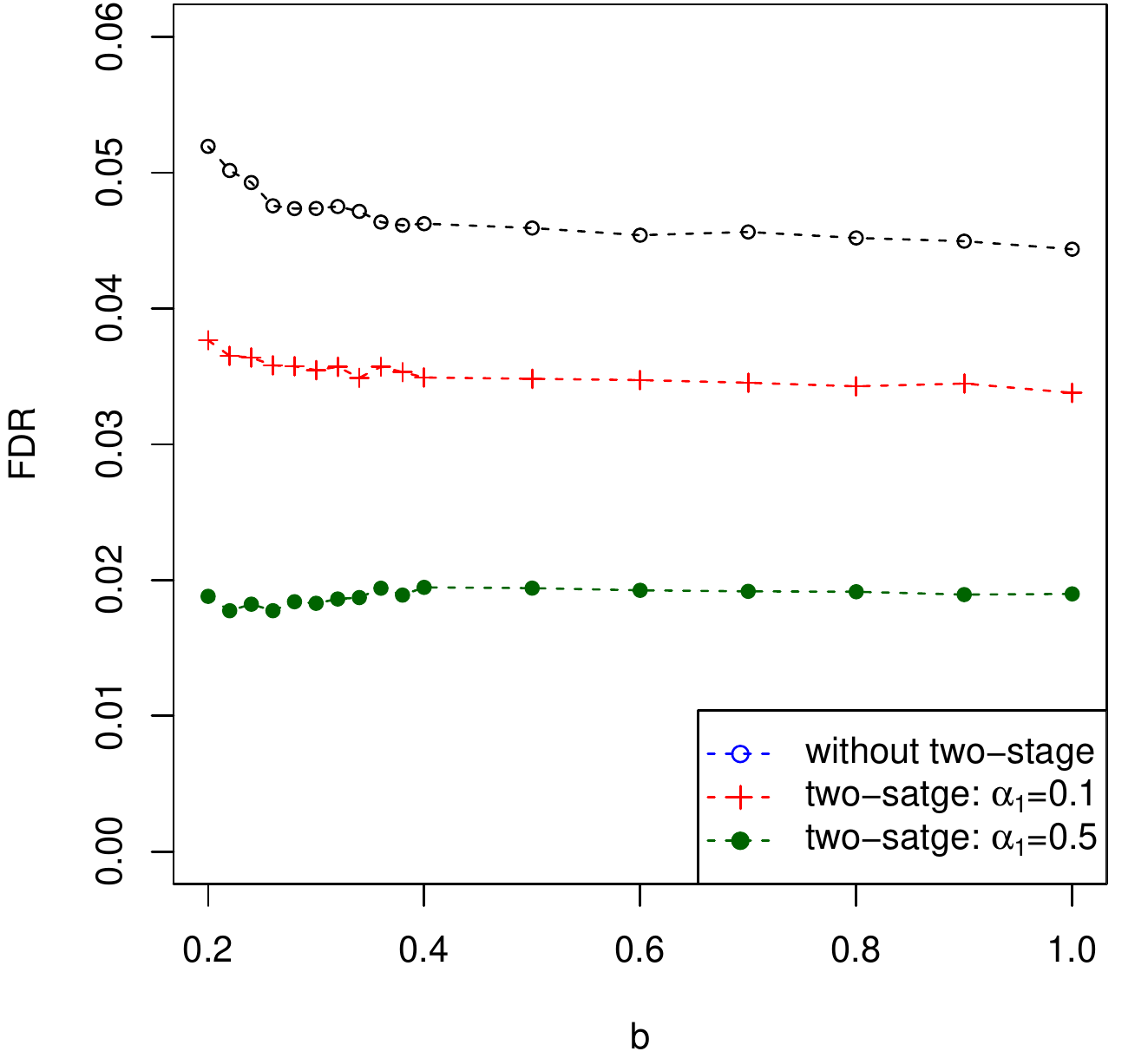}}
\subfigure[p = 500]{
\label{Fig.sub.2}
\includegraphics[width=0.48\textwidth]{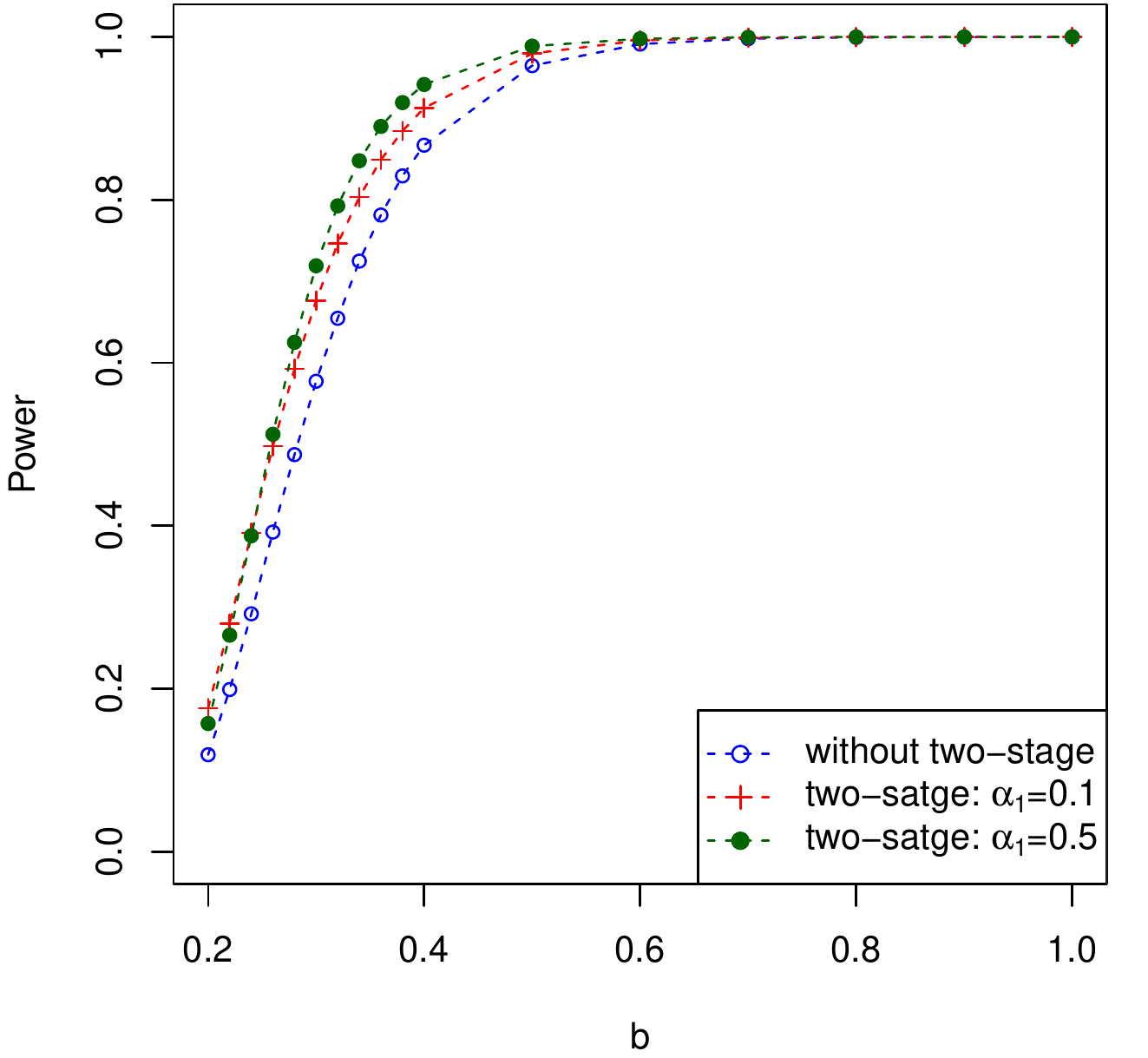}}
\caption{FDR and power curves for misspecified logistic models. }
\label{glm_miss}
\end{figure}

\begin{figure}
\centering
\subfigcapskip -5pt
\subfigure[n = 50, p = 100]{
\label{Fig.sub.1}
\includegraphics[width=0.48\textwidth]{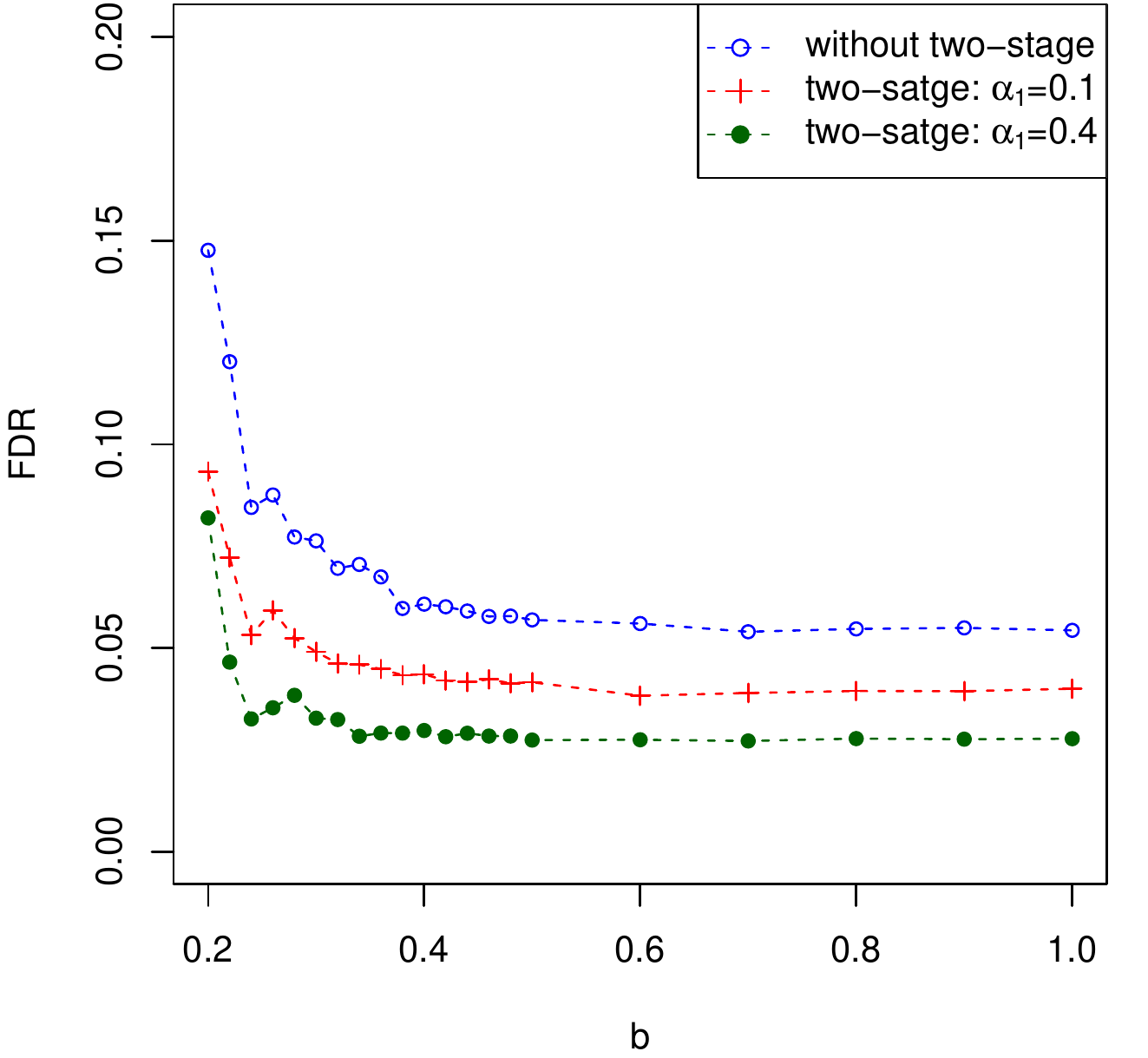}}
\subfigure[n = 50, p = 100]{
\label{Fig.sub.2}
\includegraphics[width=0.48\textwidth]{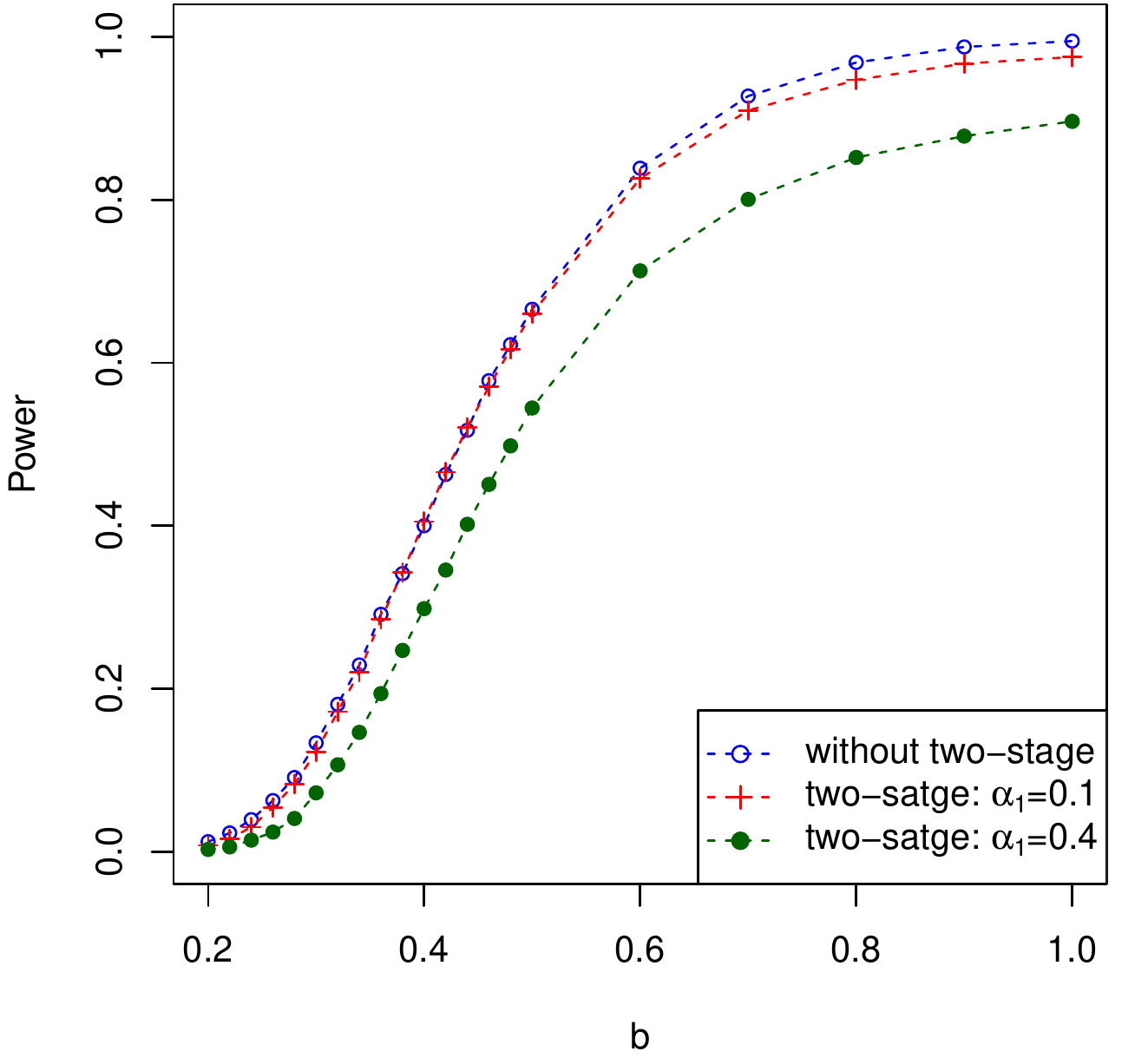}} \vskip -5pt 
\subfigure[n = 100, p = 500]{
\label{Fig.sub.3}
\includegraphics[width=0.48\textwidth]{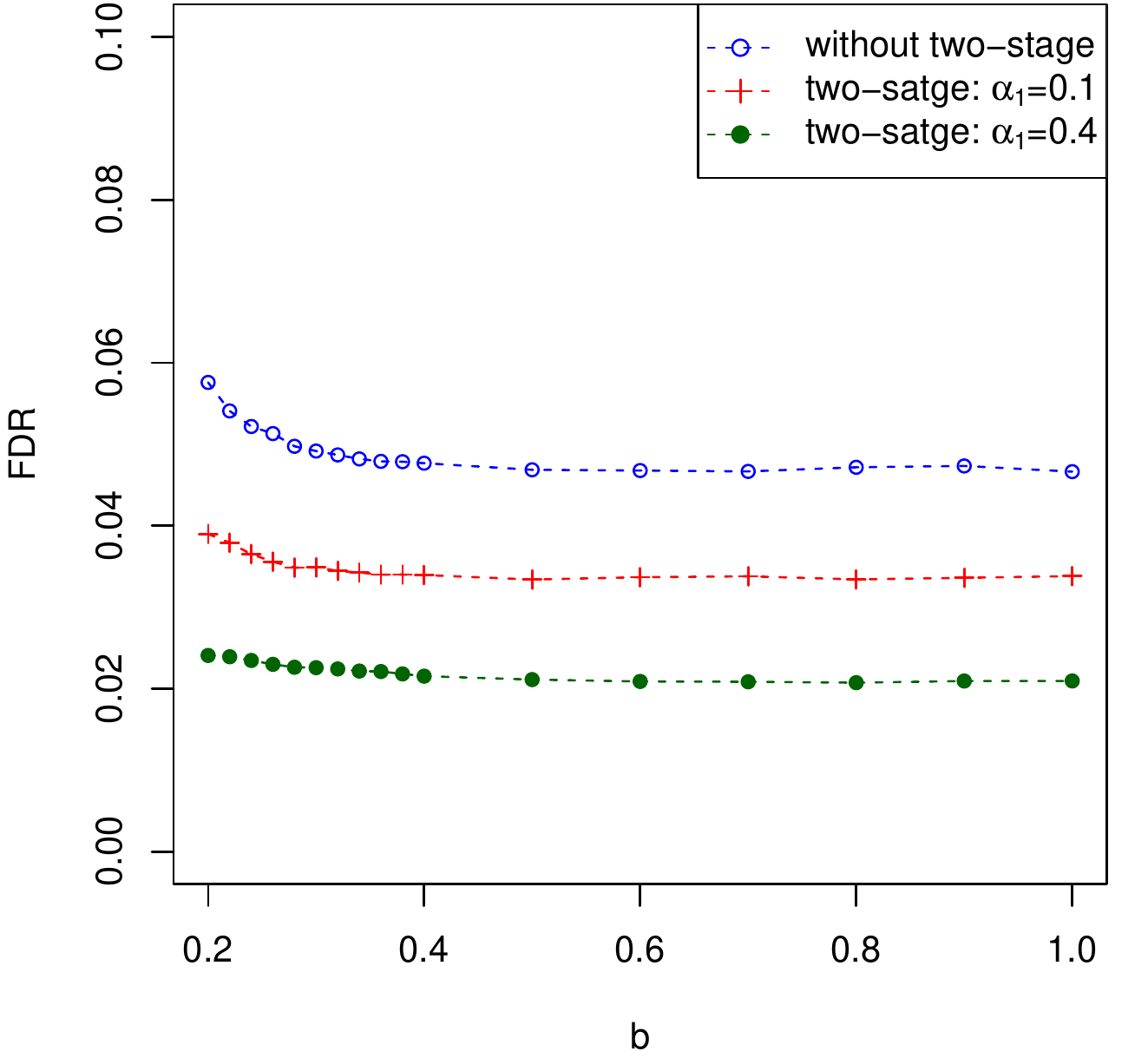}}
\subfigure[n = 100, p = 500]{
\label{Fig.sub.2}
\includegraphics[width=0.48\textwidth]{data_image/lm_p100_power.pdf}}
\caption{FDR and power curves for correctly-specified linear models.}
\label{lm_correct}
\end{figure}

\begin{figure}
\centering
\subfigcapskip -5pt
\subfigure[n = 50, p = 100]{
\label{Fig.sub.1}
\includegraphics[width=0.48\textwidth]{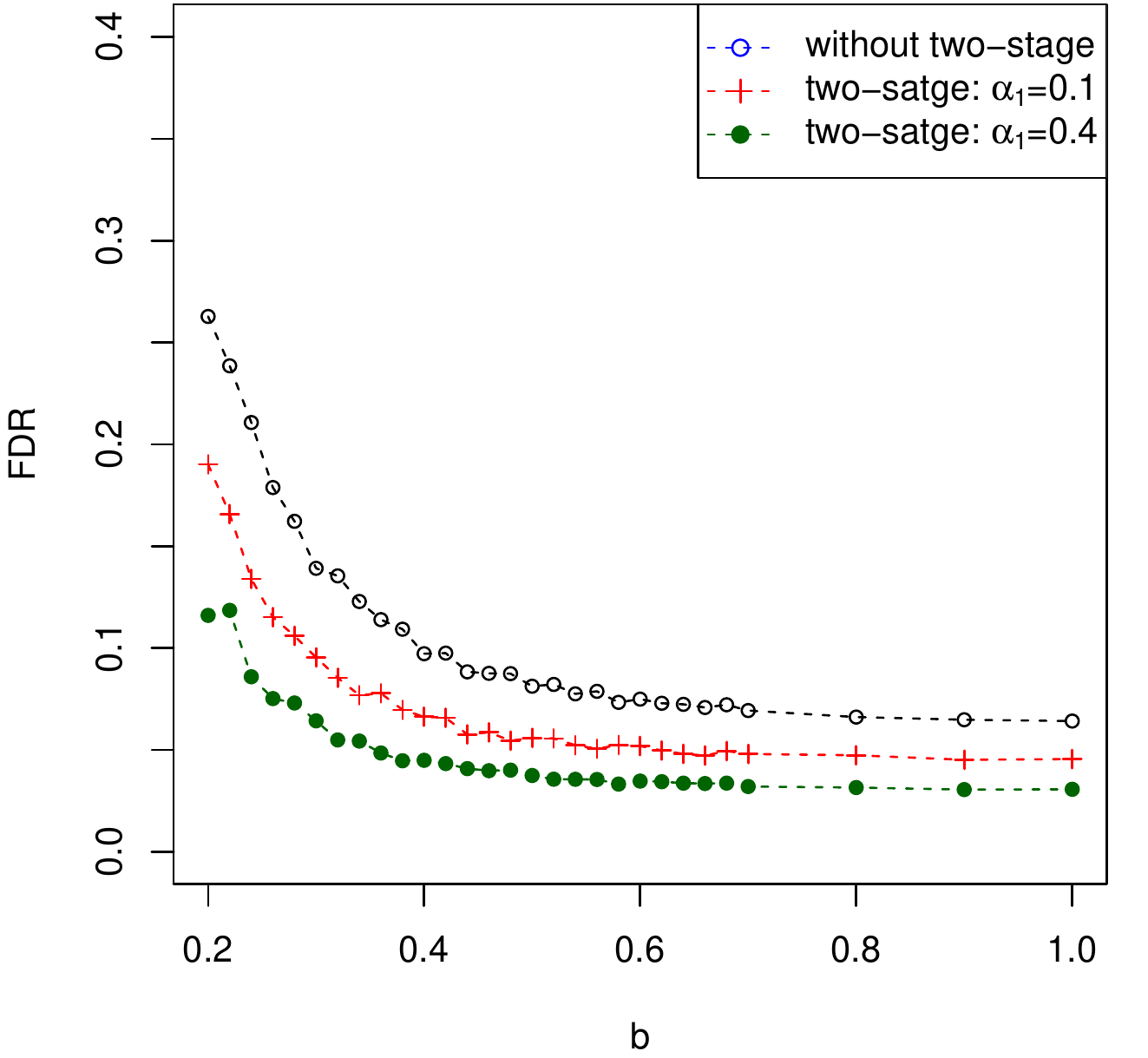}}
\subfigure[n = 50, p = 100]{
\label{Fig.sub.2}
\includegraphics[width=0.48\textwidth]{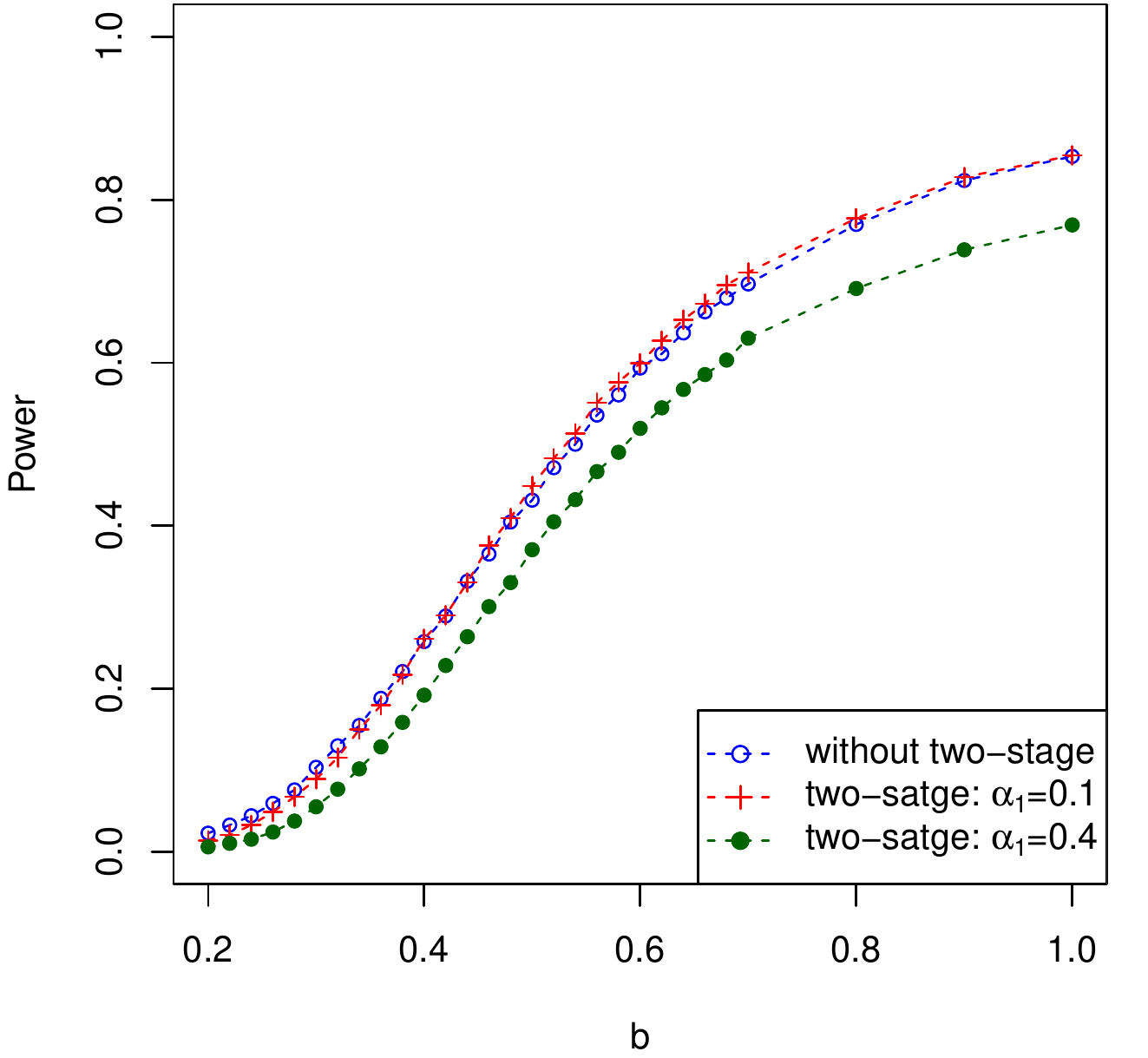}} \vskip -5pt 
\subfigure[n = 100, p = 500]{
\label{Fig.sub.3}
\includegraphics[width=0.48\textwidth]{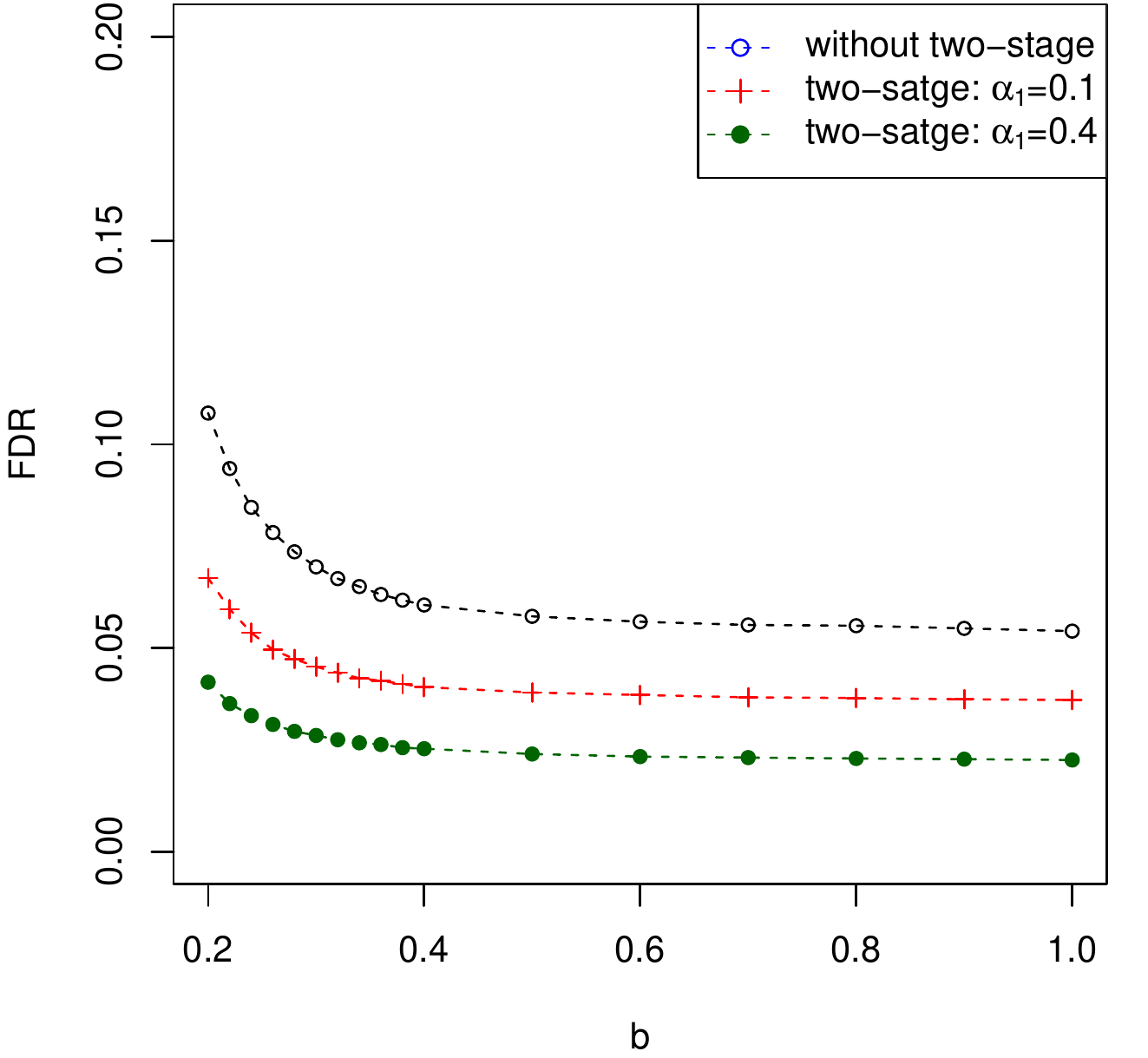}}
\subfigure[n = 100, p = 500]{
\label{Fig.sub.2}
\includegraphics[width=0.48\textwidth]{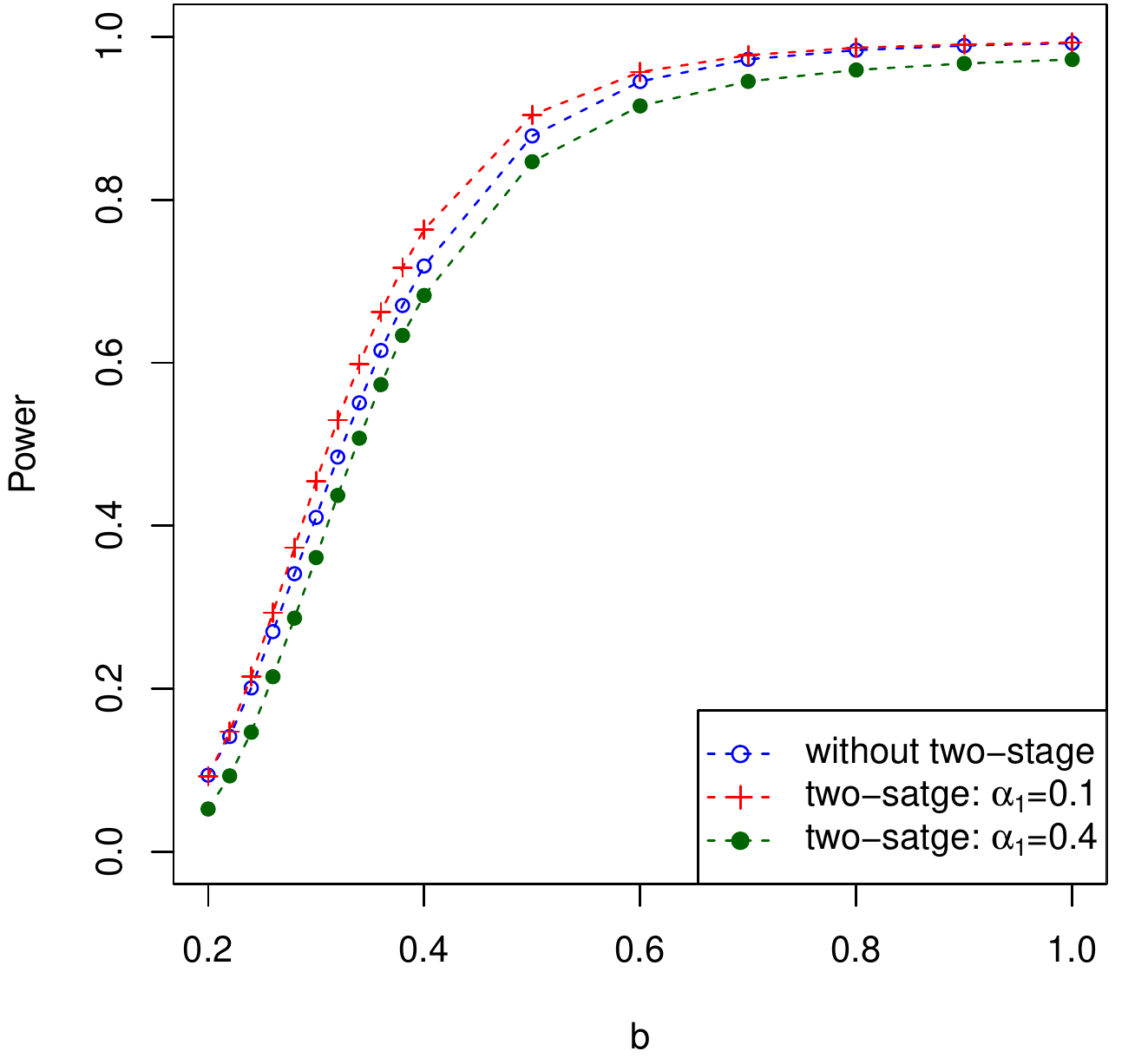}}
\caption{FDR and power curves for misspecified linear models. }
\label{lm_miss}
\end{figure}

\subsection{Simulation Results}
The empirical FDR and power curves for logistic model are shown in Figure~\ref{glm_correct} and Figure~\ref{glm_miss}.  In both correctly-specified and misspecified cases, FDR can be controlled below the desired level 0.05. Compared to the classical BH method, the FDR of the two-stage method is lower because variables with weak main effect are excluded from stage 2, and false discoveries are less likely to happen. As expected, the FDR reduces when we increase $\alpha_1$ from 0.1 to 0.5.

While one may expect the two-stage method with a relatively large $\alpha_1$ may screen out some informative variables in stage 1, leading to loss of power, interestingly, panels (b) and (d) in Figure~\ref{glm_correct} and Figure~\ref{glm_miss} show that the two-stage method with a proper $\alpha_1$ can be even more powerful than the classical BH method. This is in line with the discussion after Theorem \ref{thm_power}. 
Such power improvement is more evident when the signal size $b$ is small or moderate. When the signal size is large enough, such as when $b\geq 0.6$, the power of all the methods converges to 1.




For the linear models in Figure~\ref{lm_correct} and Figure~\ref{lm_miss}, the FDR from the BH procedure may sometimes far exceed the desired level $0.05$.  The reason is that, when the signal size is  small, there are very few discoveries based on the asymptotic p-values, leading to unstable FDR. The two-stage method, however, significantly outperforms the BH method and the resulting FDR is smaller or closer to the desired level $0.05$. In terms of the power, we see that the two-stage method with $\alpha_1 = 0.1$ is comparable to the BH method. As we increase the threshold to $\alpha_1 = 0.4$, the two-stage method becomes less powerful than the BH method, especially when $b$ is relatively large, meaning that we may miss some variables that have interaction effects when using a more stringent rejection rule in stage 1.

The comparison of the computation efficiency when using different $\alpha_1$ is summarized in Table~\hyperlink{Table1}{1}. It is seen that the number of tests conducted in the two-stage method is around $1/4\sim 1/2$ of the BH procedure. Thus, the two-stage method is much more computationally efficient than the BH procedure, especially when $p$ is large. 

In summary, compared to the standard BH procedure, the two-stage method often leads to a more reliable FDR control with improved or comparable power, and can be implemented with much less computation time. 

\begin{figure}
    \centering
    \includegraphics[  width=14cm,
  height=6.5cm,
  ]{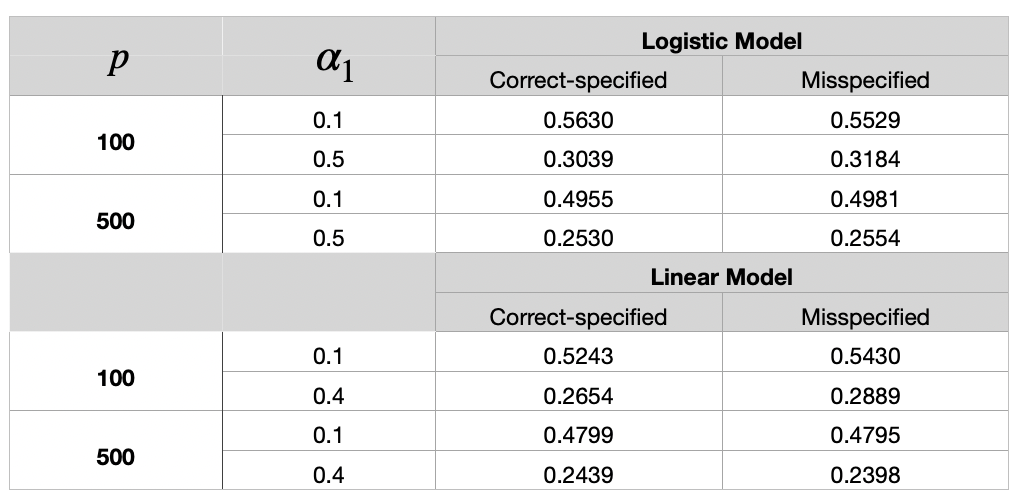}
    \caption*{Table 1: Computation efficiency $\omega$ of the two-stage method with different $\alpha_1$.}
    \hypertarget{Table1}{}
\end{figure}

\section{Real Data Application}\label{data}

Bladder cancer is one of the most common cancers. In 2022, an estimated 81,180 new cases in the United States were diagnosed with bladder cancer, with about 17,100 deaths from bladder cancer \citep{cancerstat}. A great deal of efforts have been devoted to identify genetic susceptibility loci for bladder cancer through GWAS studies \citep{kiemeney2008sequence, kiemeney2010sequence, rafnar2009sequence, wu2009genetic, rothman2010multi}. Despite these efforts, the molecular mechanism including epistasis for bladder cancer has not been well understood. In this section, we apply the proposed two-stage hypothesis testing procedure to a bladder cancer data set from the database of Genotypes and Phenotypes \citep{tryka2014ncbi}. In particular, we focus on the United States/Finland cohort (genotyped on a 610 K chip) from this dataset.

Before applying our two-stage hypothesis testing procedure, we conduct quality control (QC) filters through PLINK \citep{purcell2007plink}, including removing subjects with more than 5\% missing genotypes, and removing SNPs with a minor allele frequency less than 1\% and those with more than 5\% missing genotypes. This leads to a total 102,172 SNPs from 2,479 cases and 2,273 controls. We also use the PLINK software to prune the SNPs using a pairwise $r^2 > 0.2$ to reduce the influence of strong linkage disequilibrium (LD) on the assessment of interaction effects. The final dataset contains 95,094 SNPs to be analyzed. To control for the potential impacts of population stratification, we apply the principal component analysis (PCA) from the R package SNPRelate \citep{zheng2012high}. The potential effect of population stratification is adjusted in the second stage of our testing procedure by fitting the first five eigenvectors from the PCA of the SNP genotypes.

In this analysis, we adopt the dominant model. To implement the proposed two-stage hypothesis testing procedures, we set  $\alpha_1=0.8$ in (\ref{eq_alpha_1}) and in total 385 SNPs pass the first stage. 
We then test for the interactions between these SNPs in the second stage. As a result, 67 pairs of SNPs are identified by the proposed method with FDR level $0.1$.  Table~\hyperlink{Table2}{2}
presents 13 of them in which both SNPs occur within an identified gene in an ascending order of $p$-value, with the largest $p$-value being $1.09 \times 10^{-09}$.  



To evaluate potential biological relationships among the genes shown in Table~\hyperlink{Table2}{2}, we compare our findings with the existing results from GIANT \citep{greene2015understanding}, which provides genome-wide functional interaction networks obtained from a Bayesian approach that integrates thousands of diverse experiments. For ease of visualization, we only reproduce the gene network from GIANT for those identified in Table~\hyperlink{Table2}{2}. The result is shown in Figure~\ref{real.fig.1}. In this figure, two genes are connected if the posterior probability of the functional relationship is greater than 50\%, with bolder edges having posterior probability greater than 89\% \citep{greene2015understanding}. We find that all interactions identified by our proposed procedures in Table~\hyperlink{Table2}{2} are connected by a pathway having less than four genes within the network, with only one non-query gene (CCL21). In particular, PRKCQ and ZBTB16, PRKCQ and NDUFB9, PRKCQ and ZBTB20,  PRKCQ and BANP are identified by our procedure to have very strong interaction effect in Table~\hyperlink{Table2}{2}, which are consistent with the results from GIANT,  as all of them are connected by two bolder edges through only one additional gene. 

We notice that the gene PRKCQ appears 10 out of 13 pairs in Table~\hyperlink{Table2}{2}, which seems to suggest its importance in bladder cancer development. Such conjecture can be further verified in the bladder cancer literature. Notably, by examining the suitability of rodent models of bladder cancer in rats to model clinical bladder cancer specimens in humans, \cite{lu2011cross} found that the gene PRKCQ is differentially expressed  between tumor and normal groups, and consistently observed as a down-regulated gene in at least two datasets. Meanwhile, \cite{zaravinos2011identification} used microarrays to identify common differentially expressed  genes among clinically relevant subclasses of bladder cancer. Their results showed that the gene PRKCQ is differentially expressed and related to cell growth in bladder tissue. Finally, we also confirm the role of some other genes identified by our method in bladder cancer via GTEx \citep{gtex2015genotype}, a database of  tissue-specific gene expression and regulation. The detailed results are deferred to Appendix \ref{app_numerical}. All these results show that the genes identified by our method are expressed in bladder tissue, which supports our data analysis results.

\begin{figure}
    \centering
    \includegraphics[ width=15cm,
  height=10cm,
  ]{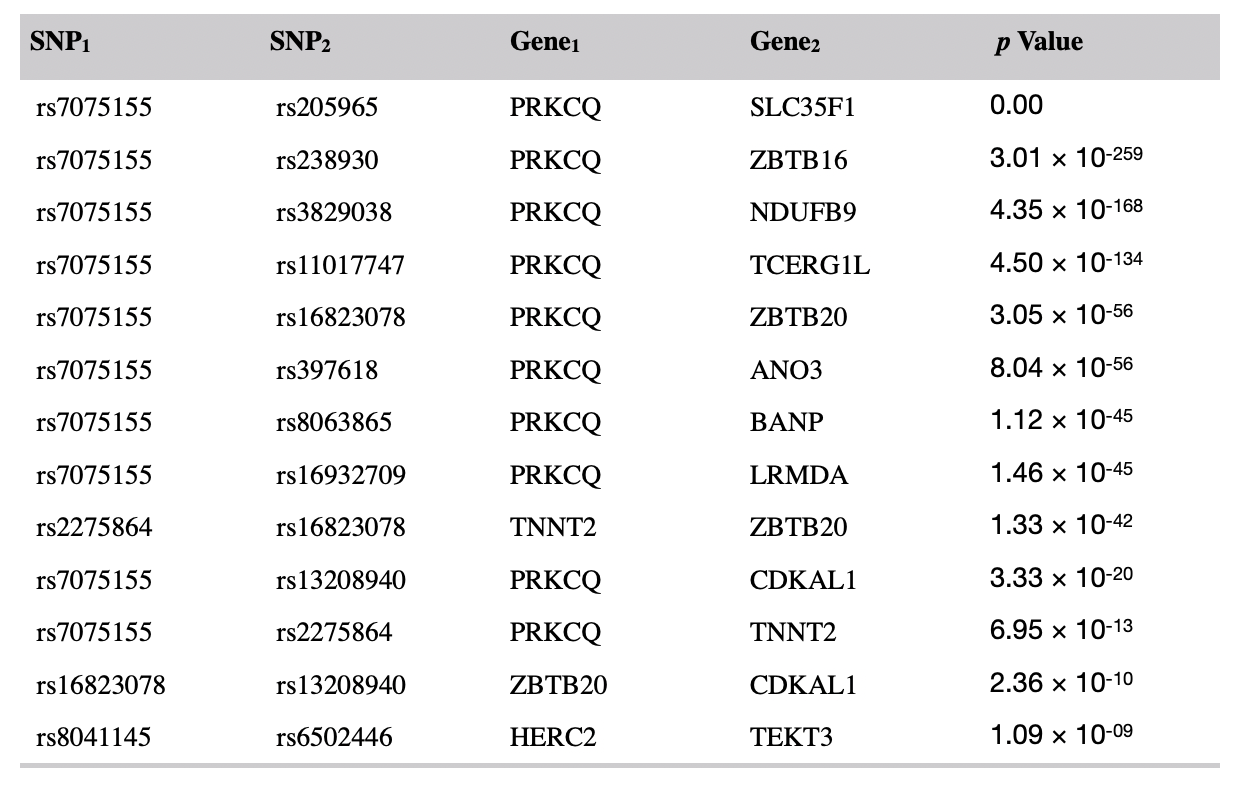}
    \caption*{Table 2: The SNPs and the corresponding genes in the bladder cancer data set that are identified by our two-stage hypothesis testing procedure with FDR level 0.1. The corresponding p-values in the second stage are also reported.}
\hypertarget{Table2}{}
\end{figure}

\begin{figure}
    \centering
    \includegraphics[ width=16cm,
  height=12cm,
  ]{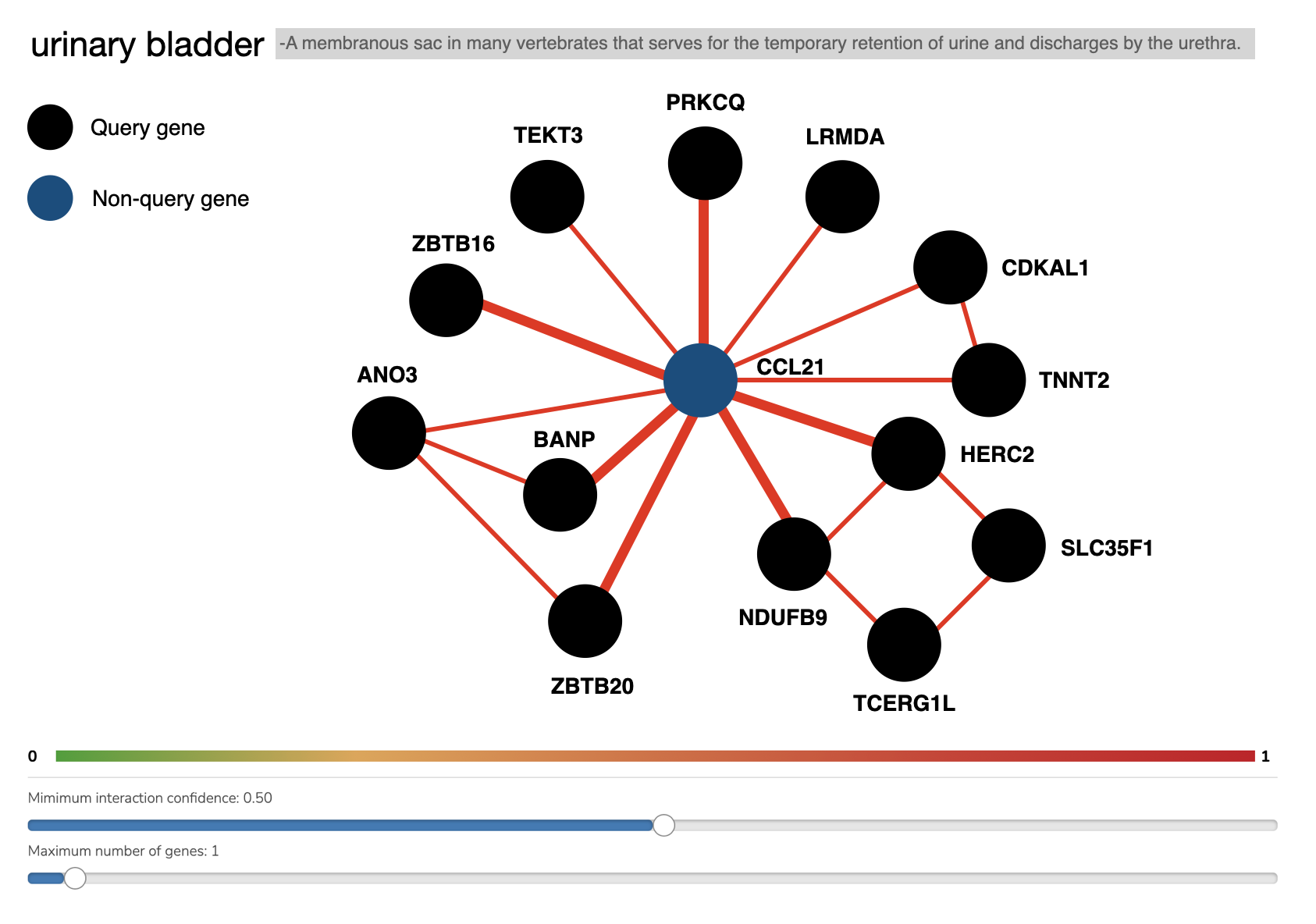}
    \caption{The GIANT urinary bladder network queried for the identified genes in Table 2. The network is filtered to include only edges with greater than a 50\% posterior probability of a functional relationship in this tissue, which represents a substantial increase over the prior probability, with only one additional gene CCL21.}
    \label{real.fig.1}
\end{figure}


\section*{Acknowledgement}

Yang Ning is supported by National Science Foundation (NSF) CAREER award DMS-1941945. Xi Chen is supported by the NSF [Grant IIS-1845444]. Yong Chen is supported in part by National Institutes of Health awards 1R01AG073435, 1R56AG074604, 1R01LM013519 and 1R56AG069880. 

\newpage

\appendix
\section{Technical Details for Assumption \ref{aspA6}}\label{app_assumption}

For any $1 \leq j \leq p$, if we set $\alpha = \sqrt{c\log p}$, then 
\begin{align}
     G_j(\alpha) &= 
      \mathrm{P}\left(|\mathcal{N}(0,1)+\sqrt{n}\mathrm{E}(U_{ij})|\geq \sqrt{c\log p}\right) \nonumber\\
     & =
     \mathrm{P}\left(\mathcal{N}(0,1)\geq
     \sqrt{c\log p} -  \sqrt{n}\mathrm{E}(U_{ij})\right)
     +
      \mathrm{P}\left(\mathcal{N}(0,1)\leq
     -\sqrt{c\log p} -  \sqrt{n}\mathrm{E}(U_{ij})\right)\nonumber\\
         &= 1 - \Phi\left( \sqrt{c\log p}-\sqrt{n}\mathrm{E}(U_{ij})\right)
     + 1 - \Phi\left(\sqrt{n}\mathrm{E}(U_{ij}) +\sqrt{c\log p}\right),\label{eq_app_1}
\end{align}
where $\Phi(t) = \mathrm{P}\left(\mathcal{N}(0,1) \leq t\right)$.
Note that 
\[
\mathrm{E}(U_{ij})=\frac{\left(\beta_0^j\right)_{(2)}}{\sqrt{\mathrm{\textbf{cov}}( u({\beta_0^{j}},\emph{\textbf{X}}_{j}^{\mathrm{s1}},Y))_{(2,2)}}},
\]
which depends on the least false value $\beta_0^j$. For this reason, we refer $|\mathrm{E}(U_{ij})|$ as the signal strength in the main paper. 

Let us define the set
$$
\tilde H_{01}(\tilde c)=\{(j,k)\in\tilde H_{01}: |\mathrm{E}(U_{ij})|\geq b, |\mathrm{E}(U_{ik})|\geq b\},
$$
where $b=\sqrt{\frac{\tilde c\log p}{n}}$ for some constant $\tilde c>0$. The set $\tilde H_{01}(\tilde c)$ is a subset of $\tilde H_{01}$ by excluding the pairs with the signal strength $|\mathrm{E}(U_{ik})|$ less than $b$. Under mild conditions on $\tilde H_{01}(\tilde c)$, we can verify (\ref{eq_G4}) in the following lemma. 

\begin{lemma}
Assume that $|\tilde H_{01}(\tilde c)|=\Omega(p^2)$. Then (\ref{eq_G4}) holds for any $\alpha = \sqrt{c\log p}$, where $c$ is a positive constant with 
\begin{equation}\label{eq_app_2}
c<\Big(\sqrt{\tilde c}+\sqrt{2-\max \left\{\frac{3}{2}+\frac{\delta}{1+\delta}, 2-\frac{\kappa}{2}+\frac{\delta}{1+\delta}\right\}}\Big)^2.    
\end{equation}
\end{lemma}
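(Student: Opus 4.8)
The plan is to lower bound $\sum_{(j,k)\in\tilde{H}_{01}} G_j(\alpha)G_k(\alpha)$ by discarding all but the pairs in the subset $\tilde{H}_{01}(\tilde c)$. Since every summand is nonnegative, restricting the sum can only decrease it, so it suffices to show $\sum_{(j,k)\in\tilde{H}_{01}(\tilde c)} G_j(\alpha)G_k(\alpha)=\Omega(p^\xi)$ for a suitable $\xi$. Writing $\mu_j=\sqrt{n}\mathrm{E}(U_{ij})$, the defining inequality of $\tilde{H}_{01}(\tilde c)$ translates into $|\mu_j|\geq \sqrt{\tilde c\log p}$ for every coordinate $j$ appearing in such a pair, which is exactly the leverage we need on the signal strength.

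Next I would obtain a uniform lower bound on a single factor $G_j(\alpha)$ with $\alpha=\sqrt{c\log p}$ under $|\mu_j|\geq\sqrt{\tilde c\log p}$. From the identity (\ref{eq_app_1}), $G_j(\alpha)=\Phi(\mu_j-\alpha)+\Phi(-\mu_j-\alpha)$ is symmetric in the sign of $\mu_j$, so keeping only the dominant tail term gives $G_j(\alpha)\geq\Phi(|\mu_j|-\alpha)$. If $|\mu_j|\geq\alpha$, then $G_j(\alpha)\geq\Phi(0)=1/2$. Otherwise $0<\alpha-|\mu_j|\leq(\sqrt c-\sqrt{\tilde c})\sqrt{\log p}$, and the standard Gaussian tail (Mills-ratio) lower bound yields $G_j(\alpha)\geq 1-\Phi\big((\sqrt c-\sqrt{\tilde c})\sqrt{\log p}\big)\gtrsim (\log p)^{-1/2}\,p^{-(\sqrt c-\sqrt{\tilde c})^2/2}$. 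In either case $G_j(\alpha)\gtrsim (\log p)^{-1/2}\,p^{-(\sqrt c-\sqrt{\tilde c})^2/2}$, and multiplying the two coordinate bounds gives $G_j(\alpha)G_k(\alpha)\gtrsim (\log p)^{-1}\,p^{-(\sqrt c-\sqrt{\tilde c})^2}$ uniformly over $(j,k)\in\tilde{H}_{01}(\tilde c)$.

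Summing this uniform per-pair bound over the $\Omega(p^2)$ pairs in $\tilde{H}_{01}(\tilde c)$ gives $\sum_{(j,k)\in\tilde{H}_{01}} G_j(\alpha)G_k(\alpha)\gtrsim (\log p)^{-1}\,p^{2-(\sqrt c-\sqrt{\tilde c})^2}$, which is $\Omega\big(p^{2-(\sqrt c-\sqrt{\tilde c})^2-o(1)}\big)$ once the polylogarithmic prefactor is absorbed into an arbitrarily small power of $p$. To finish I would check the exponent against the target. Writing $M^*=\max\{3/2+\delta/(1+\delta),\,2-\kappa/2+\delta/(1+\delta)\}$, the constraint (\ref{eq_app_2}) is precisely $\sqrt c-\sqrt{\tilde c}<\sqrt{2-M^*}$, hence $(\sqrt c-\sqrt{\tilde c})^2<2-M^*$ and $2-(\sqrt c-\sqrt{\tilde c})^2>M^*$. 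Choosing any $\xi$ strictly between $M^*$ and $2-(\sqrt c-\sqrt{\tilde c})^2$ then certifies (\ref{eq_G4}).

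The computation is essentially routine; the one place requiring care is the case split in the single-factor bound combined with the sharp Gaussian tail asymptotics, since $\alpha$ and $\mu_j$ are both of order $\sqrt{\log p}$ and it is the difference $\sqrt c-\sqrt{\tilde c}$ of their normalized scales that governs the exponent. The main thing to get right is confirming that the polylogarithmic prefactor does not erode the strict inequality $\xi>M^*$, but the slack afforded by the freedom in choosing $\xi$ makes this harmless.
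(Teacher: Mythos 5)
Your overall strategy is the same as the paper's: restrict the sum to $\tilde H_{01}(\tilde c)$, lower bound each factor $G_j(\alpha)$ by a Gaussian tail evaluated at $(\sqrt{c}-\sqrt{\tilde c})\sqrt{\log p}$, sum over the $\Omega(p^2)$ pairs, and compare the resulting exponent with the threshold $M^*=\max\left\{\frac{3}{2}+\frac{\delta}{1+\delta},\,2-\frac{\kappa}{2}+\frac{\delta}{1+\delta}\right\}$. However, your final step contains a genuine error. From (\ref{eq_app_2}) you correctly deduce $\sqrt{c}-\sqrt{\tilde c}<\sqrt{2-M^*}$, but you then square to conclude $(\sqrt{c}-\sqrt{\tilde c})^2<2-M^*$. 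Squaring preserves the inequality only when $\sqrt{c}-\sqrt{\tilde c}\geq 0$; the lemma, however, covers every positive $c$ satisfying (\ref{eq_app_2}), including $c$ much smaller than $\tilde c$. Since $M^*\geq 3/2$, we have $2-M^*\leq 1/2$, so this failure mode is not exotic: take, say, $\tilde c=1$, $M^*=1.7$, $c=0.1$. Then (\ref{eq_app_2}) holds (as $0.1<(1+\sqrt{0.3})^2$), yet $(\sqrt{c}-\sqrt{\tilde c})^2\approx 0.47>0.3=2-M^*$, so $2-(\sqrt{c}-\sqrt{\tilde c})^2<M^*$ and your interval of admissible $\xi$ is empty. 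Your proof stalls there even though the conclusion is true—indeed trivially true—in exactly that regime.

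The root cause is your decision to merge the two cases of the per-pair bound into the single lossy estimate $G_j(\alpha)\gtrsim (\log p)^{-1/2}p^{-(\sqrt{c}-\sqrt{\tilde c})^2/2}$: when $c<\tilde c$ this discards the much stronger bound $G_j(\alpha)\geq 1/2$ that you had already established, and the loss is fatal to the exponent comparison. The repair is precisely the case split the paper performs. When $c\leq \tilde c$, every index $j$ appearing in a pair of $\tilde H_{01}(\tilde c)$ satisfies $|\sqrt{n}\mathrm{E}(U_{ij})|\geq \sqrt{\tilde c\log p}\geq \alpha$, so your first case applies to all pairs, the sum is at least $|\tilde H_{01}(\tilde c)|/4=\Omega(p^2)$, and one may take $\xi=2$; note that $M^*<2$ holds exactly because Assumption \ref{aspA6} requires $\kappa>2\delta/(1+\delta)$ and $\delta<1$, so (\ref{eq_G4}) follows. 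When $c>\tilde c$, your Mills-ratio argument and exponent check are correct as written and coincide with the paper's. With that case distinction reinstated, your proof is complete and is essentially the paper's proof.
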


\begin{proof}
To show (\ref{eq_G4}), by (\ref{eq_app_1}) and the definition of $\tilde H_{01}(\tilde c)$, we have
\begin{align*}
\sum_{(j,k)\in \tilde{H}_{01}}G_j(\alpha)G_k(\alpha)&\geq \sum_{(j,k)\in \tilde{H}_{01}(\tilde c)}G_j(\alpha)G_k(\alpha)\\
&\geq \sum_{(j,k)\in \tilde{H}_{01}(\tilde c)} \Big\{1-\Phi((\sqrt{c}-\sqrt{\tilde c})\sqrt{\log p})\Big\}^2.
\end{align*}
When $c\leq \tilde c$, it holds that $1-\Phi((\sqrt{c}-\sqrt{\tilde c})\sqrt{\log p})\geq 1/2$, and therefore 
$$
\sum_{(j,k)\in \tilde{H}_{01}}G_j(\alpha)G_k(\alpha)\geq |\tilde H_{01}(\tilde c)|/4=\Omega(p^2).
$$ 
Thus, (\ref{eq_G4}) holds with $\xi=2$. When $c> \tilde c$, by using the Gaussian tail inequality $1-\Phi\left(t\right) \geq \sqrt{\frac{2}{{\pi}}}\frac{1}{t+(t^2+4)^{1/2}} \exp \left(-t^{2} / 2\right)$ for any $t>0$, we can show that 
$$
\sum_{(j,k)\in \tilde{H}_{01}(\tilde c)}\Big\{1-\Phi((\sqrt{c}-\sqrt{\tilde c})\sqrt{\log p})\Big\}^2\geq \frac{C}{\log p}p^{2-(\sqrt{c}-\sqrt{\tilde c})^2},
$$
for some constant $C>0$. If (\ref{eq_app_2}) holds, then there exists a constant  $\xi$ such that 
$$
\max \left\{\frac{3}{2}+\frac{\delta}{1+\delta}, 2-\frac{\kappa}{2}+\frac{\delta}{1+\delta}\right\} <\xi<2-(\sqrt{c}-\sqrt{\tilde c})^2.
$$
Thus, $\frac{1}{\log p}p^{2-(\sqrt{c}-\sqrt{\tilde c})^2}=\Omega(p^\xi)$, which implies (\ref{eq_G4}). 
\end{proof}

\section{Proof of Theorem \ref{them1}}\label{app_main}
For notational simplicity, we use $C$ to denote a generic constant, whose value may change from line to line. Since 
$$
\sum_{\{1\leq j<k\leq p: |\hat{T}_{j}| \geq \alpha,|\hat{T}_{k}| \geq \alpha\}}\mathbbm{1}\left\{|\hat{T}_{jk}| \geq t\right\}=\sum_{1 \leq j < k \leq p}\mathbbm{1}\left\{|\hat{T}_{jk}| \geq t,|\hat{T}_{j}| \geq \alpha,|\hat{T}_{k}| \geq \alpha\right\},
$$
we can equivalently define FDP as
$$
\mathrm{FDP}=\frac{\sum_{(j,k)\in \emph{H}_0}\mathbbm{1}\left\{|\widehat{T}_{jk}| \geq \hat{t},|\widehat{T}_j| \geq \alpha,|\widehat{T}_k| \geq \alpha\right\}}{\max \left(\sum_{1 \leq j < k \leq p}\mathbbm{1}\left\{|\widehat{T}_{jk}| \geq \hat{t},|\widehat{T}_{j}| \geq \alpha,|\widehat{T}_{k}| \geq \alpha\right\},1 \right)}.
$$
Noting that
\begin{equation*}
\begin{split}
    \mathrm{FDP}&=\frac{G(\hat{t})M}{\max \left(\sum_{1 \leq j < k \leq p}\mathbbm{1}\left\{|\widehat{T}_{jk}| \geq \hat{t},|\widehat{T}_{j}| \geq \alpha,|\widehat{T}_{k}| \geq \alpha\right\},1 \right)}\\
    &\cdot \frac{\sum_{(j,k)\in \emph{H}_0}\mathbbm{1}\left\{|\widehat{T}_{jk}| \geq \hat{t},|\widehat{T}_j| \geq \alpha,|\widehat{T}_k| \geq \alpha\right\}}{NG(\hat{t})}\cdot \frac{N}{M},
\end{split}
\end{equation*}
to prove Theorem~\ref{them1}, it suffices to show
 \begin{equation*}
\sup \limits_{0 \leq t \leq \sqrt{2\log p}} \left|\frac{\sum_{(j,k)\in \emph{H}_0}\mathbbm{1}\left\{|\widehat{T}_{jk}| \geq t,|\widehat{T}_j| \geq \alpha,|\widehat{T}_k| \geq \alpha\right\}}{NG(t)}-1\right| \rightarrow 0
\end{equation*} 
in probability.
Let $0=t_0<t_1<...<t_m= \sqrt{2\log p}$ satisfy $t_i-t_{i-1}=z_p$ for $1 \leq i \leq m-1 $ and $t_m-t_{m-1}\leq z_p$. Hence $m\sim \sqrt{\log p} /z_p$, which will be specified later. For any $t_{j-1} \leq t \leq t_j$, we have
\begin{equation*}
\frac{\sum_{(j,k)\in \emph{H}_0}\mathbbm{1}\left\{|\widehat{T}_{jk}| \geq t,|\widehat{T}_j| \geq \alpha,|\widehat{T}_k| \geq \alpha\right\}}{NG(t)} \leq \frac{\sum_{(j,k)\in \emph{H}_0}\mathbbm{1}\left\{|\widehat{T}_{jk}| \geq t_{j-1},|\widehat{T}_j| \geq \alpha,|\widehat{T}_k| \geq \alpha\right\}}{NG(t_{j-1})} \cdot \frac{G(t_{j-1})}{G(t_j)}
\end{equation*}
and
\begin{equation*}
\frac{\sum_{(j,k)\in \emph{H}_0}\mathbbm{1}\left\{|\widehat{T}_{jk}| \geq t,|\widehat{T}_j| \geq \alpha,|\widehat{T}_k| \geq \alpha\right\}}{NG(t)} \geq \frac{\sum_{(j,k)\in \emph{H}_0}\mathbbm{1}\left\{|\widehat{T}_{jk}| \geq t_{j},|\widehat{T}_j| \geq \alpha,|\widehat{T}_k| \geq \alpha\right\}}{NG(t_{j})} \cdot \frac{G(t_{j})}{G(t_{j-1})}.
\end{equation*}
Let $z_p = o(1/\sqrt{\log p})$, then $G(t_i)/G(t_{i-1})\rightarrow 1$, and we only need to prove 
\begin{equation}
\label{main1}
\max \limits_{0 \leq r \leq m} \left|\frac{\sum_{(j,k)\in \emph{H}_0}\mathbbm{1}\left\{|\widehat{T}_{jk}| \geq t_r,|\widehat{T}_{j}| \geq \alpha,|\widehat{T}_{k}| \geq \alpha\right\}}{NG(t_r)}-1\right| \rightarrow 0
\end{equation}
in probability.

To show (\ref{main1}), we first prove the following two lemmas that give the nonasymptotic $L_1$-error bound for the MLE estimator. The proofs of the two lemmas are deferred to Appendix \ref{app_secondary}.

\begin{lemma} \label{glm_lemma1}
Let $\epsilon\coloneqq (\epsilon_1,...,\epsilon_n)^T$, and  $\emph{\textbf{X}}\coloneqq(X_1,...,X_n)^T$ be the $n \times s$ design matrix. Assume $\max_{1\leq j \leq s} \max_{1 \leq i \leq n}|X_{ij}| \leq K$, $\mathrm{E}(\epsilon_i^2) \leq \sigma^2$ and $\mathrm{E}(\epsilon_i^4)\leq \kappa^4$. For any $t>0$, define
\[
\lambda_{\epsilon}(t) =  4\sqrt{\frac{8t+6\log(2s)}{n}}K\sigma,
\]
then for all positive $M \leq 1$ and $\beta, \beta_0 \in R^s$, $s \geq 2$ we have
\begin{equation*}
   \mathrm{P}\left(\sup\limits_{\|\beta-\beta_0\|_1\leq M}\left|\epsilon^T\emph{\textbf{X}}(\beta -\beta_0)\right|/n>\lambda_{\epsilon}(t)M\right) \leq 3\exp(-t)+\frac{3\kappa^4}{n\sigma^4}.
\end{equation*}
\end{lemma}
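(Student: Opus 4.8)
The plan is to reduce the uniform statement over the $\ell_1$-ball to a \emph{dimension-free} tail bound for a maximum over the $s$ coordinates, and then to produce the two terms in the bound by two separate mechanisms: an exponential maximal inequality built on the Nemirovski moment inequality (yielding $3e^{-t}$), and a crude fourth-moment bound that strips off the heavy tail of the errors (yielding $3\kappa^4/(n\sigma^4)$). First I would invoke $\ell_1$--$\ell_\infty$ duality. Writing $\mathbf{X}^{T}\epsilon=\sum_{i=1}^{n}\epsilon_i X_i$ with $X_i$ the $i$-th row of $\mathbf{X}$, Hölder's inequality gives $|\epsilon^{T}\mathbf{X}(\beta-\beta_0)|=|(\mathbf{X}^{T}\epsilon)^{T}(\beta-\beta_0)|\le\|\mathbf{X}^{T}\epsilon\|_\infty\,\|\beta-\beta_0\|_1$, and the bound is attained over the ball, so
\[
\sup_{\|\beta-\beta_0\|_1\le M}\frac{|\epsilon^{T}\mathbf{X}(\beta-\beta_0)|}{n}=M\max_{1\le j\le s}\Big|\frac{1}{n}\sum_{i=1}^{n}X_{ij}\epsilon_i\Big|.
\]
Hence, for every $M>0$, the event under consideration coincides with $\{\max_{1\le j\le s}|n^{-1}\sum_i X_{ij}\epsilon_i|>\lambda_\epsilon(t)\}$, which no longer involves $M$; no covering argument over the ball is needed. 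The problem is thus reduced to bounding $\mathrm{P}(\max_{1\le j\le s}|n^{-1}\sum_i X_{ij}\epsilon_i|>\lambda_\epsilon(t))$, where the summands $X_{ij}\epsilon_i$ are independent and centered (in the application $\mathrm{E}(X_{ij}\epsilon_i)=0$ is the score/orthogonality identity defining the least false value).

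Next I would handle the two terms separately. For the heavy tail, I would control the empirical second moment: since $\mathrm{E}(n^{-1}\sum_i\epsilon_i^2)\le\sigma^2$ and $\mathrm{Var}(n^{-1}\sum_i\epsilon_i^2)\le n^{-1}\kappa^4$, Chebyshev's inequality gives $\mathrm{P}(n^{-1}\sum_i\epsilon_i^2>2\sigma^2)\le\kappa^4/(n\sigma^4)$, which is the source of the clean $3\kappa^4/(n\sigma^4)$ term (equivalently, one may split $\epsilon_i$ at a truncation level and bound $\mathrm{P}(\max_i|\epsilon_i|>D)\le n\kappa^4/D^4$). On the complementary good event the ``effective variance'' $n^{-1}\sum_i X_{ij}^2\epsilon_i^2\le 2K^2\sigma^2$ is under control, and I would control the maximum by an exponential maximal inequality. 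The Nemirovski moment inequality applied to $Z_i=n^{-1}X_i\epsilon_i$ yields
\[
\mathrm{E}\max_{1\le j\le s}\Big(\frac{1}{n}\sum_{i=1}^{n}X_{ij}\epsilon_i\Big)^2\le 8\log(2s)\,\mathrm{E}\max_{1\le j\le s}\frac{1}{n^2}\sum_{i=1}^{n}X_{ij}^2\epsilon_i^2\le \frac{8\log(2s)K^2\sigma^2}{n},
\]
so the maximum centers at the scale $K\sigma\sqrt{\log(2s)/n}$; this explains the $\log(2s)$ appearing under the root in $\lambda_\epsilon(t)$ and the constant $8$. A Bernstein- or bounded-difference-type deviation of the maximum around this mean then supplies the $e^{-t}$ factor, the free parameter $t$ entering through the $8t$ under the root.

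The main obstacle is precisely that the errors are assumed to have only four finite moments, so there is no sub-Gaussian tail to exploit and the maximum over $s$ coordinates cannot be controlled by a naive union bound at the Gaussian scale $\sqrt{(t+\log s)/n}$. The delicate step is reconciling the heavy-tailed increments with a genuinely exponential deviation: the truncation/bad event must be large enough that its probability stays at the target order $\kappa^4/(n\sigma^4)$, yet the recentered bounded part (whose recentering bias is negligible at the scale $\lambda_\epsilon(t)$) must still concentrate in the variance-dominated regime rather than the coarser range-dominated regime. Tracking how the Nemirovski factor $8\log(2s)$, the Bernstein variance and range contributions, and the truncation bias combine is exactly what produces the constants $4$, $8t$, $6\log(2s)$ in $\lambda_\epsilon(t)$ and the factor $3$ multiplying both terms; checking that these pieces fit at the claimed rate simultaneously for all $t>0$, $M\le 1$ and $2\le s$ is the crux of the argument.
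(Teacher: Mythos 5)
Your opening moves are sound and in fact match ingredients of the paper's proof: the $\ell_1$--$\ell_\infty$ duality reduction to $\max_{1\le j\le s}|n^{-1}\sum_i X_{ij}\epsilon_i|$ is an exact identity (so $M$ drops out), and the Chebyshev bound $\mathrm{P}(n^{-1}\sum_i\epsilon_i^2>2\sigma^2)\le \kappa^4/(n\sigma^4)$ is precisely how the paper generates the second term. The gap is the step that is supposed to produce $e^{-t}$: ``a Bernstein- or bounded-difference-type deviation of the maximum around this mean.'' With only four moments on $\epsilon_i$ the increments $X_{ij}\epsilon_i$ are unbounded, so neither McDiarmid nor Bernstein (nor Talagrand/Bousquet-type inequalities, which need bounded envelopes) applies to the raw process; and you cannot instead apply them ``on the good event'' $\{n^{-1}\sum_i\epsilon_i^2\le 2\sigma^2\}$, because restricting to that event destroys the independence of the summands. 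The truncation route you mention also fails quantitatively: to keep the bad-event probability $\mathrm{P}(\max_i|\epsilon_i|>D)\le n\kappa^4/D^4$ at the target order $\kappa^4/(n\sigma^4)$ you are forced to take $D\gtrsim\sqrt{n}\,\sigma$, and then at the threshold $x\asymp K\sigma\sqrt{n(t+\log s)}$ the range term $KDx/3\asymp nK^2\sigma^2\sqrt{t+\log s}$ dominates Bernstein's variance term, so the exponent degenerates to $c\sqrt{t+\log s}$; a union bound over $s$ coordinates then gives $s\exp(-c\sqrt{t+\log s})$, which does not deliver $3e^{-t}$ once $t$ is large relative to $\log(2s)$ (but still below the level where the polynomial floor takes over). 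Indeed, unconditionally the deviation of even a single coordinate has only a polynomial tail in $t$, so the concentration step you propose cannot hold as stated; you correctly identify this tension as ``the crux,'' but the proposal does not resolve it, and the mechanisms named would not resolve it.

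The missing idea is symmetrization, which is how the paper escapes the heavy tails. Introduce Rademacher signs $\gamma_i$ independent of $(\epsilon,\mathbf{X})$ and work with $\mathbf{Z}_\gamma=\sup_{\|\beta-\beta_0\|_1\le M}|\sum_i\gamma_i\epsilon_iX_i^T(\beta-\beta_0)|/n$. Conditionally on $(\epsilon,\mathbf{X})$ the only randomness is the bounded signs, with increments $|z_i|\le MK|\epsilon_i|$ now fixed constants, so Massart's inequality (equivalently, bounded differences) applies conditionally and gives genuinely sub-Gaussian deviations in $t$ with variance proxy $M^2K^2\sum_i\epsilon_i^2/n^2$, while the conditional mean is bounded via the Nemirovski/Rademacher maximal inequality by $MK\sqrt{8\log(2s)/n}\,(n^{-1}\sum_i\epsilon_i^2)^{1/2}$. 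The heavy tails of $\epsilon$ then enter only through the single $t$-free event $\{n^{-1}\sum_i\epsilon_i^2>2\sigma^2\}$, handled once by Chebyshev; finally the Gin\'e--Zinn desymmetrization inequality $\mathrm{P}(\mathbf{Z}\ge u)\le 2\,\mathrm{P}(\mathbf{Z}_\gamma\ge u/4)/(1-4\sigma^2M^2K^2/(nu^2))$ transfers the bound back to the original process (this step, like your appeal to Nemirovski, requires the centering $\mathrm{E}[\epsilon_iX_{ij}]=0$, which holds in the paper's application because $\beta_0$ is the least false value). It is this conditioning-on-the-data structure -- not truncation plus a union bound -- that makes the bound $3e^{-t}+3\kappa^4/(n\sigma^4)$ attainable uniformly in $t>0$.
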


\begin{lemma}\label{glm_lemma2}
 Let $\epsilon\coloneqq (\epsilon_1,...,\epsilon_n)^T$, $\emph{\textbf{X}}\coloneqq(X_1,...,X_n)^T$ be the $n \times s$ design matrix, and  $\hat{\beta}$ be the MLE estimator for (\ref{glm0}). Suppose all conditions in Lemma~\ref{glm_lemma1} hold. Assume there exists constants $K_0$, $C_b$ and $\tau > 0$, such that $\max\limits_{1\leq i \leq n}|X_i^T\beta_0| \leq K_0$, $\tau = \lambda_{\mathrm{min}}(\emph{\textbf{X}}^T\emph{\textbf{X}})/n > 0$ and for all $|z|\leq K + K_0$, $1/C_b\leq b^{\prime\prime}(z)\leq C_b$. Then with probability at least $1-(3\exp(-t)+\frac{3\kappa^4}{n\sigma^4})$,  it holds that 
 \[
 \|\hat{\beta}-\beta_0\|_1 \leq \frac{4\lambda_{\epsilon}(t)sC_b}{\tau},
 \]
 where $\lambda_{\epsilon}(t)$ is defined in Lemma~\ref{glm_lemma1}.
\end{lemma}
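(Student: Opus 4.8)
The plan is to exploit the convexity of the negative log-likelihood together with a local strong-convexity (margin) argument. Write the empirical negative log-likelihood as
\[
L_n(\beta)=\frac{1}{n}\sum_{i=1}^n\bigl\{b(X_i^T\beta)-Y_i X_i^T\beta\bigr\},
\]
which is convex in $\beta$ because $b$ is convex (its second derivative is a variance, hence nonnegative), and which is minimized at $\hat\beta$, so that $L_n(\hat\beta)\le L_n(\beta_0)$. Setting $M=4C_b s\,\lambda_{\epsilon}(t)/\tau$ (which lies in $(0,1]$ in the regime of interest, since $\lambda_{\epsilon}(t)\to 0$), I would first condition on the high-probability event supplied by Lemma~\ref{glm_lemma1}, on which $\sup_{\|\beta-\beta_0\|_1\le M}|\epsilon^T\mathbf{X}(\beta-\beta_0)|/n\le \lambda_{\epsilon}(t)M$; this event has exactly the claimed probability $1-(3\exp(-t)+3\kappa^4/(n\sigma^4))$.

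Next I would Taylor-expand $L_n$ to second order. Using $Y_i-b'(X_i^T\beta_0)=\epsilon_i$, one gets, for any $\beta$,
\[
L_n(\beta)-L_n(\beta_0)=-\frac{1}{n}\epsilon^T\mathbf{X}(\beta-\beta_0)+\frac{1}{2n}\sum_{i=1}^n b''(\tilde\theta_i)\bigl\{X_i^T(\beta-\beta_0)\bigr\}^2,
\]
where $\tilde\theta_i$ lies between $X_i^T\beta$ and $X_i^T\beta_0$. The crucial observation is that when $\|\beta-\beta_0\|_1\le M\le 1$, the boundedness $|X_i^T\beta_0|\le K_0$ together with $|X_i^T(\beta-\beta_0)|\le K\|\beta-\beta_0\|_1\le K$ forces $|\tilde\theta_i|\le K+K_0$, so the margin condition $b''(\tilde\theta_i)\ge 1/C_b$ applies. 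Combined with $\lambda_{\min}(\mathbf{X}^T\mathbf{X})/n=\tau$, this yields the quadratic lower bound $\frac{1}{2n}\sum_i b''(\tilde\theta_i)\{X_i^T(\beta-\beta_0)\}^2\ge \frac{\tau}{2C_b}\|\beta-\beta_0\|_2^2$, while the empirical-process term is controlled by $\lambda_{\epsilon}(t)\|\beta-\beta_0\|_1$ on the good event.

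The final step is a convexity/rescaling argument to promote this local control into a global bound on $\hat\beta$. Suppose toward contradiction that $\|\hat\beta-\beta_0\|_1>M$; then there is $u^*\in(0,1)$ with $\beta^*:=\beta_0+u^*(\hat\beta-\beta_0)$ satisfying $\|\beta^*-\beta_0\|_1=M$, and by convexity $L_n(\beta^*)\le (1-u^*)L_n(\beta_0)+u^*L_n(\hat\beta)\le L_n(\beta_0)$. Applying the decomposition at $\beta^*$, which lies in the admissible ball, gives $0\ge L_n(\beta^*)-L_n(\beta_0)\ge -\lambda_{\epsilon}(t)M+\frac{\tau}{2C_b}\|\beta^*-\beta_0\|_2^2$. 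Converting $\ell_2$ to $\ell_1$ via $\|\beta^*-\beta_0\|_2^2\ge \|\beta^*-\beta_0\|_1^2/s=M^2/s$ produces $\frac{\tau}{2C_b}M^2/s\le \lambda_{\epsilon}(t)M$, i.e. $M\le 2C_b s\,\lambda_{\epsilon}(t)/\tau$, contradicting the choice $M=4C_b s\,\lambda_{\epsilon}(t)/\tau$. Hence $\|\hat\beta-\beta_0\|_1\le M$, as claimed, and the factor $4$ (rather than the bare $2$) is precisely the slack that drives the contradiction.

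I expect the main obstacle to be the localization: the curvature lower bound is available only inside the unit $\ell_1$-ball, where the linear predictors stay in $[-(K+K_0),K+K_0]$, so one cannot feed $\hat\beta$ directly into the quadratic bound since it could a priori be far from $\beta_0$. The convexity-based rescaling to the ball boundary is exactly what circumvents this, and the $\ell_1$-to-$\ell_2$ conversion, which injects the dimension factor $s$, is the second place where care is needed to recover the stated constant.
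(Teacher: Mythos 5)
Your proof is correct and follows essentially the same approach as the paper's --- conditioning on the event of Lemma~\ref{glm_lemma1}, exploiting convexity of the negative log-likelihood to localize to the $\ell_1$-ball of radius $M=4\lambda_{\epsilon}(t)sC_b/\tau$, and combining $b''\geq 1/C_b$ (valid on the ball since $|X_i^T\beta_0|\leq K_0$ and $|X_i^T(\beta-\beta_0)|\leq KM\leq K$), $\lambda_{\mathrm{min}}(\mathbf{X}^T\mathbf{X})/n\geq\tau$, and $\|v\|_1^2\leq s\|v\|_2^2$ into a contradiction --- with the only structural difference being that you rescale $\hat\beta$ to the exact boundary of the ball, whereas the paper interpolates with coefficient $M/(M+\|\hat{\beta}-\beta_0\|_1)$ and targets the threshold $M/2$. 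One genuine point in your favor: you retain the factor $1/2$ in the second-order Taylor term, which the paper's displayed inequality drops; with that factor restored, the paper's own chain of constants yields only $\|\tilde{\beta}_t-\beta_0\|_1<M$ rather than $<M/2$ (so its contradiction does not close as written and would require the constant $8$ in place of $4$), whereas your boundary-rescaling argument proves the lemma with the stated constant $4$.
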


By using the above two lemmas, we can show the following lemma that characterizes the difference between the test statistic $\hat T_j$ (and $\hat T_{jk}$) and its linear representation $U_j$ in (\ref{eq_Uj}) (and $U_{ij}$) in a truncated relative error. 

\begin{lemma} \label{UTlemma}
Suppose Assumptions~\ref{aspA1}-\ref{aspA4} hold, then we have for any constant $c>0$,
\[
 \max\limits_{1\leq j \leq p}\left|\frac{\widehat{T}_j-U_j}{|U_j|\vee c}\right|= 
 O_p(\frac{\log p}{\sqrt{n}}),~~~ \max\limits_{1\leq j< k \leq p}\left|\frac{\widehat{T}_{jk}-U_{jk}}{|U_{jk}|\vee c}\right|= 
 O_p(\frac{\log p}{\sqrt{n}}).
\]
\end{lemma}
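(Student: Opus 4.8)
The plan is to isolate the two distinct sources of discrepancy between $\hat T_j$ and its linear proxy $U_j$ --- the estimation error of the sandwich variance and the remainder in the asymptotic-linearity expansion of the MLE --- and to show that the truncation by $|U_j|\vee c$ neutralizes the former while the latter is uniformly of order $\log p/\sqrt n$. Write $V_j^2=\mathrm{\textbf{cov}}(u(\beta_0^j,\bX_j^{\mathrm{s1}},Y))_{(2,2)}$ for the population variance appearing in $U_j$ and $\hat V_j^2=\hat{\mathrm{\textbf{cov}}}(\hat\beta^j)_{(2,2)}$ for the sandwich estimate in $\hat T_j$. Using $\mathrm{E}[\Psi_n(\beta_0^j)]=0$ (the defining property of the least false value) one checks that $U_j=\{(u)_{(2)}+\sqrt n(\beta_0^j)_{(2)}\}/V_j$, where $u=-[\mathrm{E}(b''(\cdots)(\bX^{\mathrm{s1}})^{\otimes2})]^{-1}\sqrt n\Psi_n(\beta_0^j)$ is the influence part. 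Setting $\Delta_j:=\sqrt n(\hat\beta^j-\beta_0^j)_{(2)}-(u)_{(2)}$ for the linearization remainder, the numerator $\sqrt n\hat\beta^j_{(2)}$ equals $V_jU_j+\Delta_j$, which yields the exact decomposition
\[
\hat T_j-U_j=U_j\,\frac{V_j-\hat V_j}{\hat V_j}+\frac{\Delta_j}{\hat V_j}.
\]

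I would first control the linearization remainder. By Lemmas~\ref{glm_lemma1}--\ref{glm_lemma2} applied with $s=2$ and $t\asymp\log p$, a union bound over the $p$ stage-1 fits yields $\max_j\|\hat\beta^j-\beta_0^j\|_1=O_p(\sqrt{\log p/n})$ (the moment order in \ref{aspA2} is exactly what makes the polynomial tail term survive the union bound when $p\le n^r$). Writing $\sqrt n(\hat\beta^j-\beta_0^j)=-[\nabla\Psi_n(\bar\beta^j)]^{-1}\sqrt n\Psi_n(\beta_0^j)$ by the mean value theorem, the second coordinate of $\Delta_j$ equals that of $([\nabla\Psi_n(\bar\beta^j)]^{-1}-[\mathrm{E}(b''(\cdots))]^{-1})\sqrt n\Psi_n(\beta_0^j)$. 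The inverse-Hessian difference is $O_p(\sqrt{\log p/n})$ uniformly over $j$, using the Lipschitz bound from the bounded $b'''$ in \ref{aspA3}, the well-conditioning in \ref{aspA4}, the boundedness in \ref{aspA1}, and a concentration bound for the sample Hessian; while a maximal inequality under \ref{aspA2} gives $\max_j\|\sqrt n\Psi_n(\beta_0^j)\|_\infty=O_p(\sqrt{\log p})$. Multiplying, $\max_j|\Delta_j|=O_p(\log p/\sqrt n)$.

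Next I would show the relative variance error is negligible by comparison. Expanding the sandwich form, $\hat V_j^2-V_j^2$ splits into (i) the deviation of the empirical Hessian and empirical score covariance from their population values at $\beta_0^j$, controlled at rate $\sqrt{\log p/n}$ by Bernstein-type maximal inequalities under \ref{aspA1}--\ref{aspA3}, and (ii) the plug-in error from replacing $\beta_0^j$ by $\hat\beta^j$, which is $O_p(\|\hat\beta^j-\beta_0^j\|_1)=O_p(\sqrt{\log p/n})$ via the bounded derivatives; since $V_j^2>K_1$ by \ref{aspA3}, this gives $\max_j|(V_j-\hat V_j)/\hat V_j|=O_p(\sqrt{\log p/n})$ and $\hat V_j$ bounded below with high probability. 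The decisive step is the truncation: because $|U_j|/(|U_j|\vee c)\le1$,
\[
\max_j\left|\frac{\hat T_j-U_j}{|U_j|\vee c}\right|\le\max_j\left|\frac{V_j-\hat V_j}{\hat V_j}\right|+\frac1c\max_j\frac{|\Delta_j|}{\hat V_j}=O_p\!\Big(\sqrt{\tfrac{\log p}{n}}\Big)+O_p\!\Big(\tfrac{\log p}{\sqrt n}\Big)=O_p\!\Big(\tfrac{\log p}{\sqrt n}\Big).
\]

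The main obstacle --- and the reason the statement is phrased as a \emph{truncated relative} error rather than an absolute one --- is precisely the first term: $U_j$ carries the non-vanishing deterministic component $\sqrt n(\beta_0^j)_{(2)}/V_j=\sqrt n\,\mathrm{E}(U_{ij})$, which need not be zero under $H_{0jk}$ since the main effect may be present. In absolute terms $U_j\cdot(V_j-\hat V_j)/\hat V_j$ is of order $|(\beta_0^j)_{(2)}|\sqrt{\log p}$ and does not vanish; dividing by $|U_j|\vee c$ caps the offending $|U_j|$ factor at $1$ and is what renders the variance error harmless. The remaining difficulty is the uniform linearization bound, namely pushing the maximal inequality through the $p$ (and later $p^2$) coordinates with only the finite moments of \ref{aspA2}. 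For stage 2 I would repeat the argument verbatim with $s=4$, the covariate vector $\bX_{ijk}^{\mathrm{s2}}$, and a union bound over the $O(p^2)$ pairs; since $\log(p^2)=2\log p$, all rates are unchanged and the same $O_p(\log p/\sqrt n)$ bound for $\max_{j<k}|(\hat T_{jk}-U_{jk})/(|U_{jk}|\vee c)|$ follows.
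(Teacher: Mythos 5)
Your proposal is correct and takes essentially the same route as the paper's own proof: the identical decomposition of $\hat T_j-U_j$ into a variance-estimation term (rendered harmless by $|U_j|/(|U_j|\vee c)\le 1$) and a linearization remainder (rendered harmless by $|U_j|\vee c\ge c$), with the same uniform $O_p(\sqrt{\log p/n})$ and $O_p(\log p/\sqrt{n})$ bounds obtained from Lemmas~\ref{glm_lemma1}--\ref{glm_lemma2}, union bounds over $j$, and Nemirovski-type moment inequalities, and the same $s=4$, $O(p^2)$-pair repetition for stage 2.
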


To proceed, for any $1 \leq j<k \leq p$, we can write 
\begin{align*}
    \begin{split}
        \mathbbm{1}\left\{|\widehat{T}_{jk}| \geq t\right\}&=
         \mathbbm{1}\left\{|U_{jk}+\widehat{T}_{jk}-U_{jk}| \geq t\right\}\\
         &= \mathbbm{1}\left\{U_{jk} \geq t -(\widehat{T}_{jk}-U_{jk})\right\}
         +\mathbbm{1}\left\{U_{jk} \leq -t -(\widehat{T}_{jk}-U_{jk})\right\}.
    \end{split}
\end{align*}
If $|U_j| >c$, where $c$ is given by Lemma~\ref{UTlemma}, we have
\[
|\hat{T}_{jk} - U_{jk}| = |U_{jk}|O_p\left(\frac{\log p}{\sqrt{n}}\right),
\]
and it yields
\[
 \mathbbm{1}\left\{|\widehat{T}_{jk}| \geq t\right\}
 = \mathbbm{1}\left\{U_{jk} \geq t\left(1+O_p\left(\frac{\log p}{\sqrt{n}}\right)\right)\right\}
         +\mathbbm{1}\left\{U_{jk}\leq -t\left(1+O_p\left(\frac{\log p}{\sqrt{n}}\right)\right)\right\}.
\]
Hence for any $0 \leq t \leq \sqrt{2\log p}$, there exist $\tilde{t}_1,\tilde{t}_2 = t +o_p(\frac{1}{\sqrt{\log p}})$  such that
\begin{equation}
\label{two-side}
     \mathbbm{1}\left\{|\widehat{U}_{jk}| \geq \tilde{t}_1\right\}
 \leq
 \mathbbm{1}\left\{|\widehat{T}_{jk}| \geq t\right\}
 \leq
  \mathbbm{1}\left\{|\widehat{U}_{jk}| \geq \tilde{t}_2\right\}.
\end{equation}
Similarly, if $|U_j| \leq c$, we have
\[
|\hat{T}_{jk} - U_{jk}| = O_p\left(\frac{\log p}{\sqrt{n}}\right),
\]
and (\ref{two-side}) still holds. Similarly, (\ref{two-side}) also holds for $\widehat{T}_{j}$ and $\widehat{T}_{k}$. 

Recall that the goal is to show (\ref{main1}). From (\ref{two-side}), it suffices to show
\begin{equation}
\label{rednew}
    \max \limits_{0 \leq r \leq m} \left|\frac{\sum_{(j,k)\in \emph{H}_0}\mathbbm{1}\left\{|U_{jk}| \geq t_r,|U_{j}| \geq \alpha,|U_{k}| \geq \alpha \right\}}{\sum_{(j,k)\in\emph{H}_0} \mathbbm{1} \left\{|U_{j}| \geq \alpha,|U_{k}| \geq \alpha \right\}G(t_r)
}
-1\right| 
\rightarrow 0
\end{equation}
in probability. Define 
$$
J=\sum_{(j,k)\in\emph{H}_0} \mathbbm{1} \left\{|U_{j}| \geq \alpha,|U_{k}| \geq \alpha \right\}.
$$
Write
\begin{align*}
 & \frac{\sum_{(j,k)\in \emph{H}_0}\mathbbm{1}\left\{|U_{jk}| \geq t_r,|U_{j}| \geq \alpha,|U_{k}| \geq \alpha \right\}}{\sum_{(j,k)\in\emph{H}_0} \mathbbm{1} \left\{|U_{j}| \geq \alpha,|U_{k}| \geq \alpha \right\}G(t_r)
}-1\\
&= 
\frac{\sum_{(j,k)\in \emph{H}_0}\mathbbm{1}\left\{|U_{jk}| \geq t_r,|U_{j}| \geq \alpha,|U_{k}| \geq \alpha \right\}-G(t_r)J}{\mathrm{E}(J)G(t_r)
}\cdot \frac{\mathrm{E}(J)}{J}.
\end{align*}
For any $0 \leq r \leq m$, denote 
\[
\begin{split}
J_{r,1}&=\left|\frac{\sum_{(j,k)\in \emph{H}_0}\left(\mathbbm{1}\left\{|U_{jk}| \geq t_r,|U_{j}| \geq \alpha,|U_{k}| \geq \alpha\right\}-\mathrm{P}\left\{|U_{jk}| \geq t_r,|U_{j}| \geq \alpha,|U_{k}| \geq \alpha\right\}\right)}{\mathrm{E}(J)G(t_r)}\right|,\\
 J_{r,2}&=\left|\frac{\sum_{(j,k)\in \emph{H}_0}\mathrm{P}\left\{|U_{jk}| \geq t_r,|U_{j}| \geq \alpha,|U_{k}| \geq \alpha\right\}-G(t_r)J}{\mathrm{E}(J)G(t_r)}\right|,
	\end{split}
\]
Then we have
\begin{equation}\label{eq_mainproof_1}
   \left|\frac{\sum_{(j,k)\in \emph{H}_0}\mathbbm{1}\left\{|U_{jk}| \geq t_r,|U_{j}| \geq \alpha,|U_{k}| \geq \alpha \right\}}{\sum_{(j,k)\in\emph{H}_0} \mathbbm{1} \left\{|U_{j}| \geq \alpha,|U_{k}| \geq \alpha \right\}G(t_r)
}
-1\right|\leq (J_{r,1}+J_{r,2})\cdot \frac{\mathrm{E}(J)}{J}. 
\end{equation}
Using the following sequence of lemmas, we will show that 
\begin{equation}\label{eq_mainproof_2}
\max \limits_{0 \leq r \leq m}J_{r,1} \rightarrow 0,~~~\max \limits_{0 \leq r \leq m}J_{r,2} \rightarrow 0
\end{equation}
in probability separately. Also, we will show that for some constant $C$, 
\begin{equation}\label{eq_mainproof_3}
\frac{\mathrm{E}(J)}{J}=O_p(1).
\end{equation}
Together with (\ref{eq_mainproof_1}), we get (\ref{rednew}), and therefore we complete the proof.

To prove (\ref{eq_mainproof_2}) and (\ref{eq_mainproof_3}), we need the following Lemma \ref{lemma4}, which is the Cramér type moderate deviation bound in our setting. 
\begin{lemma}
\label{lemma4}
Suppose Assumption \ref{aspA2} holds. Recall that
\begin{equation*}
    \tilde{U}_{ijkml} = (U_{ijk}, S_{ij}, S_{ik}, U_{iml}, S_{im}, S_{il})\in\RR^6.
\end{equation*}
For any $1 \leq j < k \leq p$, $1 \leq m < l \leq p$, where $(j,k)\in\emph{H}_0$, $(m,l)\in\emph{H}_0$ and for some constant $\gamma >0$, assume  $\|\textbf{cov}(\tilde{U}_{ijkml})- \textbf{I}\|_{\infty} \leq C(\log p)^{-2-\gamma}$. Denote
\[
\mathrm{P}_{jk}=\mathrm{P}\left(|U_{jk}| \geq t,|U_{j}| \geq \alpha,|U_{k}| \geq \alpha\right),
\]
\[
\mathrm{P}_{jkml}=\mathrm{P}\left(|U_{jk}| \geq t,|U_{j}| \geq \alpha,|U_{k}| \geq \alpha, |U_{ml}| \geq t,|U_{m}| \geq \alpha,|U_{l}| \geq \alpha\right).
\]
Then we have
\begin{equation}\label{eq_lemma4_1}
    \sup\limits_{0 \leq t \leq  \sqrt{2\log p}}\left|\frac{\mathrm{P}_{jk}}{G(t)G_{j}(\alpha)G_k(\alpha)}-1\right| \leq C (\log p)^{-1-\gamma_1},
\end{equation}
and
\begin{equation}\label{eq_lemma4_2}
    \sup\limits_{0 \leq t \leq  \sqrt{2\log p}}\left|\frac{\mathrm{P}_{jkml}}{G^2(t)G_{j}(\alpha)G_k(\alpha)G_{m}(\alpha)G_l(\alpha)}-1\right| \leq C (\log p)^{-1-\gamma_1},
\end{equation}
where $\gamma_1 = \min\{\gamma, 1/2\}$ and $G_j(\alpha) \coloneqq \mathrm{P}\left(|\mathcal{N}(0,1)+\sqrt{n}\mathrm{E}(U_{ij})|\geq \alpha\right)$.
\end{lemma}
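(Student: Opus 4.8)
The plan is to reduce both bounds to a multivariate Cramér type moderate deviation for the normalized i.i.d.\ sum $n^{-1/2}\sum_i \tilde U_{ijkml}$ and then exploit the near-identity covariance to factorize the joint tail. First I would rewrite the stage-1 events in centered form: set $V_j=n^{-1/2}\sum_{i=1}^n S_{ij}$ and $\mu_j=\sqrt n\,\mathrm{E}(U_{ij})$, so that $U_j=V_j+\mu_j$ and $\{|U_j|\ge\alpha\}=\{|V_j+\mu_j|\ge\alpha\}$. Because $(j,k)\in H_0$ forces $(\beta_0^{jk})_{(4)}=0$ and hence $\mathrm{E}(U_{ijk})=0$, the vector $(U_{jk},V_j,V_k)=n^{-1/2}\sum_i(U_{ijk},S_{ij},S_{ik})$ is a normalized sum of i.i.d.\ mean-zero vectors whose covariance is the relevant block of $\mathrm{cov}(\tilde U_{ijkml})$, which by hypothesis is within $C(\log p)^{-2-\gamma}$ of the identity in $\|\cdot\|_\infty$; the same holds for the six-dimensional vector in the $\mathrm{P}_{jkml}$ case. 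With this reformulation, $G_j(\alpha)=\mathrm{P}(|\mathcal N(0,1)+\mu_j|\ge\alpha)$ is exactly the Gaussian survival probability of the shifted coordinate $V_j+\mu_j$, and $G(t)=2\bar\Phi(t)$ is that of the mean-zero coordinate $U_{jk}$.

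Next I would split each two-sided event into one-sided pieces, writing $\{|U_{jk}|\ge t\}=\{U_{jk}\ge t\}\cup\{U_{jk}\le -t\}$ and $\{|V_j+\mu_j|\ge\alpha\}=\{V_j\ge\alpha-\mu_j\}\cup\{V_j\le-\alpha-\mu_j\}$, so that $\mathrm{P}_{jk}$ (resp.\ $\mathrm{P}_{jkml}$) becomes a sum of $2^3$ (resp.\ $2^6$) orthant tail probabilities of the centered sum, each of the form $\mathrm{P}(\pm U_{jk}\ge t,\ \pm V_j\ge\cdots,\ \pm V_k\ge\cdots)$. The engine of the proof is a uniform multivariate moderate deviation statement: for a normalized sum of i.i.d.\ mean-zero vectors with covariance $\Sigma$ and thresholds $a_1,\dots,a_d$ in the range $|a_\ell|\le\sqrt{2\log p}$, the orthant probability equals the corresponding Gaussian orthant probability $\bar\Phi_\Sigma(a_1,\dots,a_d)$ up to a relative factor $1+o(1)$, uniformly. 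I would establish this by exponential tilting (the conjugate-distribution method): tilt the law of $(U_{ijk},S_{ij},S_{ik})$ so that its mean sits at the threshold, reducing the tail to a local Berry--Esseen estimate near the new mean and then reweighting; Assumption~\ref{aspA2} supplies exactly the moments $\mathrm{E}|\epsilon_{ij}|^{4\vee(2r+2+\epsilon)}<\infty$ needed to push this uniformity out to $\sqrt{2\log p}$ given $p\le n^r$.

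With each orthant probability replaced by its Gaussian counterpart, I would then pass from the joint Gaussian orthant probability to the product of its one-dimensional marginals. Writing $\Sigma=\mathrm{I}+E$ with $\|E\|_\infty\le C(\log p)^{-2-\gamma}$, a first-order expansion of $\bar\Phi_\Sigma$ around the independent case contributes corrections of order $\|E\|_\infty\cdot\max_\ell a_\ell^2\le C(\log p)^{-2-\gamma}\cdot 2\log p=C(\log p)^{-1-\gamma}$, while the residual error of the moderate-deviation/Gaussian comparison itself contributes a floor of order $(\log p)^{-3/2}$, so the two sources combine to $\gamma_1=\min\{\gamma,1/2\}$. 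Recombining the $2^3$ orthant terms collapses the $U_{jk}$ directions into $2\bar\Phi(t)=G(t)$ and the $V_j+\mu_j$ directions into $\bar\Phi(\alpha-\mu_j)+\bar\Phi(\alpha+\mu_j)=G_j(\alpha)$, yielding $\mathrm{P}_{jk}=G(t)G_j(\alpha)G_k(\alpha)(1+O((\log p)^{-1-\gamma_1}))$ uniformly in $t$; the $2^6$-fold recombination gives $G^2(t)G_j(\alpha)G_k(\alpha)G_m(\alpha)G_l(\alpha)$ in the same way, proving \eqref{eq_lemma4_1} and \eqref{eq_lemma4_2}.

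The hard part will be the uniform multivariate moderate deviation up to the high threshold $\sqrt{2\log p}$ under only the polynomial moment condition~\ref{aspA2} (rather than an exponential-moment Cramér condition), carried out jointly for a vector whose coordinates are only approximately uncorrelated. Two further technical points demand care: the stage-1 thresholds $\alpha\mp\mu_j$ are shifted and can be negative or of order larger than $\sqrt{\log p}$ when the main-effect signal $\mu_j$ is strong, so the regimes where a coordinate's tail probability is bounded away from $0$ must be separated out and treated directly (there both the sum and its Gaussian proxy have probability near one, and the factorization is trivially accurate); and the error from $\Sigma\neq\mathrm{I}$ must be controlled uniformly over all $t\in[0,\sqrt{2\log p}]$ and over the shifts, so that summing the finitely many orthant terms preserves the $(\log p)^{-1-\gamma_1}$ rate.
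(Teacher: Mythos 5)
Your overall architecture --- approximate the joint tail of the normalized i.i.d.\ sum by a Gaussian orthant probability, then use $\|\mathrm{cov}(\tilde U_{ijkml})-\mathrm{I}\|_\infty\le C(\log p)^{-2-\gamma}$ to factorize into $G(t)G_j(\alpha)G_k(\alpha)$ (and the six-fold analogue) --- matches the paper's, and your accounting of the two error sources ($(\log p)^{-1-\gamma}$ from the covariance perturbation plus a $(\log p)^{-3/2}$ floor, giving $\gamma_1=\min\{\gamma,1/2\}$) is consistent with what the paper obtains. The genuine gap is the engine you propose for the Gaussian approximation itself. Exponential tilting (the conjugate-distribution method) requires a finite moment generating function of $(U_{ijk},S_{ij},S_{ik})$ in a neighborhood of the origin; under Assumption~\ref{aspA2} the residuals $\epsilon_{ij},\epsilon_{ijk}$ have only polynomial moments of order $4\vee(2r+2+\epsilon)$, so $\mathrm{E}(e^{sU_{ijk}})$ can be infinite for every $s\neq 0$ and the tilted measures you need simply do not exist. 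You correctly flag this as ``the hard part,'' but flagging it does not repair it: as written, the central step of your proof cannot be carried out under the stated assumptions, and there is no local Berry--Esseen-plus-reweighting argument available once the tilt is gone.

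The paper's fix is exactly the device your proposal lacks: truncation. It replaces $S_{ij}$ and $U_{ijk}$ by recentered truncations $\hat S_{ij},\hat U_{ijk}$ at level $\sqrt n/(\log p)^4$, and controls the unbounded remainders $\tilde S_{ij},\tilde U_{ijk}$ by Markov's inequality and a union bound over $i$, using the $(2r+2+\epsilon)$-th moments together with $p\le n^r$; this remainder bound is what produces the uniform $C(\log p)^{-3/2}G(t)$ term --- i.e.\ the $1/2$ in $\gamma_1$ comes from the truncation error, not from a residual of the Gaussian comparison as you posit. For the bounded truncated sums the paper then invokes Zaitsev's (1987) strong Gaussian approximation, whose coupling error $C_1\exp(-C_2(\log p)^2)$ is negligible relative to $G(t)$ uniformly over $t\in[0,\sqrt{2\log p}]$; it also verifies that truncation perturbs the covariance by at most $C(\log p)^{4r+4+2\epsilon}/n^{r+1+\epsilon/2}$, so the truncated covariance is still within $C(\log p)^{-2-\gamma}$ of the identity before the Gaussian factorization is applied. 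If you replace your tilting step by this truncate-then-couple scheme (your centering of the stage-1 events, orthant bookkeeping, and treatment of the shifted thresholds $\alpha\mp\mu_j$ can be kept), the proof goes through; without it, it does not.
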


The following Lemma \ref{EJ} is also an intermediate step to prove (\ref{eq_mainproof_2}) and (\ref{eq_mainproof_3}). 

\begin{lemma}
\label{EJ}
Under Assumptions \ref{aspA5} and \ref{aspA6}, we have 
\[
\mathrm{E}(J) = \sum_{(j,k)\in\emph{H}_0}\mathrm{P}\left\{|U_{j}| \geq \alpha,|U_{k}| \geq \alpha\right\} = \Omega(p^{\xi}).
\]
\end{lemma}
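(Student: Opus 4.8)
The first equality is immediate from the definition $J=\sum_{(j,k)\in H_0}\mathbbm{1}\{|U_j|\geq\alpha,|U_k|\geq\alpha\}$ and linearity of expectation, so the entire content is the lower bound $\mathrm{E}(J)=\Omega(p^{\xi})$. The plan is to discard all but the weakly correlated pairs, approximate the surviving bivariate tail probabilities by the products $G_j(\alpha)G_k(\alpha)$ through a moderate-deviation argument, and then invoke the tuning-parameter condition (\ref{eq_G4}). Since every summand $\mathrm{P}\{|U_j|\geq\alpha,|U_k|\geq\alpha\}$ is nonnegative, I would first restrict the sum to the subset $\tilde H_{01}\subseteq H_0$ from Assumption~\ref{aspA6}, giving
\[
\mathrm{E}(J)\ \geq\ \sum_{(j,k)\in\tilde H_{01}}\mathrm{P}\{|U_j|\geq\alpha,|U_k|\geq\alpha\}.
\]

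The crux is to show that, uniformly over $(j,k)\in\tilde H_{01}$ and over $\alpha\in[0,\sqrt{2\log p}]$,
\[
\Big|\frac{\mathrm{P}\{|U_j|\geq\alpha,|U_k|\geq\alpha\}}{G_j(\alpha)G_k(\alpha)}-1\Big|\ \leq\ C(\log p)^{-1-\gamma_1},\qquad \gamma_1=\min\{\gamma,1/2\}.
\]
This is exactly the statement of Lemma~\ref{lemma4}, equation~(\ref{eq_lemma4_1}), read off at $t=0$: there $G(0)=2-2\Phi(0)=1$ and the event $\{|U_{jk}|\geq 0\}$ is trivial, so $\mathrm{P}_{jk}$ collapses to the bivariate probability above. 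What must be checked is that the covariance hypothesis of Lemma~\ref{lemma4} degenerates, at $t=0$, to membership in $\tilde H_{01}$. Since $U_{jk}$ plays no role once $t=0$, only the $2\times 2$ sub-block of $\mathrm{cov}(\tilde U_{ijkml})$ associated with $(S_{ij},S_{ik})$ enters; because each $U_{ij}$ is standardized to unit variance we have $\mathrm{var}(S_{ij})=1$ and $\mathrm{cov}(S_{ij},S_{ik})=\mathrm{cov}(U_{ij},U_{ik})$, so $(j,k)\in\tilde H_{01}$ is precisely the condition $\|\mathrm{cov}((S_{ij},S_{ik}))-\mathrm{I}\|_\infty\leq C(\log p)^{-2-\gamma}$ required to run the Cramér-type moderate deviation argument underlying Lemma~\ref{lemma4}. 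The bound is therefore available with a constant $C$ independent of $(j,k)$.

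Combining the two steps yields
\[
\mathrm{E}(J)\ \geq\ \{1-C(\log p)^{-1-\gamma_1}\}\sum_{(j,k)\in\tilde H_{01}}G_j(\alpha)G_k(\alpha),
\]
and Assumption~\ref{aspA6}, specifically (\ref{eq_G4}), gives $\sum_{(j,k)\in\tilde H_{01}}G_j(\alpha)G_k(\alpha)=\Omega(p^{\xi})$; since $1-C(\log p)^{-1-\gamma_1}\to 1$, the conclusion $\mathrm{E}(J)=\Omega(p^{\xi})$ follows. The main obstacle is the second step: controlling the \emph{relative} error of the bivariate tail probability around the product $G_j(\alpha)G_k(\alpha)$ uniformly up to the moderate-deviation range $\alpha\asymp\sqrt{\log p}$, where a naive (absolute-error) Gaussian approximation is far too crude. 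This difficulty is exactly what the Cramér-type moderate deviation bound behind Lemma~\ref{lemma4} resolves, so here it reduces to correctly identifying the degenerate case and the reduced covariance condition; the remaining manipulations are routine bookkeeping.
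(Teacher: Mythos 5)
Your proof is correct and follows essentially the same route as the paper: drop the contribution of $H_0\setminus\tilde H_{01}$ by nonnegativity, approximate each bivariate tail probability over $\tilde H_{01}$ by $G_j(\alpha)G_k(\alpha)$ with relative error $O((\log p)^{-1-\gamma_1})$ via the Cram\'er-type moderate deviation machinery, and then invoke (\ref{eq_G4}). The only difference is bookkeeping: the paper cites Lemma 6.1 of \cite{liu2013gaussian} together with the proof of Lemma \ref{lemma4} for that bivariate bound, whereas you obtain it by specializing Lemma \ref{lemma4} at $t=0$ with the covariance hypothesis reduced to the $(S_{ij},S_{ik})$ block---which, as you correctly note, is not literally the stated hypothesis of Lemma \ref{lemma4} (that involves the full $6\times 6$ covariance), so one must re-run its underlying argument rather than quote its statement, exactly as the paper does.
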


Now, we are ready to prove (\ref{eq_mainproof_2}) and (\ref{eq_mainproof_3}), which are shown in Lemmas \ref{lemma6} and \ref{lemma7}.

\begin{lemma}\label{lemma6}
Under Assumptions \ref{aspA1}-\ref{aspA6}, we have 
\begin{equation*}
    \max \limits_{0 \leq r \leq m}J_{r,1} \rightarrow 0
\end{equation*}
in probability. 
\end{lemma}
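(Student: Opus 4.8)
The plan is to bound $J_{r,1}$ for each fixed $r$ by a second–moment (Chebyshev) argument and then upgrade to the maximum over the grid $\{t_0,\dots,t_m\}$ by a union bound. Writing $I_{jk}(t)=\mathbbm{1}\{|U_{jk}|\geq t,|U_j|\geq\alpha,|U_k|\geq\alpha\}$ and $W(t)=\sum_{(j,k)\in \emph{H}_0}I_{jk}(t)$, we have $J_{r,1}=|W(t_r)-\mathrm{E}W(t_r)|/(\mathrm{E}(J)G(t_r))$, so Chebyshev's inequality gives
\[
\mathrm{P}(J_{r,1}>\varepsilon)\leq \frac{\mathrm{Var}(W(t_r))}{\varepsilon^2(\mathrm{E}(J))^2G^2(t_r)}.
\]
It therefore suffices to show that $\mathrm{Var}(W(t_r))/[(\mathrm{E}(J))^2G^2(t_r)]$ is small enough that, after multiplying by the number $m\sim\sqrt{\log p}/z_p$ of grid points, the union bound $\mathrm{P}(\max_r J_{r,1}>\varepsilon)\leq\sum_r\mathrm{P}(J_{r,1}>\varepsilon)$ still tends to $0$.

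The core computation is the variance decomposition
\[
\mathrm{Var}(W(t_r))=\sum_{(j,k)\in \emph{H}_0}\mathrm{Var}(I_{jk}(t_r))+\sum_{(j,k)\neq(m,l)}\big(\mathrm{P}_{jkml}-\mathrm{P}_{jk}\mathrm{P}_{ml}\big),
\]
which I would split, following Assumption~\ref{aspA5}, into weakly correlated quadruplets (those with $\|\mathrm{\textbf{cov}}(\tilde U_{ijkml})-\mathrm{\textbf{I}}\|_\infty\leq C(\log p)^{-2-\gamma}$) and the exceptional, strongly correlated set of cardinality $O(p^{4-\kappa})$. For the diagonal, $\mathrm{Var}(I_{jk}(t_r))\leq \mathrm{P}_{jk}\leq C G(t_r)$, so $\sum_{(j,k)\in \emph{H}_0}\mathrm{Var}(I_{jk}(t_r))=O(p^2 G(t_r))$; after normalization and using $\mathrm{E}(J)=\Omega(p^\xi)$ (Lemma~\ref{EJ}) together with $G(t_r)\gtrsim (p\sqrt{\log p})^{-1}$, this contributes $O(\sqrt{\log p}\,p^{3-2\xi})$, which vanishes precisely when $\xi>3/2$. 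For the weak off-diagonal quadruplets, both \eqref{eq_lemma4_1} and \eqref{eq_lemma4_2} apply, and the leading factors $G^2(t_r)G_jG_kG_mG_l$ in $\mathrm{P}_{jkml}$ and in $\mathrm{P}_{jk}\mathrm{P}_{ml}$ cancel, leaving $|\mathrm{P}_{jkml}-\mathrm{P}_{jk}\mathrm{P}_{ml}|\leq C(\log p)^{-1-\gamma_1}G^2(t_r)G_jG_kG_mG_l$; summing and invoking $\sum_{(j,k)}G_j(\alpha)G_k(\alpha)\leq C\,\mathrm{E}(J)$ (a consequence of \eqref{eq_lemma4_1} at $t=0$ for pairs in $\tilde H_{01}$) bounds this contribution by $C(\log p)^{-1-\gamma_1}$.

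The strongly correlated quadruplets are the main obstacle. Here I would discard the stage-1 constraints and use the bivariate Gaussian tail bound together with the correlation bound $|\mathrm{\textbf{cov}}(U_{ijk},U_{iml})|\leq\delta<1$ from Assumption~\ref{aspA5}, which yields $\mathrm{P}_{jkml}\leq \mathrm{P}(|U_{jk}|\geq t_r,|U_{ml}|\geq t_r)\leq C(G(t_r))^{2/(1+\delta)}$ uniformly in $0\le t_r\le\sqrt{2\log p}$. Since there are only $O(p^{4-\kappa})$ such quadruplets, this piece of the normalized variance is at most
\[
C\,\frac{p^{4-\kappa}(G(t_r))^{2/(1+\delta)}}{(\mathrm{E}(J))^2G^2(t_r)}\leq C\,p^{\,4-\kappa-2\xi+2\delta/(1+\delta)}(\log p)^{\delta/(1+\delta)},
\]
where I used $(G(t_r))^{-2\delta/(1+\delta)}\lesssim (p\sqrt{\log p})^{2\delta/(1+\delta)}$ and $\mathrm{E}(J)=\Omega(p^\xi)$. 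This tends to $0$ exactly because Assumption~\ref{aspA6} guarantees $\xi>2-\kappa/2+\delta/(1+\delta)$, making the exponent negative; matching this exponent to the $\xi$ condition (and the diagonal term to $\xi>3/2$) is the crux, and is why both branches of the maximum in \ref{aspA6} appear.

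Combining the three pieces gives $\mathrm{Var}(W(t_r))/[(\mathrm{E}(J))^2G^2(t_r)]\leq C(\log p)^{-1-\gamma_1}+o(1)$, the dominant term arising from the weak off-diagonal sum. The last step is to pin down the grid spacing $z_p$ so that simultaneously $z_p=o(1/\sqrt{\log p})$ (needed earlier so that $G(t_{i-1})/G(t_i)\to1$) and $m\sim\sqrt{\log p}/z_p=o((\log p)^{1+\gamma_1})$; any $z_p$ with $(\log p)^{-1/2-\gamma_1}\ll z_p\ll(\log p)^{-1/2}$ works, since $\gamma_1=\min\{\gamma,1/2\}>0$. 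Then the union bound gives
\[
\mathrm{P}\big(\max_{0\le r\le m} J_{r,1}>\varepsilon\big)\leq m\cdot \frac{C(\log p)^{-1-\gamma_1}}{\varepsilon^2}\longrightarrow 0,
\]
which is the desired conclusion.
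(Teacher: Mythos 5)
Your proof is correct and follows essentially the same route as the paper's: Chebyshev plus a union bound over the grid $\{t_0,\dots,t_m\}$, a decomposition of the second moment into $\mathrm{P}_{jkml}-\mathrm{P}_{jk}\mathrm{P}_{ml}$ terms split by correlation structure, Lemma \ref{lemma4} for the weakly correlated quadruplets, the correlation-$\delta$ bivariate tail bound for the strongly correlated ones, and Lemma \ref{EJ} together with Assumption \ref{aspA6} to make the polynomial-in-$p$ terms vanish. The only (harmless) bookkeeping difference is that the paper places the shared-index quadruplets in their own bucket $H^*_{01}$ of size $\asymp p^3$, whereas you fold them into the $O(p^{4-\kappa})$ exceptional set of (\ref{asp2}) -- which is legitimate, since a shared index forces an entry of $\mathrm{\textbf{cov}}(\tilde{U}_{ijkml})$ of order one and hence such quadruplets automatically violate the weak-covariance condition.
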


\begin{lemma}\label{lemma7}
Under Assumptions \ref{aspA1}-\ref{aspA6}, we have 
\begin{equation*}
    \max \limits_{0 \leq r \leq m}J_{r,2} =  \max \limits_{0 \leq r \leq m}\left|\frac{\sum_{(j,k)\in \emph{H}_0}\mathrm{P}\left\{|U_{jk}| \geq t_r,|U_{j}| \geq \alpha,|U_{k}| \geq \alpha\right\}-G(t_r)J}{\mathrm{E}(J)G(t_r)}\right|\rightarrow 0
\end{equation*}
in probability.
\end{lemma}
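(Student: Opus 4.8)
The plan is to split the numerator of $J_{r,2}$ into a deterministic bias term and a random fluctuation term. Writing $\mathrm{P}_{jk}=\mathrm{P}\{|U_{jk}|\geq t_r,|U_j|\geq\alpha,|U_k|\geq\alpha\}$ and recalling that $\mathrm{E}(J)=\sum_{(j,k)\in H_0}\mathrm{P}\{|U_j|\geq\alpha,|U_k|\geq\alpha\}$, I would add and subtract $G(t_r)\mathrm{E}(J)$ to obtain
\[
\sum_{(j,k)\in H_0}\mathrm{P}_{jk}-G(t_r)J=\Big(\sum_{(j,k)\in H_0}\mathrm{P}_{jk}-G(t_r)\mathrm{E}(J)\Big)+G(t_r)\big(\mathrm{E}(J)-J\big),
\]
so that
\[
J_{r,2}\leq \frac{\big|\sum_{(j,k)\in H_0}\mathrm{P}_{jk}-G(t_r)\mathrm{E}(J)\big|}{\mathrm{E}(J)G(t_r)}+\frac{|J-\mathrm{E}(J)|}{\mathrm{E}(J)}.
\]
The second term does not depend on $r$, so the maximum over $r$ acts only on the first, deterministic term.

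For the deterministic term I would split $H_0$ into the weakly correlated pairs $\tilde H_{01}$ and the remaining strongly correlated pairs. On $\tilde H_{01}$, Lemma~\ref{lemma4} gives the triple approximation \ref{eq_lemma4_1}, namely $\mathrm{P}_{jk}=G(t_r)G_j(\alpha)G_k(\alpha)(1+O((\log p)^{-1-\gamma_1}))$, and—evaluating \ref{eq_lemma4_1} at $t=0$, where $G(0)=1$ and the event $|U_{jk}|\geq0$ is trivial—the marginal approximation $\mathrm{P}\{|U_j|\geq\alpha,|U_k|\geq\alpha\}=G_j(\alpha)G_k(\alpha)(1+O((\log p)^{-1-\gamma_1}))$. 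Subtracting, the per-pair contribution is $G(t_r)G_j(\alpha)G_k(\alpha)\,O((\log p)^{-1-\gamma_1})$; summing and dividing by $\mathrm{E}(J)G(t_r)$ leaves $O((\log p)^{-1-\gamma_1})\to0$ once I bound $\sum_{\tilde H_{01}}G_j(\alpha)G_k(\alpha)\leq C\mathrm{E}(J)$. For a strongly correlated pair $(j,k)$, pairing it with any $(m,l)\in H_0$ produces a quadruplet whose $(S_{ij},S_{ik})$ block of $\mathrm{cov}(\tilde U_{ijkml})$ violates the weak-correlation bound, so \ref{asp2} forces the number of such pairs to be $O(p^{4-\kappa}/|H_0|)$. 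Using $\mathrm{E}(J)=\Omega(p^\xi)\leq|H_0|$ from Lemma~\ref{EJ} and the crude bounds $\mathrm{P}_{jk}\leq G(t_r)(1+o(1))$ and $\mathrm{P}\{|U_j|\geq\alpha,|U_k|\geq\alpha\}\leq1$, their contribution divided by $\mathrm{E}(J)G(t_r)$ is $O(p^{4-\kappa}/\mathrm{E}(J)^2)=O(p^{4-\kappa-2\xi})\to0$ because $\xi>2-\kappa/2$ by \ref{aspA6}. All these bounds are uniform in $t_r\in[0,\sqrt{2\log p}]$, which controls the maximum over $r$.

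For the random fluctuation term I would apply Chebyshev's inequality and show $\mathrm{Var}(J)/\mathrm{E}(J)^2\to0$. Expanding $\mathrm{Var}(J)=\sum_{(j,k),(m,l)\in H_0}\mathrm{Cov}(\mathbbm{1}_{jk},\mathbbm{1}_{ml})$ with $\mathbbm{1}_{jk}=\mathbbm{1}\{|U_j|\geq\alpha,|U_k|\geq\alpha\}$, I split again by correlation strength. For weakly correlated quadruplets, the six-fold bound \ref{eq_lemma4_2} evaluated at $t=0$ gives the four-fold approximation $\mathrm{P}\{|U_j|\geq\alpha,\dots,|U_l|\geq\alpha\}=G_jG_kG_mG_l(1+O((\log p)^{-1-\gamma_1}))$, so each such covariance is $G_jG_kG_mG_l\,O((\log p)^{-1-\gamma_1})$ and their total is $O((\log p)^{-1-\gamma_1})\mathrm{E}(J)^2$. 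For the $O(p^{4-\kappa})$ strongly correlated quadruplets of \ref{asp2}, I would bound each covariance using the correlation ceiling $\delta<1$ from \ref{aspA5} rather than the trivial bound $1$, which supplies the $\delta/(1+\delta)$ exponent gain, so their total is negligible against $\mathrm{E}(J)^2=\Omega(p^{2\xi})$ exactly under the conditions $\xi>3/2+\delta/(1+\delta)$ and $\xi>2-\kappa/2+\delta/(1+\delta)$ of \ref{aspA6}. Combining, $\mathrm{Var}(J)=o(\mathrm{E}(J)^2)$, hence $|J-\mathrm{E}(J)|/\mathrm{E}(J)\to0$ in probability, which simultaneously delivers $\mathrm{E}(J)/J=O_p(1)$ claimed in \ref{eq_mainproof_3}.

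The main obstacle is the variance control of $J$: the weakly correlated quadruplets must be handled through the full six-dimensional Cramér type moderate deviation of Lemma~\ref{lemma4}, while the $O(p^{4-\kappa})$ strongly correlated quadruplets require a sharp covariance bound exploiting the correlation ceiling $\delta$ instead of the trivial bound, and it is precisely this refinement that pins down the exponent conditions imposed on $\xi$ in Assumption~\ref{aspA6}.
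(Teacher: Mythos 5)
Your proposal is correct, but it takes a genuinely different route from the paper. The paper treats $J_{r,2}$ as a single random quantity and bounds its second moment directly: writing the numerator as $\sum_{(j,k)\in H_0}\big(\mathrm{P}_{jk}-G(t_r)\,\mathbbm{1}\{|U_j|\geq\alpha,|U_k|\geq\alpha\}\big)$, it expands $\mathrm{E}\,J_{r,2}^2$ into the four-term sum $\mathrm{P}_{jk}\mathrm{P}_{ml}-G(t_r)\mathrm{P}_{ml}\mathrm{Q}_{jk}-G(t_r)\mathrm{P}_{jk}\mathrm{Q}_{ml}+G^2(t_r)\mathrm{Q}_{jkml}$ over quadruplets (here $\mathrm{Q}_{jk},\mathrm{Q}_{jkml}$ are the stage-1 joint probabilities), splits into shared-index quadruplets ($O(p^{3-2\xi})$), weakly correlated quadruplets (Lemma \ref{lemma4}, giving $C(\log p)^{-1-\gamma_1}$), and the $O(p^{4-\kappa})$ strongly correlated ones ($O(p^{4-\kappa-2\xi})$), and then finishes with Markov's inequality plus a union bound over the $m\sim(\log p)^{(2+\gamma_1)/2}$ grid points; that union bound is why the explicit rate $(\log p)^{-1-\gamma_1}$, and not merely $o(1)$, is essential in the paper's argument. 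Your bias--variance decomposition separates the same two ingredients explicitly: a deterministic term, bounded uniformly in $t\in[0,\sqrt{2\log p}]$ so that no union bound over the grid is needed, and an $r$-free fluctuation $|J-\mathrm{E}(J)|/\mathrm{E}(J)$ killed by a single Chebyshev application. The latter is precisely the paper's Lemma \ref{lem_J}, so your route reuses work the paper must do anyway for (\ref{eq_mainproof_3}), and is arguably cleaner and less rate-sensitive. Your counting of strongly correlated pairs---embedding each such pair into quadruplets via (\ref{asp2}) to deduce there are at most $O(p^{4-\kappa}/|H_0|)$ of them---is an observation the paper never needs (it stays at the quadruplet level throughout), but it is valid, since $\mathrm{\textbf{cov}}(S_{ij},S_{ik})$ is an entry of $\mathrm{\textbf{cov}}(\tilde U_{ijkml})$ for every partner $(m,l)\in H_0$. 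One presentational caveat: Lemma \ref{lemma4} is stated under a hypothesis on the full $6\times 6$ quadruplet covariance, so applying (\ref{eq_lemma4_1}) to a single pair in $\tilde H_{01}$ (and its $t=0$ specialization for $\mathrm{Q}_{jk}$) formally requires the trivariate version of the same argument, using $\mathrm{\textbf{cov}}(S_{ij},U_{ijk})=0$; the paper glosses this point identically in Lemma \ref{EJ}, so it is not a real gap.

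The one genuine error is your closing claim that the strongly correlated quadruplets in $\mathrm{Var}(J)$ require a $\delta$-refined covariance bound and that this refinement is what pins down the $\delta/(1+\delta)$ exponents in Assumption \ref{aspA6}. Both halves are mistaken. First, the trivial bound $|\mathrm{Q}_{jkml}-\mathrm{Q}_{jk}\mathrm{Q}_{ml}|\leq 2$ suffices there: the contribution is $O\big(p^{4-\kappa}/\mathrm{E}^2(J)\big)=O(p^{4-\kappa-2\xi})\to 0$, since \ref{aspA6} already gives $\xi>2-\kappa/2$; this is exactly how the paper proves Lemma \ref{lem_J}. Second, the ceiling $\delta$ in Assumption \ref{aspA5} constrains the stage-2 covariances $\mathrm{\textbf{cov}}(U_{ijk},U_{iml})$, whereas $\mathrm{Var}(J)$ involves only stage-1 quantities, so the refinement you invoke is not even directly licensed by the assumptions. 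The $\delta/(1+\delta)$ terms in \ref{aspA6} are in fact forced elsewhere, namely in Lemma \ref{lemma6}: there the numerator contains the joint tail $\mathrm{P}_{jkml}\leq \mathrm{P}(|U_{jk}|\geq t,|U_{ml}|\geq t)\leq C_1(t+1)^{-2}\exp\big(-t^2/(1+\delta)\big)$, which, divided by $G^2(t)\asymp (t+1)^{-2}e^{-t^2}$, loses a factor $\exp\big(\delta t^2/(1+\delta)\big)\leq p^{2\delta/(1+\delta)}$ over $t\leq\sqrt{2\log p}$. Since the bound you actually need (and explicitly mention) is available, this confusion does not invalidate your proof---replace the purported refinement by the trivial bound and everything goes through---but your attribution of where Assumption \ref{aspA6}'s exponent conditions bite is incorrect.
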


Finally, the following last lemma shows (\ref{eq_mainproof_3}), which completes the proof.
\begin{lemma}\label{lem_J}
Recall that
$J=\sum_{(j,k)\in\emph{H}_0} I \left\{|U_{j}| \geq \alpha,|U_{k}| \geq \alpha \right\}$,
Assume \ref{aspA1}-\ref{aspA6} hold, then for some constant $C$, we have
\begin{equation*}
\mathrm{P}\left(\frac{\mathrm{E}(J)}{J} \leq C \right) \rightarrow 1.
\end{equation*}
\end{lemma}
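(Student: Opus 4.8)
The plan is to show that $J$ concentrates around its mean, after which a second-moment (Chebyshev) argument gives the claim with $C=2$. Write $A_{jk}=\mathbbm{1}\{|U_j|\ge\alpha,|U_k|\ge\alpha\}$, so $J=\sum_{(j,k)\in H_0}A_{jk}$. It suffices to prove
\[
\mathrm{Var}(J)=o\big((\mathrm{E}(J))^2\big),
\]
since then $\mathrm{P}(|J-\mathrm{E}(J)|>\mathrm{E}(J)/2)\le 4\mathrm{Var}(J)/(\mathrm{E}(J))^2\to 0$, and on the complementary event $J>\mathrm{E}(J)/2$, i.e.\ $\mathrm{E}(J)/J<2$. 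Throughout I use $\mathrm{E}(J)=\Omega(p^{\xi})$ from Lemma~\ref{EJ}, so $\mathrm{E}(J)\to\infty$ and $(\mathrm{E}(J))^2=\Omega(p^{2\xi})$, where $\xi$ satisfies the two strict inequalities $\xi>3/2$ and $\xi>2-\kappa/2$ guaranteed by Assumption~\ref{aspA6}.

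\textbf{Variance decomposition and the combinatorial terms.} I would expand
\[
\mathrm{Var}(J)=\sum_{(j,k),(m,l)\in H_0}\mathrm{Cov}(A_{jk},A_{ml})
\]
and classify the summands by the number of distinct indices among $\{j,k,m,l\}$, using the crude bound $|\mathrm{Cov}(A_{jk},A_{ml})|\le 1/4$ valid for indicators. The diagonal terms ($(j,k)=(m,l)$) contribute at most $\mathrm{E}(J)=o((\mathrm{E}(J))^2)$ since $\mathrm{E}(J)\to\infty$; the terms sharing exactly one index number $O(p^3)$, contributing $O(p^3)=o(p^{2\xi})$ because $\xi>3/2$; and among the terms with four distinct indices, those for which the stage-1 statistics are strongly correlated force $\|\mathrm{cov}(\tilde U_{ijkml})-\mathrm{I}\|_\infty>C(\log p)^{-2-\gamma}$ through the $S$-block and are therefore, by Assumption~\ref{aspA5}, at most $O(p^{4-\kappa})$ in number, contributing $O(p^{4-\kappa})=o(p^{2\xi})$ because $\xi>2-\kappa/2$.

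\textbf{The main step.} The remaining and most delicate contribution is from quadruplets with four distinct indices and $\|\mathrm{cov}(\tilde U_{ijkml})-\mathrm{I}\|_\infty\le C(\log p)^{-2-\gamma}$. For these I would invoke Lemma~\ref{lemma4} at $t=0$, so that $G(0)=1$ and the events $\{|U_{jk}|\ge 0\}$, $\{|U_{ml}|\ge 0\}$ are vacuous; equations (\ref{eq_lemma4_1}) and (\ref{eq_lemma4_2}) then yield
\[
\frac{\mathrm{P}(A_{jk}A_{ml})}{\mathrm{P}(A_{jk})\mathrm{P}(A_{ml})}
=\frac{1+O((\log p)^{-1-\gamma_1})}{\big(1+O((\log p)^{-1-\gamma_1})\big)^2}
=1+O((\log p)^{-1-\gamma_1}),
\]
whence $|\mathrm{Cov}(A_{jk},A_{ml})|=\mathrm{P}(A_{jk})\mathrm{P}(A_{ml})\cdot O((\log p)^{-1-\gamma_1})$. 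Summing over these pairs and bounding the (nonnegative) sum of $\mathrm{P}(A_{jk})\mathrm{P}(A_{ml})$ by $\big(\sum_{(j,k)\in H_0}\mathrm{P}(A_{jk})\big)^2=(\mathrm{E}(J))^2$ gives a contribution of order $O((\log p)^{-1-\gamma_1})(\mathrm{E}(J))^2=o((\mathrm{E}(J))^2)$. Collecting the four pieces yields $\mathrm{Var}(J)=o((\mathrm{E}(J))^2)$, completing the argument.

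The step I expect to be the main obstacle is this last one: establishing the factorization $\mathrm{P}(A_{jk}A_{ml})\approx\mathrm{P}(A_{jk})\mathrm{P}(A_{ml})$ uniformly over all weakly correlated quadruplets through the Cramér-type bound, and checking that every strongly correlated stage-1 quadruplet is genuinely absorbed into the $O(p^{4-\kappa})$ exceptional set of Assumption~\ref{aspA5}. This is precisely where the exponent gaps $\xi>3/2$ and $\xi>2-\kappa/2$ supplied by Assumption~\ref{aspA6} are consumed, so the bookkeeping of which inequality dominates which combinatorial term must be done with care.
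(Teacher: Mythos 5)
Your proposal is correct in substance and follows essentially the same route as the paper's own proof: a second-moment bound on $J$, a decomposition of the variance into diagonal, shared-index, and four-distinct-index quadruplets, with the last class split by correlation strength --- the strongly correlated quadruplets controlled by the $O(p^{4-\kappa})$ cardinality bound of Assumption \ref{aspA5}, the weakly correlated ones by a Cram\'er-type factorization --- and a Chebyshev/Markov step combined with $\mathrm{E}(J)=\Omega(p^{\xi})$ from Lemma \ref{EJ} and the exponent inequalities $\xi>3/2$, $\xi>2-\kappa/2$. The only stylistic difference is that the paper proves the factorization of $\mathrm{Q}_{jkml}$ for the four-dimensional stage-1 vector $(S_{ij},S_{ik},S_{im},S_{il})$ by repeating the argument of Lemma \ref{lemma6}, whereas you invoke Lemma \ref{lemma4} at $t=0$; that is a legitimate shortcut, since $G(0)=1$ and the events $\{|U_{jk}|\geq 0\}$, $\{|U_{ml}|\geq 0\}$ are vacuous, and $t=0$ is covered by the suprema in (\ref{eq_lemma4_1}) and (\ref{eq_lemma4_2}).

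One concrete repair is needed in your partition of the four-distinct-index quadruplets. Your two classes --- ``stage-1 statistics strongly correlated'' and ``$\|\mathrm{\textbf{cov}}(\tilde{U}_{ijkml})-\mathrm{\textbf{I}}\|_{\infty}\leq C(\log p)^{-2-\gamma}$'' --- do not exhaust all cases: weak correlation among the four stage-1 statistics does \emph{not} imply the six-dimensional hypothesis of Lemma \ref{lemma4}, because $\mathrm{\textbf{cov}}(\tilde{U}_{ijkml})$ also contains entries involving the stage-2 statistics $U_{ijk}$ and $U_{iml}$. As written, quadruplets with weak stage-1 correlation but a large entry in one of those rows fall into neither class, so neither your crude bound nor your Lemma \ref{lemma4} step applies to them. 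The fix is immediate and stays inside your own framework: partition directly by the six-dimensional condition of Assumption \ref{aspA5}. Its exceptional set has cardinality $O(p^{4-\kappa})$ by (\ref{asp2}) (and contains your strong-stage-1 set, by the submatrix inequality you already use), so your crude covariance bound handles it; its complement satisfies the hypothesis of Lemma \ref{lemma4} exactly, so your factorization step applies verbatim. With that relabeling every quadruplet is accounted for and the rest of your argument, including the final Chebyshev step giving $C=2$, goes through unchanged; the paper's version is marginally stronger in that it shows $|J/\mathrm{E}(J)-1|=o_p(1)$, so any constant $C>1$ works.
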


\section{Proof of Theorem \ref{thm_power}}\label{app_power}

Define $t^* = c^*\sqrt{\log p}$ and the event $E=\{\inf_{(j,k)\in H_1}|\hat{T}_{jk}| \geq t^*\}$. Form the Gaussian tail bound, it can be shown that $c^*>0$ and is upper bounded by a constant. The following lemma shows that the event $E$ holds with probability tending to 1. 
\begin{lemma}\label{lem_power}
Under the same conditions in Theorem \ref{thm_power}, the event $E$ holds with probability tending to 1.
\end{lemma}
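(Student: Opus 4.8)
The plan is to bound $|\hat T_{jk}|$ from below, uniformly over $(j,k)\in H_1$, by first analyzing the linearized statistic $U_{jk}$ and then transferring the bound to $\hat T_{jk}$ through Lemma~\ref{UTlemma}. First I would compute the mean of $U_{jk}$. Because $\beta_0^{jk}$ is the least-false value, its first-order optimality condition yields $\mathrm{E}(\epsilon_{ijk}(\emph{\textbf{X}}_{ijk}^{\mathrm{s2}})^T)=0$, so the stochastic part of $U_{ijk}$ is centered and
\[
\mathrm{E}(U_{jk})=\sqrt{n}\,\mathrm{E}(U_{ijk})=\frac{\sqrt{n}(\beta_0^{jk})_{(4)}}{\{\mathrm{\textbf{cov}}(u(\beta_0^{jk},\emph{\textbf{X}}_{jk}^{\mathrm{s2}},Y))_{(4,4)}\}^{1/2}}.
\]
Under the signal condition (\ref{eq_signal}) this gives $|\mathrm{E}(U_{jk})|\geq \delta\sqrt{\log p}/\{\mathrm{\textbf{cov}}(u(\beta_0^{jk},\emph{\textbf{X}}_{jk}^{\mathrm{s2}},Y))_{(4,4)}\}^{1/2}$ for every $(j,k)\in H_1$; recall also that $c^*$ is positive and bounded by a constant.

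Next I would establish a uniform concentration of $U_{jk}$ around its mean. Writing $U_{jk}-\mathrm{E}(U_{jk})=\{\mathrm{\textbf{cov}}(u)_{(4,4)}\}^{-1/2}n^{-1/2}\sum_{i=1}^n V_i$ with $V_i=-\epsilon_{ijk}(\Sigma_{jk}^*(\emph{\textbf{X}}_{ijk}^{\mathrm{s2}})^T)_{(4)}$, Assumption~\ref{asp_power} shows each $V_i$ is mean-zero and sub-exponential with parameter at most $\lambda\tilde K$, since $|(\Sigma_{jk}^*(\emph{\textbf{X}}_{ijk}^{\mathrm{s2}})^T)_{(4)}|\leq \tilde K$. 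Applying a Bernstein bound (in which, because $p\leq n^r$ forces $\log p\ll n$, the sub-Gaussian branch is active at deviation scale $\sqrt{\log p}$) together with a union bound over the at most $p^2$ pairs in $H_1$, I would show that for any fixed $\zeta'\in(0,\zeta)$,
\[
\mathrm{P}\Big(\max_{(j,k)\in H_1}|U_{jk}-\mathrm{E}(U_{jk})|\geq \frac{(2\lambda\tilde K+\zeta')\sqrt{\log p}}{\{\mathrm{\textbf{cov}}(u)_{(4,4)}\}^{1/2}}\Big)\to 0.
\]
The sharp constant $2\lambda\tilde K$ arises precisely because the Chernoff exponent $z^2/(2\nu^2)$ for a deviation of level $z$, with $\nu=\lambda\tilde K$, must beat the $2\log p$ cost of the union bound, and the slack $\zeta'$ makes this strict.

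Finally I would combine these pieces and pass to $\hat T_{jk}$. Using $|U_{jk}|\geq|\mathrm{E}(U_{jk})|-|U_{jk}-\mathrm{E}(U_{jk})|$ and the signal-strength condition (\ref{eq_thm_power_1}), namely $\delta\geq c^*\{\mathrm{\textbf{cov}}(u)_{(4,4)}\}^{1/2}+2\lambda\tilde K+\zeta$, the two preceding displays give, with probability tending to $1$,
\[
\inf_{(j,k)\in H_1}|U_{jk}|\geq c^*\sqrt{\log p}+\frac{(\zeta-\zeta')\sqrt{\log p}}{\{\mathrm{\textbf{cov}}(u)_{(4,4)}\}^{1/2}}=t^*+c''\sqrt{\log p}
\]
for some constant $c''>0$ (uniform over $H_1$, since $\mathrm{\textbf{cov}}(u)_{(4,4)}$ is bounded above). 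As this lower bound exceeds the constant $c$ of Lemma~\ref{UTlemma}, the relative-error bound $\max_{(j,k)}|\hat T_{jk}-U_{jk}|/(|U_{jk}|\vee c)=O_p(\log p/\sqrt{n})$ yields $|\hat T_{jk}|\geq|U_{jk}|(1-O_p(\log p/\sqrt{n}))$, whence $\inf_{(j,k)\in H_1}|\hat T_{jk}|\geq(c^*+c'')\sqrt{\log p}(1-o_p(1))\geq c^*\sqrt{\log p}=t^*$ with probability tending to $1$, which is exactly the event $E$. I expect the main obstacle to be the concentration step: one must extract the exact constant $2\lambda\tilde K$ from Bernstein's inequality so that the slack $\zeta$ in (\ref{eq_thm_power_1}) is just enough to dominate the fluctuations of order $2\lambda\tilde K\sqrt{\log p}$ that survive the union bound over the $\sim p^2$ alternatives.
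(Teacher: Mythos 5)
Your proposal is correct and takes essentially the same route as the paper's own proof: the same computation of $\mathrm{E}(U_{jk})$ from the least-false score equation, the same sub-exponential Bernstein bound on $n^{-1}\sum_{i=1}^n x_{ijk}\epsilon_{ijk}$ combined with a union bound over the at most $p^2$ pairs in $H_1$ (which is exactly where the constant $2\lambda\tilde K$ in (\ref{eq_thm_power_1}) comes from), and the same transfer from $U_{jk}$ to $\hat T_{jk}$ via Lemma~\ref{UTlemma} (the paper routes this through its display (\ref{two-side}), itself a consequence of that lemma). The only cosmetic difference is that you first state a uniform concentration bound with slack $\zeta'<\zeta$ and then argue deterministically, while the paper bounds the probability of the complement event directly and compares exponents; the two are equivalent.
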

So, under the event $E$, 
\begin{equation}
    \label{pow1}
    \sum_{(j,k)\in H_1}\mathbbm{1}\left\{|\hat{T}_{jk}| \geq t^*,|\hat{T}_{j}| \geq \alpha,|\hat{T}_{k}| \geq \alpha\right\}=\sum_{(j,k)\in H_1}\mathbbm{1}\left\{|\hat{T}_{j}| \geq \alpha,|\hat{T}_{k}| \geq \alpha\right\}=M-N.
\end{equation}
Recall that our FDR threshold is given by
\[
\hat{t}=\inf \left\{
0 \leq \emph{t} \leq \sqrt{ 2\log p}: G(t)\leq \eta \frac{\max \left(\sum_{1 \leq j < k \leq p}\mathbbm{1}\left\{|\hat{T}_{jk}| \geq t,|\hat{T}_{j}| \geq \alpha,|\hat{T}_{k}| \geq \alpha\right\},1 \right)}{\sum_{1 \leq j < k \leq p}\mathbbm{1}\left\{|\hat{T}_{j}| \geq \alpha,|\hat{T}_{k}| \geq \alpha\right\}}
\right\}.
\]
If we consider $t=t^*$, then
\[
\eta\frac{\sum_{1 \leq j < k \leq p}\mathbbm{1}\left\{|\hat{T}_{jk}| \geq t^*,|\hat{T}_{j}| \geq \alpha,|\hat{T}_{k}| \geq \alpha\right\}}{\sum_{1 \leq j < k \leq p}\mathbbm{1}\left\{|\hat{T}_{j}| \geq \alpha,|\hat{T}_{k}| \geq \alpha\right\}} \geq  \eta\frac{M-N}{M} = G(t^*),
\]
where the last equality is from the definition of $t^*$. This implies $\hat t\leq t^*$. Since the event $E$ holds with probability tending to one, we have $\mathrm{P}(\hat E) \rightarrow 1$, where $\hat E=\{\inf_{(j,k)\in H_1}|\hat{T}_{jk}| \geq \hat{t}\}$. Thus, (\ref{pow1}) also works if $t^*$ is replaced with $\hat{t}$. Thus, with probability tending to one
$$
\frac{\sum_{(j,k)\in \emph{H}_1}\mathbbm{1}\left\{|\widehat{T}_{jk}| \geq \hat{t},|\widehat{T}_j| \geq \alpha,|\widehat{T}_k| \geq \alpha\right\}}{|\emph{H}_1|}=\frac{M-N}{|\emph{H}_1|}.
$$
Since the power is given by the expectation of the term in the left hand side of the above equation, we have 
\begin{align*}
\mathrm{power}&\geq \mathrm{E}\Big\{\frac{\sum_{(j,k)\in \emph{H}_1}\mathbbm{1}\left\{|\widehat{T}_{jk}| \geq \hat{t},|\widehat{T}_j| \geq \alpha,|\widehat{T}_k| \geq \alpha\right\}}{|\emph{H}_1|}I(\hat E)\Big\}\\
&=\mathrm{E}\Big\{\frac{M-N}{|\emph{H}_1|}I(\hat E)\Big\}= \mathrm{E}\Big\{\frac{M-N}{|\emph{H}_1|}\Big\}-\mathrm{E}\Big\{\frac{M-N}{|\emph{H}_1|}I(\hat E^c)\Big\}.
\end{align*}
Since $M-N\leq |H_1|$, we have $\mathrm{E}\{\frac{M-N}{|\emph{H}_1|}I(\hat E^c)\}\leq \mathrm{P}(\hat E^c)=o(1)$. Thus,
$$
\mathrm{power}\geq \mathrm{E}\Big\{\frac{M-N}{|\emph{H}_1|}\Big\}-o(1).
$$
Similarly, we can show that $\mathrm{power}\leq \mathrm{E}\{\frac{M-N}{|\emph{H}_1|}\}$. This completes the proof.

\section{Proof of Additional Lemmas}\label{app_secondary}

\subsection{Proof of Lemma \ref{glm_lemma1}}
Let $\gamma_1,...,\gamma_n$ be a Rademacher sequence independent of $\epsilon$ and $\emph{\textbf{X}}$. Let
\[
\textbf{Z}\coloneqq\sup\limits_{\|\beta-\beta_0\|_1\leq M}\frac{1}{n}\left|\epsilon^T\emph{\textbf{X}}(\beta -\beta_0)\right| = \sup\limits_{\|\beta-\beta_0\|_1\leq M}\frac{1}{n}\left|\sum\limits_{i=1}^n \epsilon_iX_i^T(\beta-\beta_0)\right|,
\]
\[
\textbf{Z}_{\gamma}\coloneqq\sup\limits_{\|\beta-\beta_0\|_1\leq M}\frac{1}{n}\left|\sum\limits_{i=1}^n\gamma_i\epsilon_iX_i^T(\beta- \beta_0)\right| \coloneqq \sup\limits_{\|\beta-\beta_0\|_1\leq M}\frac{1}{n}\left|\sum\limits_{i=1}^nz_i\right|.
\]
Denote $\mathrm{P}_{\epsilon,x}$ and $\mathrm{E}_{\epsilon,x}$ as the conditional probability and expectation given $\epsilon$ and $\emph{\textbf{X}}$. Note that $|z_i| \leq MK|\epsilon_i|$. By H\"{o}lder inequality and Nemirovski moment inequality in \cite{buhlmann2011statistics}, we have
\[
 \mathrm{E}_{\epsilon,x}(\textbf{Z}_{\gamma}) \leq M\mathrm{E}_{\epsilon,x}\max _{1 \leq j \leq s}\left|\sum_{i=1}^{n} \gamma_iX_{ij}\epsilon_{i}  / n\right| \leq
 MK\left(\frac{8 \log (2 s)}{n}\right)^{1/2}\left(\frac{1}{n}\sum\limits_{i=1}^n\epsilon_i^2\right)^{1/2}.
\]
Let $v=\mathrm{E}_{\epsilon,x}\sum\limits_{i=1}^n(2z_i)^2 \leq 4M^2K^2\sum\limits_{i=1}^n\epsilon_i^2$. Then Massart's inequality implies that for any positive $t$, 
\begin{align*}
    \begin{split}
        \exp \left(-t\right) &\geq
\mathrm{P}_{\epsilon,x}\left(\textbf{Z}_{\gamma} \geq \mathrm{E}_{\epsilon,x}(\textbf{Z}_{\gamma}) +2\sqrt{vt}/n+\frac{5MK}{2
n}\max\limits_{1\leq i \leq n}|\epsilon_i|\right)\\
&\geq 
\mathrm{P}_{\epsilon,x}\left(\textbf{Z}_{\gamma} \geq  MK\left(\frac{8 \log (2 s)}{n}\right)^{1/2}\left(\frac{1}{n}\sum\limits_{i=1}^n\epsilon_i^2\right)^{1/2}
+
4MK\sqrt{\frac{t}{n}}\left(\frac{1}{n}\sum\limits_{i=1}^n\epsilon_i^2\right)^{1/2}
+
\frac{5MK}{2
n}\max\limits_{1\leq i \leq n}|\epsilon_i|\right) \\
&=  
\mathrm{P}_{\epsilon,x}\left(\textbf{Z}_{\gamma} \geq  \left(\sqrt{\frac{8 \log (2 s)}{n}}+4\sqrt{\frac{t}{n}}\right)MK\left(\frac{1}{n}\sum\limits_{i=1}^n\epsilon_i^2\right)^{1/2}
+
\frac{5MK}{2
n}\max\limits_{1\leq i \leq n}|\epsilon_i|\right)\\
& \geq 
\mathrm{P}_{\epsilon,x}\left(\textbf{Z}_{\gamma} \geq  4\sqrt{\frac{\log(2s)+2t}{n}}MK\left(\frac{1}{n}\sum\limits_{i=1}^n\epsilon_i^2\right)^{1/2}
+
\frac{5MK}{2
n}\max\limits_{1\leq i \leq n}|\epsilon_i|\right)\\
& \geq
\mathrm{P}_{\epsilon,x}\left(\textbf{Z}_{\gamma} \geq  4\sqrt{\frac{\log(2s)+2t}{n}}MK\left(\frac{1}{n}\sum\limits_{i=1}^n\epsilon_i^2\right)^{1/2}
+
\frac{5MK}{2
}\sqrt{\frac{1}{n}}\left(\frac{1}{n}\sum\limits_{i=1}^n\epsilon_i^2\right)^{1/2}\right)\\
& \geq
\mathrm{P}_{\epsilon,x}\left(\textbf{Z}_{\gamma} \geq  4\sqrt{\frac{3\log(2s)+4t}{n}}MK\left(\frac{1}{n}\sum\limits_{i=1}^n\epsilon_i^2\right)^{1/2}
\right).
    \end{split}
\end{align*}
Then we integrate out $\epsilon$ and $\emph{\textbf{X}}$ and have
\begin{align*}
    \begin{split}
        \mathrm{P}\left(\textbf{Z}_{\gamma} \geq 4\sqrt{\frac{8t+6\log(2s)}{n}}MK\sigma\right) 
        &\leq \exp(-t) +
\mathrm{P}\left(\frac{1}{n}\sum\limits_{i=1}^n \epsilon_i^2>2\sigma^2\right)\\
&\leq \exp(-t) + \frac{\kappa^4}{n\sigma^4},
    \end{split}
\end{align*}
where the last step holds by Markov inequality. Note that 
\[
\mathrm{P}\left(\textbf{Z} \geq u\right) \leq 
\frac{2\mathrm{P}\left(\textbf{Z}_{\gamma} \geq u/4\right) }{1-4\sigma^2M^2K^2/nu^2}.
\]
With $u = 16\sqrt{\frac{8t+6\log(2s)}{n}}MK\sigma$ and $\log(2s)\geq 1 $, we can derive the result.

\subsection{Proof of Lemma \ref{glm_lemma2}}
Take $M= \frac{4\lambda_{\epsilon}(t)sC_b}{\tau}$ and assume we are on the set
\[
\gamma \coloneqq \left\{\sup\limits_{\|\beta-\beta_0\|_1\leq M}\left|\epsilon^T\emph{\textbf{X}}(\beta -\beta_0)\right|/n\leq \lambda_{\epsilon}(t)M\right\}.
\]
Let 
\[
t \coloneqq \frac{M}{M+\|\hat{\beta}-\beta_0\|_1},
\]
and 
\[
\tilde{\beta}_t = t\hat{\beta}+(1-t)\beta_0.
\]
Then 
\[
\|\tilde{\beta}_t - \beta_0\|_1=\frac{M\|\hat{\beta}-\beta_0\|_1}{M + \|\hat{\beta}-\beta_0\|_1 }.
\]
Note that if we can show $\|\tilde{\beta}_t - \beta_0\|_1 \leq M/2$, then from the above display we get $\|\hat{\beta}-\beta_0\|_1 \leq M$. So, in the following, we focus on the proof of $\|\tilde{\beta}_t - \beta_0\|_1 \leq M/2$. By the convexity of the negative log-likelihood function, we have
\begin{align*}
    \begin{split}
        \frac{1}{n}\sum\limits_{i=1}^n\left(Y_iX_i^T\tilde{\beta}_t-b(X_i^T\tilde{\beta}_t )\right)
        &\geq t\left\{\frac{1}{n}\sum\limits_{i=1}^n\left(Y_iX_i^T\hat{\beta}-b(X_i^T\hat{\beta} )\right)\right\}+(1-t)\left\{\frac{1}{n}\sum\limits_{i=1}^nY_iX_i^T\beta_0-b(X_i^T\beta_0 )\right\}\\
        &\geq \frac{1}{n}\sum\limits_{i=1}^n\left(Y_iX_i^T\beta_0-b(X_i^T\beta_0 )\right).
    \end{split}
\end{align*}
Note that $\epsilon_i= Y_i - b^{\prime}(X_i^T\beta_0)$, hence we can write this as
\[
\frac{1}{n}\sum\limits_{i=1}^n\left( b(X_i^T\tilde{\beta}_t )-b(X_i^T\beta_0 )-b^{\prime}(X_i^T\beta_0 )X_i^T(\tilde{\beta}_t-\beta_0)\right)
\leq \frac{1}{n}\epsilon^T\emph{\textbf{X}}(\tilde{\beta}_t-\beta_0).
\]
By Taylor expansion and assumption $1/C_b\leq b^{\prime\prime}(z)$, we have 
\[
\frac{1}{n}\sum\limits_{i=1}^n\left( b(X_i^T\tilde{\beta}_t )-b(X_i^T\beta_0 )-b^{\prime}(X_i^T\beta_0 )X_i^T(\tilde{\beta}_t-\beta_0)\right)\geq \frac{1}{C_b}\frac{\|\emph{\textbf{X}}(\tilde{\beta}_t-\beta_0)\|_2^2}{n}.
\]
Note that by the definition we have $\|\tilde{\beta}_t - \beta_0\|_1 < M$. Thus, on the set $\gamma$, we have
\[
\frac{\|\emph{\textbf{X}}(\tilde{\beta}_t-\beta_0)\|_2^2}{nC_b}  \leq \frac{1}{n}\epsilon^T\emph{\textbf{X}}(\tilde{\beta}_t-\beta_0)\leq 
\lambda_{\epsilon}(t)M.
\]
We now use the proof by contradiction. If $\|\tilde{\beta}_t - \beta_0\|_1 > M/2$, then 
\[
\frac{\tau\|\tilde{\beta}_t - \beta_0\|_1^2}{C_bs}
\leq
\frac{\|\emph{\textbf{X}}(\tilde{\beta}_t-\beta_0)\|_2^2}{nC_b} \leq 
\lambda_{\epsilon}(t)M < 2\lambda_{\epsilon}(t)\|\tilde{\beta}_t - \beta_0\|_1,
\]
hence
\[
\|\tilde{\beta}_t - \beta_0\|_1 < \frac{2\lambda_{\epsilon}(t)C_bs}{\tau}=\frac{M}{2},
\]
which leads to a contradiction. Therefore we must have $\|\tilde{\beta}_t - \beta_0\|_1 \leq M/2$ which further concludes that $\|\hat{\beta}-\beta_0\|_1 \leq M$.

\subsection{Proof of Lemma \ref{UTlemma}}
Consider $\hat{T}_j$ and $U_j$ first. 
By Lemma~\ref{glm_lemma2} we have for some $t>0$, 
\[
\alpha(t)=3\exp(-t)+\frac{3\kappa^4}{n\sigma^4},~~~
\lambda_{\epsilon}(t) =  4\sqrt{\frac{8t+6\log4}{n}}K\sigma,
\]
with probability at least $1-\alpha(t)$, it holds that
\[
 \|\hat{\beta}_{}^j-\beta_0^j\|_1 \leq \frac{8\lambda_{\epsilon}(t)C_b}{\tau}.
\]
For some $\tilde{\beta}^j$ on the line segment between $\hat{\beta}_{}^j$ and $\beta_0^j$, by Taylor expansion we have
\begin{align}\label{taylor1}
\begin{split}
         0&= \Psi_n({\beta_0^{j}},\emph{\textbf{X}}_{j}^{\mathrm{s1}},Y)+\left(\frac{\partial \Psi_n({\tilde{\beta}^j},\emph{\textbf{X}}_{j}^{\mathrm{s1}},Y)}{\partial \beta}\right)^T(\hat{\beta}_{}^j-\beta_0^j)\\
     &=  \Psi_n({\beta_0^{j}},\emph{\textbf{X}}_{j}^{\mathrm{s1}},Y) -
     \frac{1}{n}\sum\limits_{i=1}^nb^{\prime\prime}(\emph{\textbf{X}}_{ij}^{\mathrm{s1}}\tilde{\beta}^j)(\emph{\textbf{X}}_{ij}^{\mathrm{s1}})^T\emph{\textbf{X}}_{ij}^{\mathrm{s1}}
     (\hat{\beta}_{}^j-\beta_0^j).
\end{split}
\end{align}
For some $c>0$, let $t = (2+c)\log p$, then $\alpha(t) = O(1/p^{2+c})$, $\lambda_{\epsilon}(t)= O(\sqrt{\frac{\log p}{n}})$. Applying union bound for all $1\leq j \leq p$, we have
\[
\max\limits_{1\leq j \leq p}\|\tilde{\beta}^j-\beta_0^j\|_1 \leq \max\limits_{1\leq j \leq p}\|\hat{\beta}_{}^j-\beta_0^j\|_1 = O_p\left(\sqrt{\frac{\log p}{n}}\right).
\]
For some $\beta^j_*$ on the line segment between $\tilde{\beta}^j$ and $\beta_0^j$, we have
\[
\max\limits_{1\leq j \leq p}\left|b^{\prime\prime}(\emph{\textbf{X}}_{ij}^{\mathrm{s1}}\tilde{\beta}^j)
-
b^{\prime\prime}(\emph{\textbf{X}}_{ij}^{\mathrm{s1}}\beta_0^j)\right|=
\max\limits_{1\leq j \leq p}\left|b^{\prime\prime\prime}(\emph{\textbf{X}}_{ij}^{\mathrm{s1}}\beta_*^j)
\emph{\textbf{X}}_{ij}^{\mathrm{s1}}(\tilde{\beta}^j-\beta_0^j)\right| \leq C_{\tilde{b}}K
\max\limits_{1\leq j \leq p}\|\tilde{\beta}^j-\beta_0^j\|_1 ,
\]
therefore it follows that
\[
\max\limits_{1\leq j \leq p}\left\|\frac{1}{n}\sum\limits_{i=1}^nb^{\prime\prime}(\emph{\textbf{X}}_{ij}^{\mathrm{s1}}\tilde{\beta}^j)(\emph{\textbf{X}}_{ij}^{\mathrm{s1}})^T\emph{\textbf{X}}_{ij}^{\mathrm{s1}}-
\frac{1}{n}\sum\limits_{i=1}^nb^{\prime\prime}(\emph{\textbf{X}}_{ij}^{\mathrm{s1}}\beta_0^j)(\emph{\textbf{X}}_{ij}^{\mathrm{s1}})^T\emph{\textbf{X}}_{ij}^{\mathrm{s1}}\right\|_{\infty} =O_p\left(\sqrt{\frac{\log p}{n}}\right).
\]
The Nemirovski moment inequality implies
\[
\max\limits_{1\leq j \leq p}\left\|
\frac{1}{n}\sum\limits_{i=1}^nb^{\prime\prime}(\emph{\textbf{X}}_{ij}^{\mathrm{s1}}\beta_0^j)(\emph{\textbf{X}}_{ij}^{\mathrm{s1}})^T\emph{\textbf{X}}_{ij}^{\mathrm{s1}}
-\mathrm{E}\left(b^{\prime\prime}(\emph{\textbf{X}}_{ij}^{\mathrm{s1}}\beta_0^j)(\emph{\textbf{X}}_{ij}^{\mathrm{s1}})^T\emph{\textbf{X}}_{ij}^{\mathrm{s1}}\right)\right\|_{\infty}= O_p\left(\sqrt{\frac{\log p}{n}}\right),
\]
and hence
\begin{equation}\label{bound1}
    \max\limits_{1\leq j \leq p}\left\|
\frac{1}{n}\sum\limits_{i=1}^nb^{\prime\prime}(\emph{\textbf{X}}_{ij}^{\mathrm{s1}}\tilde{\beta}^j)(\emph{\textbf{X}}_{ij}^{\mathrm{s1}})^T\emph{\textbf{X}}_{ij}^{\mathrm{s1}}-
\mathrm{E}\left(b^{\prime\prime}(\emph{\textbf{X}}_{ij}^{\mathrm{s1}}\beta_0^j)(\emph{\textbf{X}}_{ij}^{\mathrm{s1}})^T\emph{\textbf{X}}_{ij}^{\mathrm{s1}}\right)\right\|_{\infty}= O_p\left(\sqrt{\frac{\log p}{n}}\right).
\end{equation}
Following the same proof, we can also obtain
\begin{align}\label{bound3}
    \begin{split}
        \max\limits_{1\leq j \leq p}\left\|
        \mathrm{E}\left(b^{\prime\prime}(\emph{\textbf{X}}_{ij}^{\mathrm{s1}}\beta_0^j)(\emph{\textbf{X}}_{ij}^{\mathrm{s1}})^T\emph{\textbf{X}}_{ij}^{\mathrm{s1}}\right)
        -\frac{1}{n}\sum\limits_{i=1}^nb^{\prime\prime}\left(\emph{\textbf{X}}_{ij}^{\mathrm{s1}}\hat{\beta}_{}^{j}\right)(\emph{\textbf{X}}_{ij}^{\mathrm{s1}})^T\emph{\textbf{X}}_{ij}^{\mathrm{s1}}
        \right\|_{\infty}
        = O_p\left(\sqrt{\frac{\log p}{n}}\right).
    \end{split}
\end{align}
Recall that
\[
\Psi_{n}(\beta_0^j,\emph{\textbf{X}}_{j}^{\mathrm{s1}},Y)=\frac{1}{n} \sum_{i=1}^{n} \psi_{\beta}\left((\emph{\textbf{X}}_{ij}^{\mathrm{s1}})^T, Y_{i}\right)=\frac{1}{n} \sum_{i=1}^{n}\left\{Y_{i}-b^{\prime}\left(\emph{\textbf{X}}_{ij}^{\mathrm{s1}} \beta_0^j\right)\right\} \cdot (\emph{\textbf{X}}_{ij}^{\mathrm{s1}})^T=\frac{1}{n} \sum_{i=1}^{n}\epsilon_{ij}(\emph{\textbf{X}}_{ij}^{\mathrm{s1}})^T.
\]
Since $\max_{1\leq j \leq p}\mathrm{E}\epsilon_{ij}^2$ is finite and $\max_{1\leq j \leq p} \max_{1 \leq i \leq n}|X_{ij}| \leq K$, we can again apply  Nemirovski moment inequality to get
\begin{equation}\label{bound2}
    \max\limits_{1\leq j \leq p}\left\|\Psi_{n}(\beta_0^j,\emph{\textbf{X}}_{j}^{\mathrm{s1}},Y)
\right\|_1 = O_p\left(\sqrt{\frac{\log p }{n}}\right).
\end{equation}
From (\ref{taylor1}) together with (\ref{bound1}) and (\ref{bound2}), we have
\begin{align}\label{bound4}
        &\max\limits_{1\leq j \leq p}\|\sqrt{n}(\hat{\beta}_{}^{j}-\beta_0^j)-
u({\beta_0^{j}},\emph{\textbf{X}}_{j}^{\mathrm{s1}},Y)
\|_1 \nonumber\\
&= 
\max\limits_{1\leq j \leq p}\left\|\left\{\left[\frac{1}{n}\sum\limits_{i=1}^nb^{\prime\prime}(\emph{\textbf{X}}_{ij}^{\mathrm{s1}}\tilde{\beta}^j)(\emph{\textbf{X}}_{ij}^{\mathrm{s1}})^T\emph{\textbf{X}}_{ij}^{\mathrm{s1}}\right]^{-1} - \left[\mathrm{E}(b^{\prime\prime}(\emph{\textbf{X}}_{ij}^{\mathrm{s1}}\beta_0^j)(\emph{\textbf{X}}_{ij}^{\mathrm{s1}})^T\emph{\textbf{X}}_{ij}^{\mathrm{s1}})\right]^{-1}\right\}n^{1/2}\Psi_n({\beta_0^{j}},\emph{\textbf{X}}_{j}^{\mathrm{s1}},Y)\right\|_1\nonumber\\
&= O_p\left(\frac{\log p}{\sqrt{n}}\right).
\end{align}
Similar to the proof of (\ref{bound3}), we can also show that
\begin{equation}\label{bound6}
\max\limits_{1\leq j \leq p}\left\|\mathrm{\textbf{cov}}( u({\beta_0^{j}},\emph{\textbf{X}}_{j}^{\mathrm{s1}},Y))
- 
\hat{\mathrm{\textbf{cov}}}(\hat{\beta}_{}^{j})
\right\|_{\infty}
= O_p\left(\sqrt{\frac{\log p }{n}}\right).
\end{equation}
Then 
\begin{align*}
        & \max\limits_{1\leq j \leq p}\left|\frac{\widehat{T}_j-U_j}{U_j\vee c}\right|
         =  \max\limits_{1\leq j \leq p}
         \left|
        \left( \frac{\widehat{T}_j}{U_j} - 1\right)\frac{U_j}{U_j\vee c}  \right|\\
        &=\max\limits_{1\leq j \leq p}\left|  \left( \frac{\sqrt{n}\left(\hat{\beta}_{}^{j}\right)_{(2)}}{u({\beta_0^{j}},\emph{\textbf{X}}_{j}^{\mathrm{s1}},Y)_{(2)}+\sqrt{n}(\beta_0^{j})_{(2)}}
             \frac{\sqrt{\mathrm{\textbf{cov}}( u({\beta_0^{j}},\emph{\textbf{X}}_{j}^{\mathrm{s1}},Y))_{(2,2)}}}{\sqrt{\hat{\mathrm{\textbf{cov}}}(\hat{\beta}_{}^{j})_{(2,2)}}}
             -1\right)\frac{U_j}{U_j\vee c}\right|\\
             &\leq
        \max\limits_{1\leq j \leq p}
        \left|\frac{\left(\sqrt{n}\left(\hat{\beta}_{}^{j}\right)_{(2)}
               - \left(u({\beta_0^{j}},\emph{\textbf{X}}_{j}^{\mathrm{s1}},Y)_{(2)}+\sqrt{n}(\beta_0^{j})_{(2)}\right)
             \right)
             \sqrt{\mathrm{\textbf{cov}}( u({\beta_0^{j}},\emph{\textbf{X}}_{j}^{\mathrm{s1}},Y))_{(2,2)}}
             }
             {\left(u({\beta_0^{j}},\emph{\textbf{X}}_{j}^{\mathrm{s1}},Y)_{(2)}+\sqrt{n}(\beta_0^{j})_{(2)}\right)\sqrt{\hat{\mathrm{\textbf{cov}}}(\hat{\beta}_{}^{j})_{(2,2)}}}\frac{U_j}{U_j\vee c}\right|\\
             &~~~~~+\max\limits_{1\leq j \leq p}
             \left|
             \frac{\sqrt{\hat{\mathrm{\textbf{cov}}}(\hat{\beta}_{}^{j})_{(2,2)}}- \sqrt{\mathrm{\textbf{cov}}( u({\beta_0^{j}},\emph{\textbf{X}}_{j}^{\mathrm{s1}},Y))_{(2,2)}}}
             {\sqrt{\hat{\mathrm{\textbf{cov}}}(\hat{\beta}_{}^{j})_{(2,2)}}}\frac{U_j}{U_j\vee c}\right|. 
\end{align*}
From (\ref{bound6}) and $|U_j|\leq |U_j|\vee c$, we know
$$
\max\limits_{1\leq j \leq p}
             \left|
             \frac{\sqrt{\hat{\mathrm{\textbf{cov}}}(\hat{\beta}_{}^{j})_{(2,2)}}- \sqrt{\mathrm{\textbf{cov}}( u({\beta_0^{j}},\emph{\textbf{X}}_{j}^{\mathrm{s1}},Y))_{(2,2)}}}
             {\sqrt{\hat{\mathrm{\textbf{cov}}}(\hat{\beta}_{}^{j})_{(2,2)}}}\frac{U_j}{U_j\vee c}\right|=O_p\Big(\sqrt{\frac{\log p}{n}}\Big).
$$
From (\ref{bound4}), the definition of $U_j$ and $|U_j|\vee c\geq c>0$, we can show that
\begin{align*}
 &\max\limits_{1\leq j \leq p}
\left|\frac{\left(\sqrt{n}\left(\hat{\beta}_{}^{j}\right)_{(2)}
  - \left(u({\beta_0^{j}},\emph{\textbf{X}}_{j}^{\mathrm{s1}},Y)_{(2)}+\sqrt{n}(\beta_0^{j})_{(2)}\right)
 \right)
 \sqrt{\mathrm{\textbf{cov}}( u({\beta_0^{j}},\emph{\textbf{X}}_{j}^{\mathrm{s1}},Y))_{(2,2)}}
 }
 {\left(u({\beta_0^{j}},\emph{\textbf{X}}_{j}^{\mathrm{s1}},Y)_{(2)}+\sqrt{n}(\beta_0^{j})_{(2)}\right)\sqrt{\hat{\mathrm{\textbf{cov}}}(\hat{\beta}_{}^{j})_{(2,2)}}}\frac{U_j}{U_j\vee c}\right| = O_p\left(\frac{\log p}{\sqrt{n}}\right).
\end{align*}
Combining the above bounds, we obtain the desire bound for $\max_{1\leq j \leq p}|\frac{\widehat{T}_j-U_j}{U_j\vee c}|$. Following the similar proof, it's easy to show that
\[
 \max\limits_{1\leq j \leq p}\left|\frac{\widehat{T}_{jk}-U_{jk}}{|U_{jk}|\vee c}\right|= 
 O_p\left(\frac{\log p}{\sqrt{n}}\right).
\].

\subsection{Proof of Lemma \ref{lemma4}}
For $1 \leq i \leq n$, denote
\begin{align*} 
	\begin{split}
	S_{ij} &= U_{ij} - \mathrm{E}(U_{ij}),\\
	\hat{S}_{ij} &= S_{ij}\mathbbm{1}\left\{|S_{ij}| \leq \sqrt{n}/(\log p)^4 \right\}-\mathrm{E}\left(S_{ij}\mathbbm{1}\left\{|S_{ij}| \leq \sqrt{n}/(\log p)^4 \right\}\right),\\
	\tilde{S}_{ij} &= S_{ij}- \hat{S}_{ij},\\
	\hat{U}_{ijk}&=U_{ijk}\mathbbm{1}\left\{|U_{ijk}| \leq \sqrt{n}/(\log p)^4 \right\}-\mathrm{E}\left(U_{ijk}\mathbbm{1}\left\{|U_{ijk}| \leq \sqrt{n}/(\log p)^4 \right\}\right),\\
	\widetilde{U}_{ijk}&=U_{ijk}-\hat{U}_{ijk}.	
	\end{split}					
\end{align*}
 We have
\begin{align*}
    \begin{split}
        &\mathrm{P}_{jk} = \mathrm{P}\left(|U_{jk}| \geq t,|U_{j}| \geq \alpha,|U_{k}| \geq \alpha\right)\\
        & = 	\mathrm{P}\left(\left|\sum\limits_{i=1}^nU_{ijk}\right| \geq t\sqrt{n},\left|\sum\limits_{i=1}^n \left(S_{ij}+ \mathrm{E}(U_{ij})\right)\right| \geq \alpha\sqrt{n}, \left|\sum\limits_{i=1}^n (S_{ik}+ \mathrm{E}(U_{ik}))\right| \geq \alpha\sqrt{n}\right)\\
        &\leq
        \mathrm{P}\left(\left|\sum\limits_{i=1}^nU_{ijk}\right| \geq t\sqrt{n},\left|\sum\limits_{i=1}^n (\hat{S}_{ij}+ \mathrm{E}(U_{ij}))\right| \geq \alpha\sqrt{n} -\sqrt{n}/(\log p)^2, \left|\sum\limits_{i=1}^n (\hat{S}_{ik}+ \mathrm{E}(U_{ik}))\right| \geq \alpha\sqrt{n} -\sqrt{n}/(\log p)^2\right)\\
        &+  
        \mathrm{P}\left(\left|\sum\limits_{i=1}^nU_{ijk}\right| \geq t\sqrt{n},\left|\sum\limits_{i=1}^n \tilde{S}_{ij}\right| \geq \sqrt{n}/(\log p)^2\right)
        +
        \mathrm{P}\left(\left|\sum\limits_{i=1}^nU_{ijk}\right| \geq t\sqrt{n},\left|\sum\limits_{i=1}^n \tilde{S}_{ik}\right| \geq \sqrt{n}/(\log p)^2\right)
        \\
        &\leq \mathrm{P}\left(\left|\sum\limits_{i=1}^n\hat{U}_{ijk}\right| \geq t\sqrt{n} -\frac{\sqrt{n}}{(\log p)^2} ,\left|\sum\limits_{i=1}^n (\hat{S}_{ij}+ \mathrm{E}(U_{ij}))\right| \geq \alpha\sqrt{n} -\frac{\sqrt{n}}{(\log p)^2}, \left|\sum\limits_{i=1}^n (\hat{S}_{ik}+ \mathrm{E}(U_{ik}))\right| \geq \alpha\sqrt{n} -\frac{\sqrt{n}}{(\log p)^2}\right)\\
        &+  \mathrm{P}\left(\left|\sum\limits_{i=1}^n \tilde{S}_{ij}\right| \geq \sqrt{n}/(\log p)^2\right)
        +
        \mathrm{P}\left(\left|\sum\limits_{i=1}^n \tilde{S}_{ik}\right| \geq \sqrt{n}/(\log p)^2\right)+ \mathrm{P}\left(\left|\sum\limits_{i=1}^n\tilde{U}_{ijk}\right| \geq \sqrt{n}/(\log p)^2\right).
    \end{split}
\end{align*}
Since $\mathrm{E}(U_{ilk})=0$ under $H_0$, we have
\begin{align} 
n|\mathrm{E}(U_{ilk}\mathbbm{1}\left\{|U_{ijk}| \leq \sqrt{n}/(\log p)^4 \right\})|&=n|\mathrm{E}(U_{ilk}\mathbbm{1}\left\{|U_{ijk}| > \sqrt{n}/(\log p)^4 \right\})|\nonumber\\
&\leq n\{\mathrm{E}(U^2_{ilk})\}^{1/2}\{\mathrm{P}(|U_{ijk}| > \sqrt{n}/(\log p)^4\}^{1/2}\nonumber\\
&=o(\sqrt{n}/(\log p)^2),\label{eq_proof_lemma4_1}
\end{align}
where the last step holds by the Markov inequality and the condition $\mathrm{E}|U_{ijk}|^{2r+2+\epsilon}$ is bounded. Thus, we can show that
\begin{align*} 
&\mathrm{P}\left(\left|\sum\limits_{i=1}^n\widetilde{U}_{ijk}\right| \geq \sqrt{n}/(\log p)^2\right)\\
&\leq \mathrm{P}\left(\left|\sum\limits_{i=1}^n U_{ijk}\mathbbm{1}\left\{|U_{ijk}| > \sqrt{n}/(\log p)^4 \right\}\right| \geq \sqrt{n}/(\log p)^2-n\Big|\mathrm{E}(U_{ilk}\mathbbm{1}\left\{|U_{ijk}| \leq \sqrt{n}/(\log p)^4 \right\})\Big|\right)\\
&\leq \mathrm{P}\left(\left|\sum\limits_{i=1}^n U_{ijk}\mathbbm{1}\left\{|U_{ijk}| > \sqrt{n}/(\log p)^4 \right\}\right| \geq c\sqrt{n}/(\log p)^2\right),
\end{align*}
for some small constant $c>0$, where the last step is from (\ref{eq_proof_lemma4_1}). Note that the event in the above probability implies there exists at least some $i$ from $1,...,n$, such that $|U_{ijk}| \geq \sqrt{n}/(\log p)^4$. From the union bound, we obtain that 
$$
\mathrm{P}\left(\left|\sum\limits_{i=1}^n\widetilde{U}_{ijk}\right| \geq \sqrt{n}/(\log p)^2\right) \leq nP(|U_{ijk}| \geq \sqrt{n}/(\log p)^4) \leq C(\log p)^{-3/2}G(t),
$$
uniformly over $0\leq t\leq \sqrt{2\log p}$, where again the last step holds by the Markov inequality and the condition $\mathrm{E}|U_{ijk}|^{2r+2+\epsilon}$ is bounded. Similarly, we get
\begin{align*} 
     \mathrm{P}\left(\left|\sum\limits_{i=1}^n\tilde{S}_{ij}\right| \geq \sqrt{n}/(\log p)^2\right)&\leq C(\log p)^{-3/2}G(t).				
\end{align*}
Hence it follows that
\begin{equation*}
    \begin{split}
\mathrm{P}_{jk}\leq &\mathrm{P}\Big(\left|\sum\limits_{i=1}^n\hat{U}_{ijk}\right| \geq t\sqrt{n} -\frac{\sqrt{n}}{(\log p)^2} ,\left|\sum\limits_{i=1}^n (\hat{S}_{ij}+ \mathrm{E}(U_{ij}))\right| \geq \alpha\sqrt{n} -\frac{\sqrt{n}}{(\log p)^2},\\
&~~~~~\left|\sum\limits_{i=1}^n (\hat{S}_{ik}+ \mathrm{E}(U_{ik}))\right| \geq \alpha\sqrt{n} -\frac{\sqrt{n}}{(\log p)^2}\Big)+ C(\log p)^{-3/2}G(t).
    \end{split}
\end{equation*}
Similarly,
\begin{equation*}
    \begin{split}
\mathrm{P}_{jk}  \geq &\mathrm{P}\Big(\left|\sum\limits_{i=1}^n\hat{U}_{ijk}\right| \geq t\sqrt{n} +\frac{\sqrt{n}}{(\log p)^2} ,\left|\sum\limits_{i=1}^n (\hat{S}_{ij}+ \mathrm{E}(U_{ij}))\right| \geq \alpha\sqrt{n} +\frac{\sqrt{n}}{(\log p)^2},\\
&~~~~~\left|\sum\limits_{i=1}^n (\hat{S}_{ik}+ \mathrm{E}(U_{ik}))\right| \geq \alpha\sqrt{n} +\frac{\sqrt{n}}{(\log p)^2}\Big)- C(\log p)^{-3/2}G(t).
    \end{split}
\end{equation*}
By Theorem1 in \cite{zaitsev1987gaussian}, we have
\begin{equation*}
\begin{split}
	&\mathrm{P}\left(\left|\sum\limits_{i=1}^n\hat{U}_{ijk}\right| \geq t\sqrt{n} -\frac{\sqrt{n}}{(\log p)^2} ,\left|\sum\limits_{i=1}^n (\hat{S}_{ij}+ \mathrm{E}(U_{ij}))\right| \geq \alpha\sqrt{n} -\frac{\sqrt{n}}{(\log p)^2}, \left|\sum\limits_{i=1}^n (\hat{S}_{ik}+ \mathrm{E}(U_{ik}))\right| \geq \alpha\sqrt{n} -\frac{\sqrt{n}}{(\log p)^2}\right)\\
	 &\leq 
	\mathrm{P}\left(|\tilde{x}_1|\geq t-2/(\log p)^2,
	|\tilde{x}_2+\sqrt{n}\mathrm{E}(U_{1j})|\geq \alpha-2/(\log p)^2, |\tilde{x}_3+\sqrt{n}\mathrm{E}(U_{1k})|\geq \alpha-2/(\log p)^2 \right)\\
	&~~~~+C_1\exp(-C_2(\log p)^2),\\	
\end{split}				
\end{equation*}
where $\tilde{\textbf{x}}=(\tilde{x}_1,\tilde{x}_2,\tilde{x}_3)^T$ is a multivariate normal vector with mean zero and covariance matrix \ \ $\tilde{\Sigma}=\mathrm{\textbf{cov}}\left(\sum_{i=1}^n\widehat{U}_{ijk}/\sqrt{n}, \sum_{i=1}^n\hat{S}_{ij}/\sqrt{n},\sum_{i=1}^n\hat{S}_{ik}/\sqrt{n}\right)$. It's easy to show the similar result for the other direction.
Therefore,
\begin{align}
\mathrm{P}_{jk} &- \mathrm{P}\left(|\tilde{x}_1|\geq t-2/(\log p)^2,
	|\tilde{x}_2 + \sqrt{n}\mathrm{E}(U_{1j})|\geq \alpha-2/(\log p)^2, |\tilde{x}_3+\sqrt{n} \mathrm{E}(U_{1k})|\geq \alpha-2/(\log p)^2 \right)\nonumber\\
	&\leq C(\log p)^{-3/2}G(t),\label{dif1}
\end{align}
and
\begin{align*}
        \mathrm{P}_{jk} &- \mathrm{P}\left(|\tilde{x}_1|\geq t+2/(\log p)^2,
	|\tilde{x}_2+\sqrt{n}\mathrm{E}(U_{1j})|\geq \alpha+2/(\log p)^2, |\tilde{x}_3+\sqrt{n}\mathrm{E}(U_{1k})|\geq \alpha+2/(\log p)^2 \right)\nonumber\\
	&\geq -C(\log p)^{-3/2}G(t).
\end{align*}
Define
\begin{equation*}
    \Sigma = \mathrm{\textbf{cov}} \left(\sum\limits_{i=1}^nU_{ijk}/\sqrt{n},\sum\limits_{i=1}^nU_{ij}/\sqrt{n},\sum\limits_{i=1}^nU_{ik}/\sqrt{n}\right).
\end{equation*}
By condition $\mathrm{E}|S_{ij}|^{2r+2+\epsilon}<C$ and $\mathrm{E}|U_{ijk}|^{2r+2+\epsilon}<C$, we have $\left\|\Sigma- \tilde{\Sigma}\right\|_{\infty} \leq C(\log p)^{4r+4+2\epsilon}/n^{r+1+\epsilon/2}$. Therefore $\left\|\tilde{\Sigma}- \mathrm{\textbf{I}}\right\|_{\infty} \leq C(\log p)^{-2-\gamma}$. Since  $\tilde{\textbf{x}}$ is multivariat normal, it's easy to show that
\begin{align}
        &\mathrm{P}\left(|\tilde{x}_1|\geq t-2/(\log p)^2,
	|\tilde{x}_2 + \sqrt{n}\mathrm{E}(U_{1j})|\geq \alpha-2/(\log p)^2, |\tilde{x}_3+ \sqrt{n}\mathrm{E}(U_{1k})|\geq \alpha-2/(\log p)^2 \right)\nonumber\\
	&\leq (1+C(\log p)^{-1-\gamma})G(t)G_{j}(\alpha)G_{k}(\alpha),\label{dif2}    
\end{align}
uniformly over $0\leq t\leq \sqrt{2\log p}$. Combine (\ref{dif1}) and (\ref{dif2}) we have
\[
    \mathrm{P}_{jk} \leq (1+C(\log p)^{-1-\gamma_1})G(t)G_{j}(\alpha)G_{k}(\alpha).
\]
Similarly, for the other direction we can show that
\[
  \mathrm{P}_{jk} \geq (1-C(\log p)^{-1-\gamma_1})G(t)G_{j}(\alpha)G_{k}(\alpha).
\]
Hence, we obtain the first result (\ref{eq_lemma4_1}),
\[
 \sup\limits_{0 \leq t \leq  \sqrt{2\log p}}\left|\frac{\mathrm{P}_{jk}}{G(t)G_{j}(\alpha)G_k(\alpha)}-1\right| \leq C (\log p)^{-1-\gamma_1}.
\]
To show the second result  (\ref{eq_lemma4_2}), we can follow the similar proof for $P_{jk}$ to get  
\begin{equation*}
     \begin{split}
         &\mathrm{P}_{jkml} \leq (1+C(\log p)^{-1-\gamma_1})G^2(t)G_{j}(\alpha)G_{k}(\alpha)G_{m}(\alpha)G_{l}(\alpha),\\
         &\mathrm{P}_{jkml} \geq (1-C(\log p)^{-1-\gamma_1})G^2(t)G_{j}(\alpha)G_{k}(\alpha)G_{m}(\alpha)G_{l}(\alpha),
     \end{split}
\end{equation*}
which yields the desired bound.

\subsection{Proof of Lemma \ref{EJ}}
For any $1 \leq j < k \leq p$, by \ref{aspA6} we have
\[
\tilde{\emph{H}}_{01} = 
     \{(j,k)\in \emph{H}_0: 
        |\mathrm{\textbf{cov}}(U_{ij},U_{ik})| \leq C(\log p)^{-2-\gamma} \},
\]
and denote $\tilde{\emph{H}}_{02}= \emph{H}_{0} \setminus  \tilde{\emph{H}}_{01}$. 
Let  $\gamma_1 = \min\{\gamma, 1/2\}$, by lemma 6.1 in \cite{liu2013gaussian} and the proof of Lemma~\ref{lemma4}, we have 
 \begin{equation}
     \max\limits_{(j,k)\in \tilde{\emph{H}}_{01}}\left|\frac{\mathrm{P}(|U_{j}| \geq \alpha, |U_{k}| \geq \alpha)}{G_j(\alpha)G_k(\alpha)}-1\right| \leq C(\log p)^{-1-\gamma_1}.
 \end{equation}
 Then we have
 \begin{align*}
    \begin{split}
          \mathrm{E}(J) &= 
  \sum\limits_{(j,k)\in\tilde{\emph{H}}_{02}}\mathrm{P}\left\{|U_{j}| \geq \alpha,|U_{k}| \geq \alpha\right\} 
  +\sum\limits_{(j,k)\in\tilde{\emph{H}}_{01}}\mathrm{P}\left\{|U_{j}| \geq \alpha,|U_{k}| \geq \alpha\right\} \\
  & \geq 
  (1-C(\log p)^{-1-\gamma_1}) \sum\limits_{(j,k)\in \tilde{\emph{H}}_{01}}G_j(\alpha)G_k(\alpha)=\Omega(p^\xi),
    \end{split} 
\end{align*}
where the last step is from Assumption \ref{aspA6}.

\subsection{Proof of Lemma \ref{lemma6}}
Let 
\begin{equation*}
I(t)= \left|\frac{\sum_{(j,k)\in \emph{H}_0}\left(\mathbbm{1}\left\{|U_{jk}| \geq t,|U_{j}| \geq \alpha,|U_{k}| \geq \alpha\right\}-\mathrm{P}\left\{|U_{jk}| \geq t,|U_{j}| \geq \alpha,|U_{k}| \geq \alpha\right\}\right)}{\mathrm{E}(J)G(t)}\right|.
\end{equation*}
Define $\emph{H}^*_{01} = \big\{\{(j,k),(m,l)\}:(j,k),(m,l) \in \emph{H}_0,\  j = m \ or \  k = l \big\}$, $\emph{H}^*_{02} = \big\{\{(j,k),(m,l)\}: (j,k),(m,l) \in \emph{H}_0,\   j \neq m \ and \  k \neq l \big\} $, then we have $|\emph{H}^*_{01}| \asymp p^3$, $|\emph{H}^*_{02}|\asymp p^4$. Denote 
\begin{equation*}
\begin{split}
     \mathrm{E}_i &=  \sum_{\{(j,k),(m,l)\}\in \emph{H}^*_{0i}} \Big(\frac{\mathrm{P}\left(|U_{jk}| \geq t,|U_{j}| \geq \alpha,|U_{k}| \geq \alpha, |U_{ml}| \geq t,|U_{m}| \geq \alpha,|U_{l}| \geq \alpha\right)}{\mathrm{E}^2(J)G^2(t)}  \\
     &\quad\quad\quad - \frac{\mathrm{P}\left(|U_{jk}| \geq t,|U_{j}| \geq \alpha,|U_{k}| \geq \alpha\right)\mathrm{P}\left(|U_{ml}| \geq t,|U_{m}| \geq \alpha,|U_{l}| \geq \alpha\right)}{\mathrm{E}^2(J)G^2(t)}\Big) \\
     & = \sum_{\{(j,k),(m,l)\}\in \emph{H}^*_{0i}} \frac{\mathrm{P}_{jkml}-\mathrm{P}_{jk}\mathrm{P}_{ml}}{\mathrm{E}^2(J)G^2(t)},
\end{split}
\end{equation*}
where $\mathrm{P}_{jkml}$ and $\mathrm{P}_{jk}$ are defined in Lemma \ref{lemma4}. Then we can write
\[
	\begin{split}
	\mathrm{E}I^2(t) = \mathrm{E}_1 + \mathrm{E}_2.
	\end{split}					
\]
 Note that $|\mathrm{Cov(U_{ijk}, U_{iml})}| \leq \delta$,
$U_{jk}= n^{-1/2}\sum\limits_{i=1}^n U_{ijk}$, and $\big\{U_{ijk}$: $0 \leq i \leq n \big\}$ are i.i.d. random variables with mean zero. By lemma 6.2 in \cite{liu2013gaussian}, for some constant $C_1$ and  any $\{(j,k),(m,l)\}\in \emph{H}^*_{01}$, we have
\begin{equation}
\label{e11}
\begin{split}
     \mathrm{P}_{jkml} 
     \leq \mathrm{P}\left( |U_{jk}| \geq t, |U_{ml}| \geq t\right) \leq \frac{C_1}{(t+1)^2\exp(\frac{t^2}{1+\delta})}.
\end{split}
\end{equation}
By Lemma~\ref{EJ}, we have $\mathrm{E}(J)= \Omega(p^{\xi})$, hence for any $0 \leq t \leq \sqrt{2\log p}$,
\begin{equation}
\label{E1}
E_1\leq |H^*_{01}|\frac{\mathrm{P}_{jkml}}{\mathrm{E}^2(J)G^2(t)} = O(1/p^{2\xi-3-2\frac{\delta}{1+\delta}}).
\end{equation}
For $E_2$, we first split $\emph{H}^*_{02}$ into two subsets. Define 
\[
\emph{H}^*_{021} = \big\{\{(j,k),(m,l)\}: (j,k),(m,l)\in \emph{H}^*_{02}, \left\|\mathrm{\textbf{cov}}(\tilde{U}_{ijkml})- \mathrm{\textbf{I}}\right\|_{\infty} \leq C(\log p)^{-2-\gamma}\big\},
\]
and $\emph{H}^*_{022} = \emph{H}^*_{02} \setminus \emph{H}^*_{021}$.

Consider $\emph{H}^*_{021}$ first.
By Lemma~\ref{lemma4} we have for any $\{(j,k)(m,l)\}\in \emph{H}^*_{021}$,
\[
\left|\mathrm{P}_{jkml}-\mathrm{P}_{jk}\mathrm{P}_{ml}\right| \leq 
C(\log p)^{-1-\gamma_1}G_j(\alpha)G_k(\alpha)G_m(\alpha)G_l(\alpha)G^2(t).
\]
Note that for some constant $C_1$, we can show that
\begin{align}\label{E^2(J)}
    \begin{split}
        \mathrm{E}^2(J) &= \Big(\sum_{(j,k)\in H_0}\mathrm{P}\left\{|U_{j}| \geq \alpha,|U_{k}|\geq \alpha\right\}\Big)^2 \\
        &\geq 
\sum\limits_{\{(j,k),(m,l)\}\in \emph{H}^*_{021}}\mathrm{P}\left\{|U_{j}| \geq \alpha,|U_{k}| \geq \alpha\right\}\mathrm{P}\left\{|U_{m}| \geq \alpha,|U_{l}| \geq \alpha\right\}\\
&\geq(1-C_1(\log p)^{-1-\gamma_1})\sum\limits_{\{(j,k),(m,l)\}\in \emph{H}^*_{021}}G_j(\alpha)G_k(\alpha)G_m(\alpha)G_l(\alpha).
    \end{split}.
\end{align}
Therefore,
\[
    \left|\sum_{\{(j,k),(m,l)\}\in \emph{H}^*_{021}} \frac{\mathrm{P}_{jkml}-\mathrm{P}_{jk}\mathrm{P}_{ml}}{\mathrm{E}^2(J)G^2(t)}\right| \leq C_2(\log p)^{-1-\gamma_1}.
\]
For $H^*_{022}$, by (\ref{asp2}) we have $|H^*_{022}| = O(p^{4-\kappa})$. By the same proof for $H^*_{01}$, for any $0\leq t \leq \sqrt{2\log p}$, we have
\[
  \left|\sum_{\{(j,k),(m,l)\}\in \emph{H}^*_{022}} \frac{\mathrm{P}_{jkml}-\mathrm{P}_{jk}\mathrm{P}_{ml}}{\mathrm{E}^2(J)G^2(t)}\right| \leq 
  O(1/p^{2\xi-4+\kappa-2\frac{\delta}{1+\delta}}).
\]
Hence
\begin{equation}
\label{E2}
    E_2 \leq C(\log p)^{-1-\gamma_1}.
\end{equation}
Combining (\ref{E1}) and (\ref{E2}) we get for any $0\leq t \leq \sqrt{2\log p}$,
\begin{equation*}
    	\mathrm{E}I^2(t) \leq C(\log p)^{-1-\gamma_1}.
\end{equation*}
Note that $m\sim  \sqrt{\log p}/z_p$, and $z_p = (\log p)^{-\frac{1+\gamma_1}{2}}= o(1/\sqrt{\log p})$, then
for any $\epsilon >0$,
\begin{equation*}
    \mathrm{P}(\max\limits_{0 \leq r \leq m} J_{r,1} \geq \epsilon) \leq \sum\limits_{r=0}^m \mathrm{P}(J_{r,1} \geq \epsilon) = \sum\limits_{r=0}^m\mathrm{P}(I(t_r) \geq \epsilon) \leq \sum\limits_{r=0}^m\frac{\mathrm{E}[I^2(t_r)]}{\epsilon^2} = O\left((\log p)^{-\gamma_1/2}\right).
\end{equation*}
This finishes the proof.

\subsection{Proof of Lemma \ref{lemma7}}
Similar as the proof of Lemma~\ref{lemma6}, for any $0 \leq t \leq \sqrt{2\log p}$, denote 
\begin{equation*}
     I(t)=\left|\frac{\sum_{(j,k)\in \emph{H}_0}(\mathrm{P}\left\{|U_{jk}| \geq t,|U_{j}| \geq \alpha,|U_{k}| \geq \alpha\right\}-G(t) \mathbbm{1} \left\{|U_j| \geq \alpha,|U_k| \geq \alpha \right\})}{\mathrm{E}(J)G(t)}\right|.
\end{equation*}
Let $\mathrm{P}_{jk} = \mathrm{P}\left(|U_{jk}| \geq t,|U_{j}| \geq \alpha,|U_{k}| \geq \alpha\right)$, $\mathrm{Q}_{jk} = \mathrm{P}\left(|U_{j}| \geq \alpha,|U_{k}| \geq \alpha\right) $ and 
$$
\mathrm{Q}_{jkml}=\mathrm{P}\left(|U_{j}| \geq \alpha,|U_{k}| \geq \alpha,|U_{m}| \geq \alpha,|U_{l}| \geq \alpha\right).$$ 
Denote $\emph{H}^*_{01} = \big\{\{(j,k),(m,l)\}:(j,k),(m,l) \in \emph{H}_0,\  j = m \ or \  k = l \big\}$, $\emph{H}^*_{02} = \big\{\{(j,k),(m,l)\}: (j,k),(m,l) \in \emph{H}_0,\   j \neq m \ and \  k \neq l \big\} $, then we have $|\emph{H}^*_{01}| \asymp p^3$, $|\emph{H}^*_{02}|\asymp p^4$. We have
\begin{equation*}
\begin{split}
    \mathrm{E}I^2(t) &= \frac{\sum_{(j,k)(m,l)\in \emph{H}_0}\Big\{\mathrm{P}_{jk}\mathrm{P}_{ml}- G(t)\mathrm{P}_{ml}\mathrm{Q}_{jk}
    - G(t)\mathrm{P}_{jk}\mathrm{Q}_{ml}
    + G^2(t)\mathrm{Q}_{jkml}\Big\}}{\mathrm{E}^2(J)G^2(t)}\\
    &\coloneqq \mathrm{E}_1 + \mathrm{E}_2,
    \end{split}
\end{equation*}
where 
\begin{equation}\label{E^I}
    \mathrm{E}_i=\frac{\sum_{(j,k)(m,l)\in \emph{H}^*_{0i}}\Big\{\mathrm{P}_{jk}\mathrm{P}_{ml}- G(t)\mathrm{P}_{ml}\mathrm{Q}_{jk}
    - G(t)\mathrm{P}_{jk}\mathrm{Q}_{ml}
    + G^2(t)\mathrm{Q}_{jkml}\Big\}}{\mathrm{E}^2(J)G^2(t)}.
\end{equation}
First we consider $\mathrm{E}_1$.  By lemma 6.1 in \cite{liu2013gaussian} we have
 \begin{equation} \label{ujk}
   \sup\limits_{0 \leq t \leq  \sqrt{2\log p}}\left|\frac{\mathrm{P}(|U_{jk}| \geq t)}{G(t)}-1\right| \leq C(\log p)^{-1-\gamma_1}.
 \end{equation}
 Therefore, 
\begin{equation}
\label{Pjk}
     \mathrm{P}_{jk} \leq 
 \mathrm{P}\left(|U_{jk}| \geq t\right)
 \leq G(t)(1+C(\log p)^{-1-\gamma})
\end{equation}
 and
 \[
 \mathrm{P}_{jk}\mathrm{P}_{ml} \leq 
  \mathrm{P}\left(|U_{jk}| \geq t\right)\mathrm{P}\left(|U_{ml}| \geq t\right)
  \leq G^2(t)(1+C(\log p)^{-1-\gamma_1}).
 \]
Note that Lemma~\ref{EJ} gives $\mathrm{E}(J)= \Omega(p^{\xi})$, then  given the two inequalities above we have 
\begin{equation}
\label{E_1}
     \mathrm{E}_1 = O(1/p^{2\xi-3}).
\end{equation}
Next we consider $\mathrm{E}_2$,  we further split $\emph{H}^*_{02}$ into
\[
\emph{H}^*_{021} = \big\{\{(j,k),(m,l)\}: (j,k),(m,l)\in \emph{H}^*_{02}, \left\|\mathrm{\textbf{cov}}(\tilde{U}_{ijkml})- \mathrm{\textbf{I}}\right\|_{\infty} \leq C(\log p)^{-2-\gamma}\big\},
\]
and $\emph{H}^*_{022} = \emph{H}^*_{02} \setminus \emph{H}^*_{021}$. Write $\mathrm{E}_2 \coloneqq \mathrm{E}_{21}+\mathrm{E}_{22}$, where $\mathrm{E}_{21}$ is for
$\{(j,k),(m,l)\}\in H_{021}^*$ and $\mathrm{E}_{22}$ is for $\{(j,k),(m,l)\}\in H_{022}^*$.
 For $H_{021}^*$, by Lemma~\ref{lemma4} we have
\begin{equation*}
     \sup\limits_{0 \leq t \leq  \sqrt{2\log p}}\left|\frac{\mathrm{P}_{jk}}{G(t)G_j(\alpha)G_k(\alpha)}-1\right| \leq C (\log p)^{-1-\gamma_1},
\end{equation*}
and we can similarly show that
\begin{equation*}
    \sup\limits_{0 \leq \alpha \leq  \sqrt{2\log p}}\left|\frac{\mathrm{Q}_{jk}}{G_j(\alpha)G_k(\alpha)}-1\right| \leq C (\log p)^{-1-\gamma_1},
\end{equation*}
\begin{equation*}
    \sup\limits_{0 \leq \alpha \leq  \sqrt{2\log p}}\left|\frac{\mathrm{Q}_{jkml}}{G_j(\alpha)G_k(\alpha)G_m(\alpha)G_l(\alpha)}-1\right| \leq C (\log p)^{-1-\gamma_1},
\end{equation*}
where $\gamma_1 = \min \{1/2, \gamma\}$. From (\ref{E^2(J)}) we have 
\[
\mathrm{E}^2(J) \geq (1-C_1(\log p)^{-1-\gamma_1})\sum\limits_{\{(j,k),(m,l)\}\in \emph{H}^*_{021}}G_j(\alpha)G_k(\alpha)G_m(\alpha)G_l(\alpha),
\]
then combining the inequalities above gives
\begin{equation}\label{E_21}
         \mathrm{E}_{21}\leq 
    C(\log p)^{-1-\gamma_1}.
\end{equation}
Under the condition of (\ref{asp2}) we have $\left|\emph{H}^*_{022}\right| = O(p^{4-\kappa})$.
(\ref{Pjk}) leads to 
\begin{equation}\label{E_22}
    \mathrm{E}_{22}\leq 2|\emph{H}^*_{02}|(1+o(1))/\mathrm{E}^2(J)=O(1/p^{\kappa-4+2\xi}).
\end{equation}
Combine (\ref{E_1}), (\ref{E_21}) and (\ref{E_22}) and by Markov's inequality, we finish the proof.

\subsection{Proof of Lemma \ref{lem_J}}
Denote $ U_{ijkml} = ( S_{ij}, S_{ik}, S_{im}, S_{il})\in\RR^4$, $\emph{H}_{01} = \big\{\{(j,k),(m,l)\}:(j,k),(m,l) \in \emph{H}_0,\  j = m \ or \  k = l \big\}$, $\emph{H}_{02} = \big\{\{(j,k),(m,l)\}: (j,k),(m,l) \in \emph{H}_0,\   j \neq m \ and \  k \neq l \big\} $, 
\[
\emph{H}_{021} = \big\{\{(j,k),(m,l)\}: (j,k),(m,l)\in \emph{H}_{02}, \left\|\mathrm{\textbf{cov}}({U}_{ijkml})- \mathrm{\textbf{I}}\right\|_{\infty} \leq C(\log p)^{-2-\gamma}\big\},
\]
and $\emph{H}_{022} = \emph{H}_{02} \setminus \emph{H}_{021}$. Then we have $|\emph{H}_{01}| \asymp p^3$ and by (\ref{asp2}) we also have $|\emph{H}_{021}|=\Omega(p^{4})$, $|\emph{H}_{022}|=O(p^{4-\kappa})$.
Write
\[
\frac{J}{\mathrm{E}(J)}=\frac{J-\mathrm{E}(J)}{\mathrm{E}(J)}+1.
\]
Note that 
\[
\frac{J-\mathrm{E}(J)}{\mathrm{E}(J)} = \frac{\sum_{(j,k), (m,l)\in \emph{H}_{0}}
 \Big\{I \left\{|U_{j}| \geq \alpha,|U_{k}| \geq \alpha \right\} - \mathrm{P}\left(|U_{j}| \geq \alpha,|U_{k}| \geq \alpha\right)\Big\}
}{\mathrm{E}(J)}.
\]
Denote $\mathrm{Q}_{jkml}=\mathrm{P}\left(|U_{j}| \geq \alpha,|U_{k}| \geq \alpha,|U_{m}| \geq \alpha,|U_{l}| \geq \alpha\right)$ and $\mathrm{Q}_{jk}=\mathrm{P}\left(|U_{j}| \geq \alpha,|U_{k}| \geq \alpha\right)$, then we have
\[
\mathrm{E}\left(\frac{J-\mathrm{E}(J)}{\mathrm{E}(J)}\right)^2
=\frac{\sum_{(j,k), (m,l)\in \emph{H}_{0}}(\mathrm{Q}_{jkml}-\mathrm{Q}_{jk}\mathrm{Q}_{ml})}{\mathrm{E}^2(J)}.
\]
Follow the same proof as Lemma~\ref{lemma6}, we have
\[
\sum\limits_{\{(j,k),(m,l)\}\in \emph{H}_{01}}(\mathrm{Q}_{jkml}-\mathrm{Q}_{jk}\mathrm{Q}_{ml})
= O(p^3),
\]
\[
\sum_{\{(j,k),(m,l)\}\in \emph{H}_{021}}(\mathrm{Q}_{jkml}-\mathrm{Q}_{jk}\mathrm{Q}_{ml})
\leq C(\log p)^{-1-\gamma_1}
\sum_{\{(j,k),(m,l)\}\in \emph{H}_{021}}
G_j(\alpha)G_k(\alpha)G_m(\alpha)G_l(\alpha),
\]
and 
\[
\sum\limits_{\{(j,k),(m,l)\}\in \emph{H}_{022}}(\mathrm{Q}_{jkml}-\mathrm{Q}_{jk}\mathrm{Q}_{ml})
= O(p^{4-\kappa}).
\]
Note that
\begin{align*}
    \begin{split}
        \mathrm{E}^2(J) &= \Big(\sum\limits_{(j,k)\in \emph{H}_0}\mathrm{P}\left\{|U_{j}| \geq \alpha,|U_{k}|\geq \alpha\right\}\Big)^2 \\
        &\geq 
\sum\limits_{\{(j,k),(m,l)\}\in \emph{H}_{021}}\mathrm{P}\left\{|U_{j}| \geq \alpha,|U_{k}| \geq \alpha\right\}\mathrm{P}\left\{|U_{m}| \geq \alpha,|U_{l}| \geq \alpha\right\}\\
&\geq (1-C_1(\log p)^{-1-\gamma_1})\sum\limits_{\{(j,k),(m,l)\}\in \emph{H}_{021}}G_j(\alpha)G_k(\alpha)G_m(\alpha)G_l(\alpha).
    \end{split}
\end{align*}
Then by Lemma~\ref{EJ}  we have
\begin{align*}
    \begin{split}
        \mathrm{E}\left(\frac{J-\mathrm{E}(J)}{\mathrm{E}(J)}\right)^2
&=O\left(\frac{p^3}{p^{2\xi}}+
(\log p)^{-1-\gamma_1}+ \frac{\sum_{\{(j,k),(m,l)\}\in \emph{H}_{022}}(\mathrm{Q}_{jkml}-\mathrm{Q}_{jk}\mathrm{Q}_{ml})}{\mathrm{E}^2(J)}\right)\\
&=O\left(\frac{1}{p^{2\xi-3}}+
(\log p)^{-1-\gamma_1}+ \frac{p^{4-\kappa}}{p^{2\xi}}\right)= o(1).
    \end{split}
\end{align*}
Therefore by Markov inequality we have
\[
\left|\frac{J}{\mathrm{E}(J)}-1\right|=o_p(1)
\]
and we finish the proof.

\subsection{Proof of Lemma \ref{lem_power}}
For simplicity, denote $(\Sigma_{jk}^*(\emph{\textbf{X}}_{ijk}^{\mathrm{s2}})^T)_{(4)} \coloneqq x_{ijk}$. From assumption \ref{asp_power}, we know that $x_{ijk}$ is bounded by $\tilde K$. Then $x_{ijk}\epsilon_{ijk}$ is Sub-Exponential with parameter $\lambda \tilde K$, and the following inequality holds
\begin{equation}\label{eq_lem_power_pf_1}
    \mathrm{P}\left\{\left|\frac{1}{n}\sum_{i=1}^{n} x_{ijk}\epsilon_{ijk}\right| \geq t\right\} \leq 2 \exp  \left[-\frac{n}{2}\left(\frac{t^{2}}{\lambda^{2}\tilde K^2} \wedge \frac{t}{\lambda \tilde K}\right)\right].
\end{equation}
Recall that
 \[
  U_{ijk} = \frac{\left(-\left[\mathrm{E}_{\beta_0^{jk}}(b^{\prime\prime}(\emph{\textbf{X}}_{ijk}^{\mathrm{s2}}\beta_0^{jk})(\emph{\textbf{X}}_{ijk}^{\mathrm{s2}})^T\emph{\textbf{X}}_{ijk}^{\mathrm{s2}})\right]^{-1}\left\{Y_{i}-b^{\prime}\left(\emph{\textbf{X}}_{ijk}^{\mathrm{s2}} \beta_0^{jk}\right)\right\} \cdot (\emph{\textbf{X}}_{ijk}^{\mathrm{s2}})^T+\beta_0^{jk}\right)_{(4)}}{\sqrt{\mathrm{\textbf{cov}}( u({\beta_0^{jk}},\emph{\textbf{X}}_{jk}^{\mathrm{s2}},Y))_{(4,4)}}} ,
 \]  
and
\[
\sqrt{n}\mathrm{E}(U_{ijk}) = \mathrm{E}(U_{jk})=\frac{\sqrt{n}(\beta_{0}^{jk})_{(4)}}{\sqrt{\mathrm{\textbf{cov}}( u({\beta_0^{jk}},\emph{\textbf{X}}_{jk}^{\mathrm{s2}},Y))_{(4,4)}}}.
\]
For any $c>0$ which is upper bounded by a constant and an arbitrary small $\epsilon>0$, using (\ref{two-side}) we obtain
\begin{equation}\label{eq_lem_power_pf_2}
\mathrm{P}(\sup\limits_{(j,k)\in H_1}|\hat{T}_{jk}| < c\sqrt{\log p})\leq \mathrm{P}(\sup\limits_{(j,k)\in H_1}|U_{jk}| < c\sqrt{\log p}+\frac{1}{\sqrt{\log p}})+\epsilon.
\end{equation}
By the triangle inequality and the standard union bound, we further have 
\begin{align*}
&\mathrm{P}(\sup\limits_{(j,k)\in H_1}|U_{jk}| < c\sqrt{\log p}+\frac{1}{\sqrt{\log p}})\\
&\leq \mathrm{P}(\sup\limits_{(j,k)\in H_1}|\sqrt{n}\mathrm{E}(U_{ijk})|-|U_{jk}-\sqrt{n}\mathrm{E}(U_{ijk})| < c\sqrt{\log p}+\frac{1}{\sqrt{\log p}}) \\
&\leq \sum\limits_{(j,k)\in H_1}\mathrm{P}(|U_{jk}-\sqrt{n}\mathrm{E}(U_{ijk})|>|\sqrt{n}\mathrm{E}(U_{ijk})|- c\sqrt{\log p}-\frac{1}{\sqrt{\log p}})\\
&\leq \sum\limits_{(j,k)\in H_1}\mathrm{P}\left(\left|\frac{1}{n}\sum\limits_{i=1}^nx_{ijk}\epsilon_{ijk}\right|> \delta\sqrt{ \frac{\log p}{n}} -\sqrt{\mathrm{\textbf{cov}}( u({\beta_0^{jk}},\emph{\textbf{X}}_{jk}^{\mathrm{s2}},Y))_{(4,4)}}(c\sqrt{\frac{\log p}{n}}+\sqrt{\frac{1}{n\log p}})\right)\\
&= O\Big( p^{2-\frac{\left(\delta-\sqrt{\mathrm{\textbf{cov}}( u({\beta_0^{jk}},\emph{\textbf{X}}_{jk}^{\mathrm{s2}},Y))_{(4,4)}}c\right)^2}{2\lambda^2\tilde K^2}}\Big),
\end{align*}
where we use (\ref{eq_signal}) and (\ref{eq_lem_power_pf_2}) in the last two lines. Finally, if $\delta-\sqrt{\mathrm{\textbf{cov}}( u({\beta_0^{jk}},\emph{\textbf{X}}_{jk}^{\mathrm{s2}},Y))_{(4,4)}}c \geq 2\lambda \tilde K+\zeta$ for some constant $\zeta>0$, then 
$$
\mathrm{P}(\sup\limits_{(j,k)\in H_1}|U_{jk}| < c\sqrt{\log p}+\frac{1}{\sqrt{\log p}})=o(1),
$$
and together with (\ref{eq_lem_power_pf_2}), we obtain the desired result.

\section{Additional Numerical Results}\label{app_numerical}

Furthermore, for other most connected genes, we queried GTEx (\cite{gtex2015genotype}), a database of  tissue-specific gene expression and regulation. The results of gene ZBTB16, NDUFB9 and BANP are shown in Figure~\ref{zbtb16}, \ref{ndufb9} and \ref{banp} respectively. All these results show that the genes identified by our method are expressed in bladder tissue, which supports our data analysis result.

\begin{figure}
    \hspace{-4em}
    \includegraphics[ scale=0.45
  ]{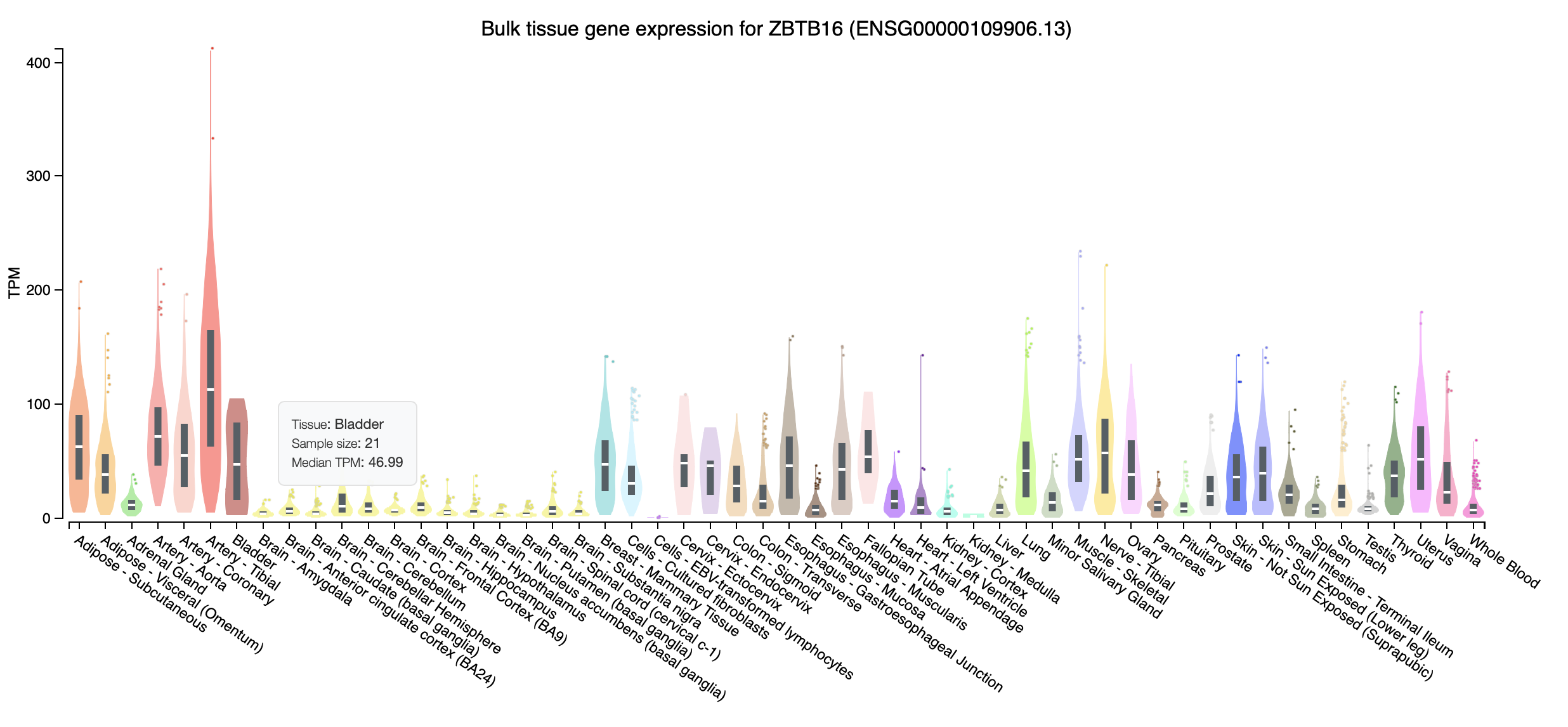}
    \caption{GTEx Portal: Gene expression for ZBTB16 (rs238930).}
    \label{zbtb16}
\end{figure}

\begin{figure}
    \hspace{-4em}
    \includegraphics[ scale=0.45
  ]{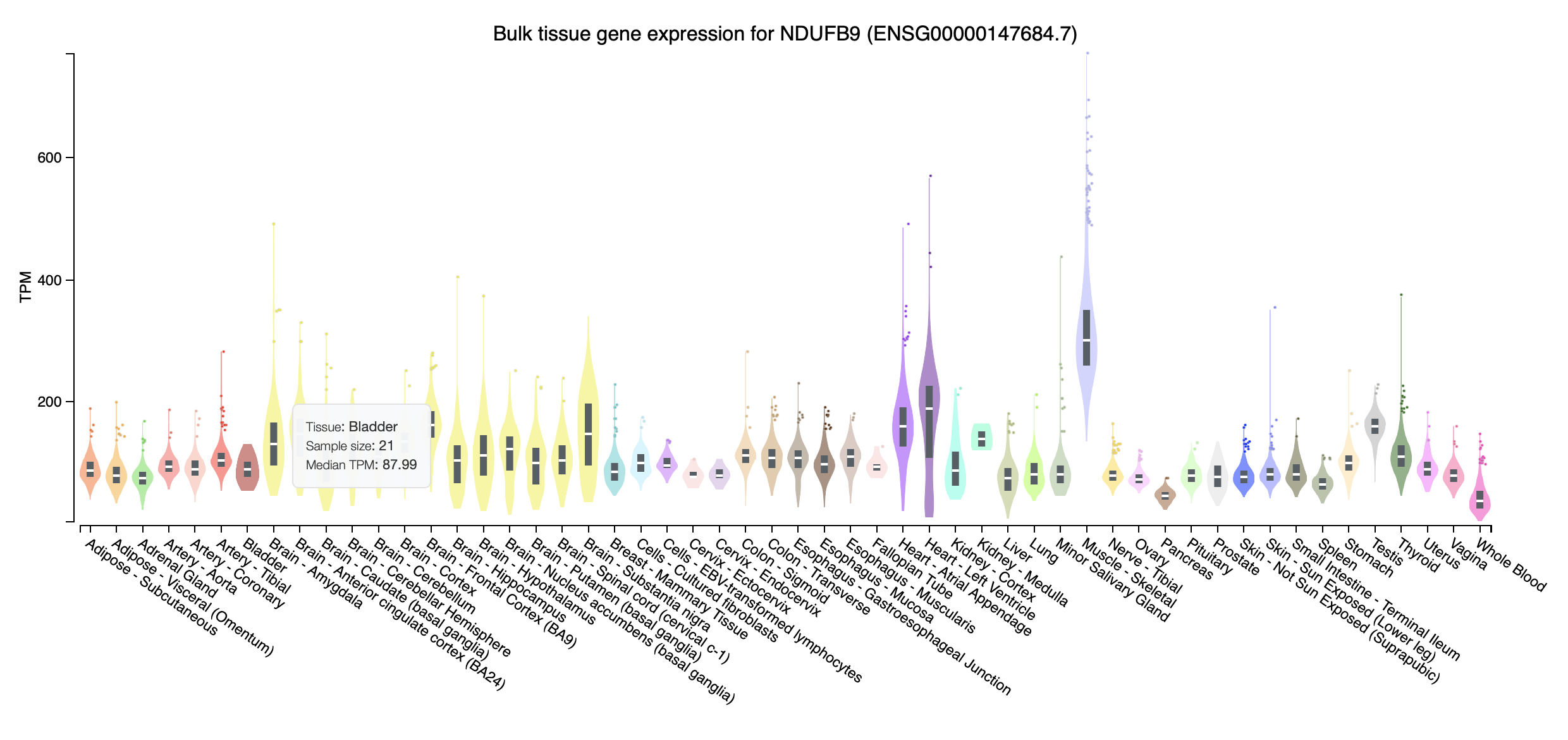}
    \caption{GTEx Portal: Gene expression for NDUFB9 (rs3829038).}
    \label{ndufb9}
\end{figure}

\begin{figure}
    \hspace{-4em}
    \includegraphics[ scale=0.45
  ]{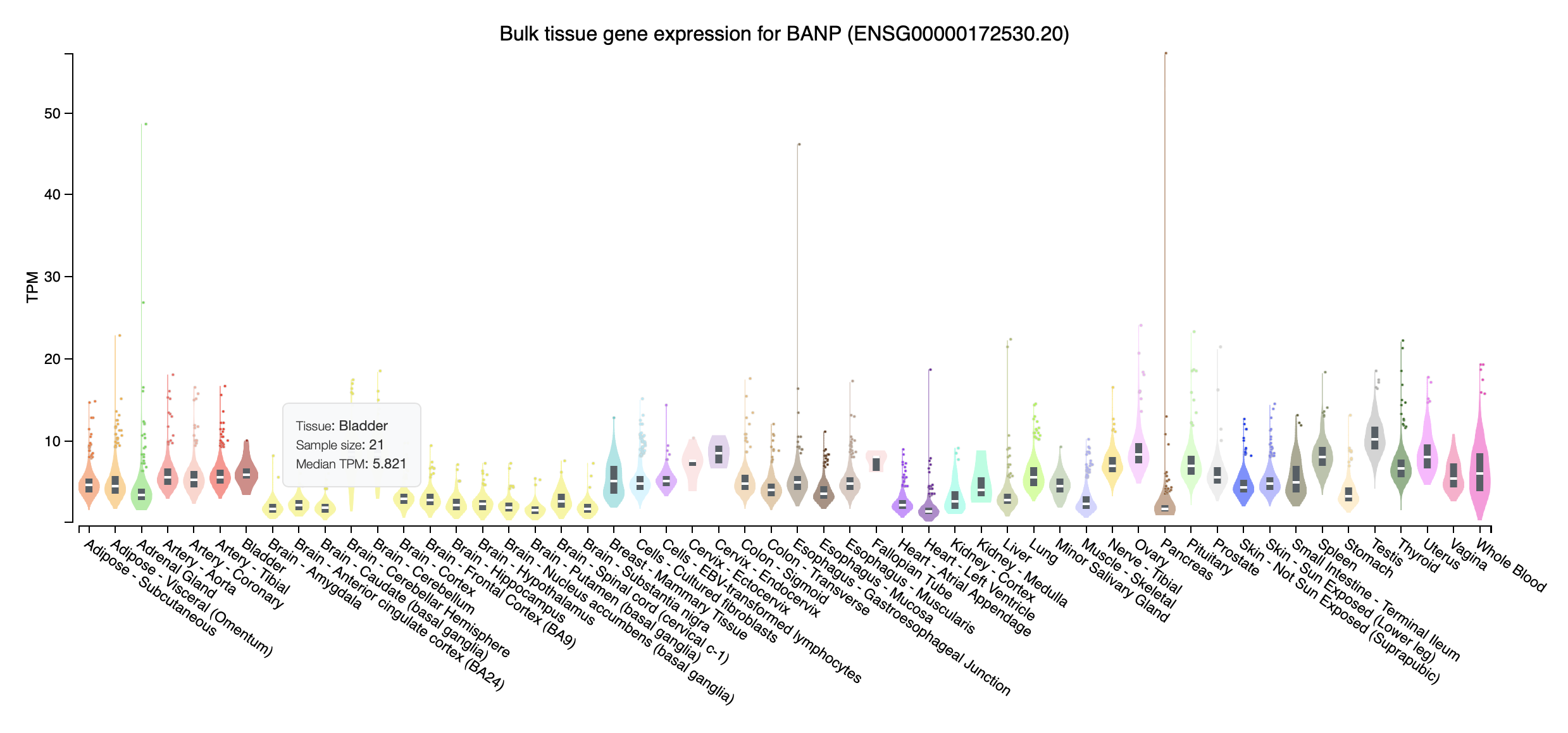}
    \caption{GTEx Portal: Gene expression for BANP (rs8063865).}
    \label{banp}
\end{figure}

\setlength{\bibsep}{0.85pt}
{
\bibliographystyle{ims}
\bibliography{ref}
}
\end{document}